\colorlet{MyRed}{DarkViolet}
\colorlet{MyGreen}{DarkGreen!80!Black}
\colorlet{MyBlue}{MediumBlue}
\newcommand{\EMAIL}[1]{\email{\href{mailto:#1}{#1}}}
\crefname{assumption}{Assumption}{Assumptions}
\crefname{claim}{Claim}{Claims}
\theoremstyle{plain}
\newtheorem{theorem}{Theorem}		
\newtheorem{proposition}{Proposition}		
\newtheorem{claim}{Claim}		
\newtheorem*{corollary*}{Corollary}		
\theoremstyle{definition}
\newtheorem{definition}{Definition}		
\newtheorem{example}{Example}		
\newtheorem*{definition*}{Definition}		
\newtheorem*{assumption*}{Assumptions}		
\newtheorem*{example*}{Example}		
\theoremstyle{remark}
\newtheorem*{remark*}{Remark}		
\newcommand{\ep}{\varepsilon}
\newcommand{\doititle}[1]{#1}
\begin{document}


\title
[Survival of dominated strategies]
{Survival of dominated strategies under imitation dynamics}

\author
[P.~Mertikopoulos]
{Panayotis Mertikopoulos$^{\ast}$}
\address{$^{\ast}$\,%
Univ. Grenoble Alpes, CNRS, Inria, Grenoble INP, LIG, 38000 Grenoble, France}
\EMAIL{panayotis.mertikopoulos@imag.fr}
\author
[Y.~Viossat]
{Yannick Viossat$^{\diamond,\sharp}$}
\address{$\diamond$\,%
CEREMADE, Université Paris Dauphine-PSL, Place du Maréchal de Lattre de Tassigny, F-75775 Paris, France
}
\address{$\sharp$\,%
Corresponding author}
\EMAIL{viossat@ceremade.dauphine.fr}

\subjclass[2020]{Primary: 91A22, 91A26.}

\keywords{%
Evolutionary game theory;
evolutionary game dynamics;
imitation;
dominated strategies;
survival;
rationality.}

\thanks{This article is dedicated to the memory of Bill Sandholm, who, had he lived, would have been a co-author of this work.
We thank him, Vianney Perchet, Jorge Pe\~{n}a, seminar audiences, and two anonymous reviewers for helpful comments.}

\begin{abstract}
The literature on evolutionary game theory suggests that pure strategies that are strictly dominated by other pure strategies always become extinct under imitative game dynamics,
but they can survive under innovative dynamics. As we explain, this is because innovative dynamics favour rare strategies while standard imitative dynamics do not. However, as we
also show, there are reasonable imitation protocols that favour rare or frequent strategies, thus allowing strictly dominated strategies to survive in large classes of imitation dynamics.
Dominated strategies can persist at nontrivial frequencies even when the level of domination is not small.
\end{abstract}

\maketitle
\allowdisplaybreaks		

\section{Introduction}
\label{sec:intro}

Many economic models assume that the agents they consider are rational.
This may be defended as a reference case or for tractability.
A more interesting justification is that, at least in tasks that they perform routinely, and for which they have enough time to experiment, even weakly rational agents should come to learn which strategies do well, and behave eventually \emph{as if} they were rational.
The same intuition applies to other evolutionary processes, such as natural selection or imitation of successful agents. But does evolution really wipe out irrational behaviors?

A simple way to tackle this question in a game-theoretic context is to study whether evolutionary game dynamics wipe out dominated strategies, in the sense that the frequency of these strategies goes to zero as time goes to infinity. This may be interpreted in several ways, depending on whether domination means weak or strict domination, whether the strategies considered are pure or mixed, and the dynamics deterministic or stochastic (see Viossat, 2015 \cite{11}, for a partial survey).
We focus here on what we see as the most basic question:
\emph{do pure strategies that are strictly dominated by other pure strategies become extinct under deterministic dynamics in continuous time?}

The answer of the literature is mixed.
Roughly speaking, evolutionary game dynamics may be classified as imitative or innovative.
In imitative dynamics, modeling imitation processes or pure selection (without mutation),
strategies that are initially absent from the population never appear.
The leading example is the replicator dynamics.
In innovative dynamics, strategies initially absent from the population may appear.
Examples include the best-reply dynamics (and smoothened versions of it), the Brown-von Neumann-Nash dynamics, the Smith dynamics, the projection dynamics, and others.

The literature shows that imitative dynamics (in the sense of Sandholm, 2010 \cite{10}) always eliminate pure strategies strictly dominated by other pure strategies (Akin, 1980 \cite{1}; Nachbar, 1990 \cite{7}), while innovative dynamics need not do so, with the notable exception of the best-reply dynamics.
Indeed, building on Berger and Hofbauer (2006) \cite{2}, Hofbauer and Sandholm (2011) \cite{5} show that for all dynamics satisfying four natural conditions called Innovation, Continuity, Nash Stationarity and Positive Correlation, there are games in which pure strategies strictly dominated by other pure strategies survive in relatively high proportion.
Moreover, their simulations show that, at least for some well-known dynamics, dominated strategies may survive at non-negligible frequencies even
when the difference in payoff between the dominated and dominating strategies is relatively important.
Thus, with respect to elimination of dominated strategies, there seems to be a sharp contrast between imitative and innovative processes.

This paper argues that this is not the case.
As we shall explain, the intuitive reason why innovative dynamics allow for survival of dominated strategies is that they give an edge to rare strategies.
Indeed, the \emph{Innovation} property of Hofbauer and Sandholm stipulates that if a strategy is an unplayed best-response to the current population state, then it should appear in the population: technically, the derivative of its frequency should be positive.
The per-capita growth rate of its frequency is then infinite.
Moreover, the \emph{Continuity} property requires that the dynamics depends smoothly on the payoffs of the game and the population state.
Taken together, these two properties imply that rare strategies that are almost-best replies to the current population state have a huge per-capita growth rate, potentially higher than strategies that have a slightly better payoff, but are more frequent.
In this sense, Hofbauer and Sandholm's dynamics favour rare strategies.
When a dominated strategy becomes rare, this advantage to rarity may compensate the fact of being dominated and allows it to survive.

By contrast, in imitative dynamics, the per-capita growth rates of pure strategies are always ordered as their payoffs,
irrespective of their frequencies in the population. But we feel that this is, in some sense, an artifact, a legacy of the history of evolutionary game theory.
Indeed, imitative dynamics arose as variants of the replicator dynamics, which originated as a natural selection model, and was only a posteriori reinterpreted as an imitation model. Ironically, their rationality properties come from their biological interpretation.
But if we consider a priori which dynamics could arise from an imitation protocol, then we arrive quite naturally at dynamics that provide an evolutionary advantage to rare strategies (or frequent strategies) in a sense that we will make clear. As in innovative dynamics, this advantage to rarity (or commonness) may offset the fact of being dominated, hence allowing dominated strategies to survive.

More precisely, imitative dynamics may be derived through a two-step imitation protocol. In the first step, an agent (henceforth, the \emph{revising agent}) meets another individual (the \emph{mentor}) uniformly at random.
In an infinite population, the probability that the mentor plays a given strategy is thus equal to the frequency of this strategy. In the second step, the revising agent decides to adopt the mentor's strategy or to
keep his own. The adoption rule depends on the dynamics but satisfies a monotonicity condition. Roughly, the probability of switching is larger if the revising agent's payoff is low, the mentor's payoff is large,
or both. This leads to dynamics that coincide with Nachbar's (1990) \cite{7} monotone dynamics: if strategy $i$ has a larger current payoff than strategy $j$, then its frequency has a larger per-capita growth-rate.
We thus suggest to call them \emph{monotone imitative dynamics}.\footnote{We thank an anonymous reviewer for suggesting this name.}

To motivate more general, non-monotone imitative dynamics, we consider revision protocols where the second step satisfies the standard monotonicity condition, but the first step is modified. Instead of always meeting a single other individual, a revising agent sometimes meets several. There are then many reasonable ways of choosing a mentor (or depending on the interpretation, a strategy to be potentially imitated). The probability of envisioning to switch to a given strategy may then be lower or higher than the frequency of this strategy, in a way that may systematically favour rare or frequent strategies. This leads to dynamics that are no longer monotone in the sense of Nachbar (1990)  \cite{7}, and under which dominated strategies may survive. Jorge Pen\~a brought to our attention that similar phenomena have been studied in the literature on the evolution of cooperation. In particular, a conformist bias may allow cooperation to survive in the prisoner's dilemma (e.g., Boyd and Richerson, 1988 \cite{3}; Heinrich and Boyd, 2001 \cite{4}; Pe\~na et al., 2009 \cite{8}; and references therein).

We first illustrate these ideas on dynamics derived from imitation protocols based on adoption of successful strategies or departure from less successful ones,
but not on direct comparison between the payoff of an agent's current strategy and of the strategy he envisions to adopt. With such protocols, agents keep switching
from a strategy to another even when all strategies earn the same payoff.
For this reason, an advantage to rare or frequent strategies always bites, and dominated strategies may survive even in games with only two strategies.
The argument is simple: if the two strategies are twins, that is, always earn the same payoffs, then in the case of an advantage to rare strategies, the shares of both strategies tend to become equal. Technically, the population state where both strategies are played with probability 1/2 is globally asymptotically stable. If we penalize sufficiently little one of the strategies, to make it dominated, most solutions still converge to one or several rest points in the neighborhood of this population state, in which the dominated strategy is played with positive probability.

Of course, these rest points cannot be Nash equilibria. This reveals that the dynamics we just mentioned do not satisfy the evolutionary folk theorem (see, e.g., Weibull, 1995 \cite{12}). They do not satisfy either the Positive Correlation condition, which stipulates that there is a positive correlation between the growth rates of strategies and their payoffs (or, equivalently, that against a constant environment, the average payoff in the population increases). Our main result is to show that survival of dominated strategies also occurs under dynamics that are derived from imitation protocols based on payoff comparison,
and that satisfy both the evolutionary folk theorem and an appropriate version of Positive Correlation. We show that this is the case as soon as they also satisfy the Continuity condition of Hofbauer and Sandholm and two additional conditions: Imitation, and Advantage to Rarity. The former requires that, except at Nash equilibria, a strategy which is currently played must be abandoned by some agents or imitated by others (or both). The latter assumes that if two strategies are twins, then the rarer one has a per-capita growth-rate that is no lower than the per-capita growth-rate of the more frequent one, and strictly higher in some precise circumstances. The Advantage to Rarity condition may be replaced by a similar Advantage to Frequency. We provide a number of imitation protocols leading to dynamics satisfying these assumptions.

Under these dynamics, if a solution converges to a rest point, this point must be a Nash equilibrium, hence put a zero weight on all strictly dominated strategies.
Therefore, to prove that dominated strategies may survive, we need to consider games where solutions cycle. We consider the same game as Hofbauer and
Sandholm, the hypnodisk game with a feeble twin, and use similar arguments, with some twists.
We check via simulations that dominated strategies can also survive in more standard games, such as a Rock-Paper-Scissors game augmented by a feeble twin
of Scissors, as also considered by Hofbauer and Sandholm. Finally, we show that simpler examples of survival of dominated strategies can be given if we depart
from single population dynamics and consider a population of agents facing an environment which oscillates for exogeneous reasons.

The remainder of this article is organized as follows. Evolutionary dynamics are introduced in \cref{sec:EvolDyn}. \cref{sec:ImProc} describes
imitation processes favouring rare strategies or frequent strategies. \cref{sec:simple} gives a simple example of survival of dominated strategies under
dynamics based on protocols known as the imitation of success, or imitation driven by dissatisfaction. \cref{sec:paycomp} states our main results: that
survival of dominated strategies also occurs for imitation dynamics based on payoff comparison, and for any imitation dynamics satisfying some natural conditions,
on top of favouring rare or frequent strategies. The result is proved in \cref{sec:proof}. \cref{sec:disc} concludes. \cref{app:proofs} gathers
some proofs. \cref{app:moreprot} discusses more general imitation protocols than those described in the main text.
Finally, \cref{app:unilateral} gives simple examples of survival of dominated strategies under dynamics based on payoff comparison in
a population playing against an ad-hoc environment.

\section{Evolutionary dynamics}
\label{sec:EvolDyn}

With the exception of \cref{app:unilateral}, we focus on single-population dynamics.
There is a single, unit mass population of agents.
These agents may choose any pure strategy in the set $I = \{1, \dotsc, N\}$.
The frequency of strategy $i$ at time $t$ is denoted by $x_i(t)$.
The vector $x(t) = (x_i(t))_{i \in I}$ of these frequencies is called the population state at time $t$.
It belongs to the simplex $X = \{x \in \mathbb{R}^N_+, \sum_{i \in I} x_i = 1\}$.
The payoff for an agent playing strategy $i$ when the population state is $x$ is denoted by $F_i(x)$.
The vector $F(\cdot)=(F_1(\cdot),\dotsc,F_N(\cdot)) : X \to \mathbb{R}^N$
is called the game's payoff function. We frequently identify a (symmetric two-player) game and its payoff function.

We are interested in evolutionary dynamics of the form $\dot{x}= V^F(x)$, with $V^F$ Lipschitz continuous in $x$, to ensure existence and uniqueness of solutions through a given initial condition. Thus, the population state evolves as a function of the current state and the payoffs of the game. The vector field $V^F$ is assumed to depend continuously on the game's payoff function $F$.\footnote{To fix ideas, we use the sup norm on the space of payoff functions: $||F|| = \sup_{x \in X, i \in I} |F_i(x)|$, and again the sup norm $||(F,x)|| = \max(||F||, ||x||)$ to define joint continuity in $(F, x)$. This is not essential.}

A well-known example is the replicator dynamics:
\begin{equation}
\label{eq:rep}
\dot x_i(t) = x_i(t)  \left[F_i(x(t)) - \bar{F}(x(t))\right]
\end{equation}
where $\bar{F}(x(t))= \sum_{i \in I} x_i(t) F_i(x(t))$ is the average payoff in the population.
We often omit to specify that the payoffs depend on the state, which depends on time.
Thus, instead of \eqref{eq:rep}, we write: $\dot x_i = x_i (F_i - \bar{F})$.

Pure strategy $i$ is strictly dominated by pure strategy $j$ if for all $x$ in $X$, $F_i(x) < F_j(x)$.
Pure strategy $i$ goes extinct, along a given solution of given dynamics, if $x_i(t) \to 0$ as $t \to +\infty$.
We want to understand under which dynamics pure strategies strictly dominated by other pure strategies always go extinct, at least for initial conditions in which all strategies are initially present, that is, in the relative interior of the simplex $X$.

Before introducing imitative and innovative dynamics, let us explain a standard way to derive dynamics from micro-foundations.
The idea is that from time to time agents revise their strategies.
Due to this revision process, agents playing strategy $i$ switch to strategy $j$ at a certain rate, which depends on the population state and on the payoffs of the game.
We denote this rate by $\rho_{ij}(x, F)$, or simply $\rho_{ij}$ to keep formulas light. Thus, between time $t$ and $t+ dt$, a mass $x_i \rho_{ij} dt$ of agents switch from $i$ to $j$, and a mass $x_j \rho_{ji} dt$ switch from $j$ to $i$.
This leads to the ``mother equation":
\begin{equation}
\label{eq:mother}
\dot x_i  = \sum_{j \neq i} x_j \rho_{ji} - x_i \sum_{j \neq i} \rho_{ij}
\end{equation}
where the first term is an inflow term (agents starting to play strategy $i$) and the second term an outflow
term (agents abandoning strategy $i$).\footnote{As the terms $i=j$ cancel, Eq. \eqref{eq:mother} may also be written as follows:
\[\dot x_i  = \sum_{j \in I} x_j \rho_{ji} - x_i \sum_{j \in I} \rho_{ij} \]}

A specification of the rates $\rho_{ij}$ for all $(i, j)$ in $I \times I$ is called a revision protocol and defines dynamics.
The replicator dynamics for instance may be derived from at least three different protocols.

\begin{itemize}
\item (imitation of success) $\rho_{ij} = x_j (K + F_j(x))$, where $K$ is a constant large enough to ensure that $K+ F_j(x)$ is positive for all strategies $j$ in $I$ and all states $x$ in $X$.

\item (imitation driven by dissatisfaction) $\rho_{ij} = x_j (K - F_i(x))$, with $K > F_i(x)$ for all $i$ in $I$ and all $x$ in $X$.

\item (proportional pairwise imitation rule) $\rho_{ij} = x_j [F_j- F_i]_+$, where for any real number $a$, $[a]_+=\max(a, 0)$.

\end{itemize}

These three protocols model two-step processes: first, a revising agent meets another agent uniformly at random, hence playing $j$ with probability $x_j$;
second, he imitates her with a probability that depends on the payoff of this agent's strategy, his own, or a comparison of both.\footnote{We use ``He" for the revising agent, and ``She" for the agent being imitated.}

\emph{Imitative dynamics.} More generally, Sandholm (2010) \cite{10} calls dynamics imitative if they may be derived from a revision protocol of the form  \[\rho_{ij} (F,x) = x_j r_{ij}(F, x)\]
with for all $x$ in $X$, all strategies $i, j, k$ in $I$:
\begin{equation}
\label{eq:monotonicity}
F_i(x) < F_j(x) \Leftrightarrow r_{kj}(F,x) - r_{jk}(F,x) > r_{ki}(F, x) - r_{ik}(F,x)
\end{equation}

As the replicator dynamics, these dynamics may be seen as modeling a two-step process where, in step 1, a revising agent meets another agent from the population at random, and in step 2, decides to imitate her or not.
Condition \eqref{eq:monotonicity} is a monotonicity condition.
It means that in step 2, the difference between the conditional imitation rates from $k$ to $i$ and from $i$ to $k$ increases with the payoff of strategy $i$. In particular, if strategy $j$ earns more than strategy $i$, then in step 2, an agent playing strategy $i$ is more likely to adopt $j$ than an agent playing $j$ is to adopt $i$.

It is easy to see that imitative dynamics coincide with a class of dynamics known as monotone dynamics (Viossat, 2015 \cite{11}, footnote 6).
These are dynamics of the form
\[\dot{x}_i=x_i g_i(x)\]
with $g_i$ Lipschitz continuous and, for all $x \in X$, and all $(i, j)$ in $I \times I$,
\begin{equation*}
g_{i}(x) < g_{j}(x) \Leftrightarrow F_{i}(x)  <  F_{j}(x).
\end{equation*}
It follows that in imitative dynamics, per-capita growth rates of pure strategies are ordered as their payoffs.
As a result, pure strategies strictly dominated by other pure strategies are always eliminated (Akin, 1980 \cite{1}; Nachbar, 1990 \cite{7}; Samuelson and Zhang, 1992 \cite{9}; Hofbauer and Weibull, 1996 \cite{5}).
To distinguish them from more general imitation processes that we will consider, we refer to these dynamics as \emph{monotone imitative dynamics}.  This monotone character does not only derive from the monotonicity condition \eqref{eq:monotonicity}, but also from the assumption that the probability of envisioning to adopt a given strategy is equal to the frequency of this strategy.

\emph{Innovative dynamics.} By contrast with imitative dynamics, in innovative dynamics, strategies that are not initially played may appear.
A leading example is the Smith dynamic:
\begin{equation}
\label{eq:Smith}
\dot{x_i} = \sum_{j \in I} x_j [F_i(x) - F_j(x)]_+ - x_i \sum_{i \in I} [F_j(x) - F_i(x)]_+
\end{equation}
It may be derived by assuming that, first, revising $i$-strategists\footnote{An $i$-strategist is an agent currently using strategy $i$.} pick a strategy $j$ uniformly at random in the list of possible strategies, and second, adopt it with probability proportional to $[F_j - F_i]_+$.
 This leads to $\rho_{ij} = \frac{1}{N} [F_j - F_i]_+$.
This is similar to the proportional pairwise imitation rule defining the replicator dynamics, except that in the first step, strategy $j$ is selected as a candidate new strategy with probability $1/N$ instead of $x_j$.\footnote{In Eq.\eqref{eq:Smith}, as standard, we omitted the factor $1/N$, which only affects the time-scale.}

Other well known innovative dynamics are the Brown-von Neumann-Nash dynamics, or BNN:
\begin{equation*}
\dot x_i = \left[F_i(x) - \bar{F}(x)\right]_+ - x_i \sum_{k \in I} [F_k(x) - \bar{F}(x)]_+
\end{equation*}
They model a two-step process where, in step 1, revising $i$-strategists pick a strategy $j$ uniformly at random in the list of possible strategies, and, in step 2, adopt it with probability proportional to $[F_j - \bar{F}]_+$, where $\bar{F}(x)=\sum_i x_i F_i(x)$ is the average payoff in the population.

\emph{Innovative Dynamics favour rare strategies, monotone imitative dynamics do not.} Building on Berger and Hofbauer (2006) \cite{2}, Hofbauer and Sandholm (2011) \cite{5} showed that for the Smith and BNN dynamics, and many others, there are games in which a pure strategy strictly dominated
by another pure strategy survives, for most initial conditions. This holds for any dynamics satisfying four natural requirements, called \emph{Innovation},
\emph{Continuity}, \emph{Positive Correlation} and \emph{Nash Stationarity}. As explained in the introduction, the intuition is that, taken together, Innovation and Continuity favour rare strategies, in the sense that a rare strategy can have a higher per-capita growth-rate than a better but more frequent strategy.

By contrast, monotone imitative dynamics favour neither rare nor frequent strategies: they are neutral. Under monotone imitative dynamics,
if the payoff of strategy $i$ is less than the payoff of strategy $j$, then its per-capita growth rate is less than that of strategy $j$. This is true whatever the frequencies of strategies $i$ and $j$. The reason is not that this property is completely natural. Indeed, it does not hold for innovative dynamics. Rather, this is because the imitation processes modeled
by monotone imitative dynamics are of a particular kind, inspired by the replicator dynamics. It is actually easy to imagine dynamics modeling imitation processes
but advantaging rare strategies, or frequent ones.\footnote{Of course, such dynamics, though modeling imitation processes, do not satisfy Sandholm's definition of
imitative dynamics. This is the key-point: this definition of imitative dynamics does not encompass all reasonable imitation processes.}  For these dynamics, as for
innovative dynamics, the advantage given to rare (or frequent) strategies should be able to offset the fact of being strictly dominated, allowing for survival of dominated strategies.
This is what we show.
We begin by providing examples of imitation dynamics favouring rare or frequent strategies. They are all based on the idea that instead of deciding to change his strategy or not upon meeting only one other agent, a revising agent might meet several other agents before taking his decision.

\section{Imitation processes advantaging rare or frequent strategies}
\label{sec:ImProc}

\subsection{Examples} Loosely speaking, dynamics favour rare strategies if, when strategies $i$ and $j$ earn the same payoff but strategy $i$ is rarer, strategy $i$ has a higher per-capita growth rate than strategy $j$.
To see how this could arise in an imitation process, consider revision protocols of the form:
\begin{equation}
\label{eq:gen2step}
\rho_{ij} (F, x)= p_{ij} (F, x) r_{ij} (F, x), \text{ with } p_{ij}(F, x) = \lambda_{ij}(F, x) x_j
\end{equation}
for some positive functions $\lambda_{ij}$.
This models a two-step process: in step 1, a revising $i$-strategist gets interested
in strategy $j$ with a probability $p_{ij}$ that we call a \emph{selection rate}.
We allow it to depend on both payoffs and frequencies, but in our main examples,
it depends only on strategy frequencies;  in step 2, he adopts strategy $j$ with a
probability proportional to a quantity $r_{ij}$ that depends on payoff considerations,
and that we call an \emph{adoption rate}.\footnote{We allow these adoption rates to
depend on both payoffs and frequencies as we want to allow for protocols comparing
one's current payoff to, e.g., the average payoff in the population, which the vector
$F(x)$ alone does not allow to compute; nevertheless, we have in mind a payoff-based
second step.} The assumption $p_{ij}(F, x) = \lambda_{ij}(F, x) x_j$ just means that the
probability $p_{ij}$ to consider switching to strategy $j$ is zero whenever $x_j=0$, since we are modeling an imitation process.
Our adoption rates $r_{ij}$ will typically be monotonic, in the sense of Eq. \eqref{eq:monotonicity}.
Thus, the difference with monotone imitative dynamics is that the probability with which a revising agent gets interested in strategy $j$ need not be exactly $x_j$; that is, the $\lambda_{ij}$ need not be all constant and equal to $1$.
Here are some examples.

\begin{example}
\label{ex:list}
Meeting several agents and making a list of their strategies: a protocol advantaging rare strategies.
\end{example}
Assume that, in step 1, a revising agent does not meet one but $m$ other agents uniformly at random, where $m$ is a bounded random variable independent of the strategy played by the agent.
He then makes a list of the strategies they play, and selects at random a strategy in this list, as a candidate.
He might then learn more about this strategy's payoff,  by talking to the agent he met, by experimenting with this strategy for a short, un-modeled period of time, or by some thought experiment.
He then decides to adopt it or not according to a standard adoption rate $r_{ij}$.

As a concrete example, assume that the revising agent meets one agent playing strategy 1, two playing strategy 2 and two playing strategy 3.
He would then make a list of the strategies met: $\{1, 2, 3\}$, and pick each of them with the same probability, hence with probability $1/3$.\footnote{Picking up a strategy with a probability proportional to the number of agents met playing them (so here probabilities $1/5$, $2/5$, $2/5$) boils down to selecting a candidate uniformly at random, just breaking the selection process in two.
So this would lead to a neutral step 1.
For similar reasons, if $m=1$ or $m=2$, the above process leads to a neutral step 1.
This is why we need $m\geq 3$ with positive probability.} This is similar to protocols generating Smith or Brown-von Neumann-Nash dynamics, except that, instead of having a list of all possible strategies, an agent becomes aware of other possible strategies by meeting agents using them.

Provided that the number $m$ of agents met is equal to 3 or more with positive probability, the above step 1 advantages rare strategies compared to the reference case $p_{ij}(x)= x_j$, in the sense that the lower $x_j$, the higher the multiplicative factor $\lambda_{ij}$ in \eqref{eq:gen2step}. In other words, in proportion to their frequencies,
rare strategies are more often selected at step 1 than frequent strategies.
Another interpretation is as follows. Assume that after deciding which strategy to investigate, the revising agent obtains information about its payoffs by talking to a randomly selected mentor: one of the agents playing this strategy among those he met. Then if Alice plays a rarer strategy than Bob, she is (ex-ante) more likely to serve as a mentor.

\begin{proposition}\label{prop:ex1}
Assume $m \geq 3$ with positive probability.
Then in the first step of \cref{ex:list}, $p_{ij}(x) = x_j \lambda_j(x)$ where the functions $\lambda_j$
satisfy
\[\forall x \in X, \forall (j,k) \in I \times I, x_j < x_k  \Rightarrow \lambda_j(x) > \lambda_k(x)\]
\end{proposition}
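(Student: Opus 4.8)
The plan is to make the first step of \cref{ex:list} completely explicit, read off the factor $\lambda_j$, and reduce the claimed monotonicity to an elementary inequality for binomial weights. In the mean-field (unit-mass) setting a revising agent meets $M$ other agents, where $M$ is the bounded random number of encounters with $\mathbb{P}(M\ge 3)>0$, and their strategies $S_1,\dotsc,S_M$ are i.i.d.\ with law $x$ and independent of the revising agent's own strategy; in particular $p_{ij}$ does not depend on $i$. Writing $N_j$ for the number of the $S_\ell$ equal to $j$ and $L_m=\{S_1,\dotsc,S_m\}$ for the list of distinct strategies met, we have $p_{ij}(x)=\sum_m\mathbb{P}(M=m)\,\mathbb{E}[\mathbf{1}\{j\in L_m\}/|L_m|]$. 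Using the identity $\mathbf{1}\{j\in L_m\}/|L_m|=\sum_{\ell=1}^{m}\mathbf{1}\{S_\ell=j\}/(N_j|L_m|)$ (the left-hand side being $1/|L_m|$ on $\{N_j\ge1\}$ and both sides vanishing otherwise), exchangeability replaces the sum by $m$ times its $\ell=1$ term, and conditioning on $\{S_1=j\}$, an event of probability $x_j$, yields the factorization $p_{ij}(x)=x_j\lambda_j(x)$ with
\begin{equation*}
\lambda_j(x)=\sum_{m}\mathbb{P}(M=m)\,m\;\mathbb{E}\!\left[\frac{1}{(1+N_j')(1+D_{\neq j})}\right],
\end{equation*}
where the expectation is over $m-1$ i.i.d.\ draws from $x$, $N_j'$ is the number of them equal to $j$, and $D_{\neq j}$ is the number of distinct values among them other than $j$ (so that, conditionally on $S_1=j$, $N_j=1+N_j'$ and $|L_m|=1+D_{\neq j}$). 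Since $M$ is bounded, $\lambda_j$ is a polynomial in $x$, hence Lipschitz, and it is manifestly positive, as the proposition requires.

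For the monotonicity, it remains to prove $\lambda_j(x)>\lambda_k(x)$ whenever $x_j<x_k$ (the case $j=k$ being vacuous). Fix such an $x$ and, in the formula above, classify each of the $m-1$ draws as ``$=j$'', ``$=k$'' or ``other''. Conditionally on $s$ of them landing in $\{j,k\}$, the number $a$ of $j$'s is $\mathrm{Bin}(s,p)$ with $p=x_j/(x_j+x_k)<1/2$, the number of $k$'s is $b=s-a$, and the number $d$ of distinct ``other'' strategies is independent of $(a,b)$; moreover $N_j'=a$ and $D_{\neq j}=\mathbf{1}\{b\ge1\}+d$, with the symmetric statements for $k$. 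Substituting,
\begin{equation*}
\lambda_j(x)-\lambda_k(x)=\sum_m\mathbb{P}(M=m)\,m\;\mathbb{E}\bigl[\psi(a)-\psi(s-a)\bigr],\qquad \psi(a):=\frac{1}{(1+a)\bigl(1+\mathbf{1}\{s-a\ge1\}+d\bigr)} .
\end{equation*}

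Conditioning the inner expectation on $m,s,d$ and noting that $a\sim\mathrm{Bin}(s,p)$ while $s-a\sim\mathrm{Bin}(s,1-p)$, one obtains $\mathbb{E}[\psi(a)-\psi(s-a)]=\sum_{a}[\mathbb{P}(a)-\mathbb{P}(s-a)]\psi(a)=\sum_{a<s/2}[\mathbb{P}(a)-\mathbb{P}(s-a)]\bigl(\psi(a)-\psi(s-a)\bigr)$, with $\mathbb{P}(\cdot)$ the $\mathrm{Bin}(s,p)$ weight and the term $a=s/2$ (if present) vanishing. Here $\mathbb{P}(a)\ge\mathbb{P}(s-a)$ for $a<s/2$ by the monotone likelihood ratio of $\mathrm{Bin}(s,p)$ with $p<1/2$, while $\psi(a)-\psi(s-a)\ge0$: for $1\le a<s/2$ it equals $\tfrac{s-2a}{(2+d)(1+a)(1+s-a)}$, and for $a=0$ it equals $\tfrac{1}{2+d}-\tfrac{1}{(1+s)(1+d)}\ge0$ because $s(1+d)\ge1$. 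Thus every summand is nonnegative and $\lambda_j(x)\ge\lambda_k(x)$. For strictness, pick $m_0\ge3$ with $\mathbb{P}(M=m_0)>0$: since $x_k>0$, there is positive probability that at least two of the $m_0-1\ge2$ draws fall in $\{j,k\}$, and whenever $s\ge2$ the $a=0$ summand already equals $[(1-p)^s-p^s]\bigl(\tfrac{1}{2+d}-\tfrac{1}{(1+s)(1+d)}\bigr)>0$ (both factors strictly positive, the second since $s(1+d)\ge2>1$). This covers the boundary too: if $x_j=0$ then $p=0$ and $a\equiv0$, and the same computation applies.

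I expect the sign of $\mathbb{E}[\psi(a)-\psi(s-a)]$ to be the only real obstacle. A crude term-by-term estimate fails, because assigning the rarer label $j$ the count $a$ pays off only on average; one must pair $a$ with $s-a$ and use the likelihood-ratio inequality $\mathbb{P}(a)\ge\mathbb{P}(s-a)$ for $\mathrm{Bin}(s,p)$ with $p<1/2$. Extracting the closed form for $\lambda_j$ via exchangeability is what brings this pairing to the surface, and it also disposes of the frequencies $x_j=0$ with no extra argument; the remaining points (polynomiality, positivity, Lipschitz continuity) are routine.
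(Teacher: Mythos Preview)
Your proof is correct, and the overall strategy coincides with the paper's: condition on the draws that fall outside $\{j,k\}$ (their number and how many distinct strategies they represent), then compare the two sides on the remaining $\{j,k\}$-draws. The execution differs in two respects. First, you extract $\lambda_j$ via the exchangeability identity $\mathbf{1}\{j\in L_m\}/|L_m|=\sum_\ell\mathbf{1}\{S_\ell=j\}/(N_j|L_m|)$, which produces a closed expression for $\lambda_j$ as an expectation over $m-1$ fresh draws and handles the boundary $x_j=0$ automatically; the paper instead keeps all $m$ draws, conditions on the event $E_{l,q}$ (that $l$ of them are ``other'' and span $q$ distinct strategies), and compares $P(i\mid E_{l,q})/x_i$ with $P(j\mid E_{l,q})/x_j$. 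Second, for the sign you pair $a\leftrightarrow s-a$ and use the likelihood-ratio inequality for $\mathrm{Bin}(s,p)$ with $p<1/2$; the paper computes the conditional selection probability in closed form (the three cases $y_i^{\tilde m}$, $y_j^{\tilde m}$, and the rest), divides by $y_i$, and reduces the difference to an explicit telescoping decomposition $A_i-A_j=T_1+T_2$. Your route is more probabilistic and arguably cleaner; the paper's direct algebra makes visible exactly which configurations $(\tilde m,q)$ give a strict gain. Both require $\tilde m\geq 2$ (equivalently, your $s\geq 2$ event after fixing $S_1=j$) for strictness, which is where the hypothesis $\mathbb{P}(M\geq 3)>0$ enters.
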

\begin{proof}
See \cref{app:proofs}.
\end{proof}
We do not need step 1 to be exactly as described above.
Any protocol whose first step is a combination of the above one and a standard one ($p_{ij} =x_j$) would favour rare strategies in a similar sense.
Our results also apply to protocols that cannot be separated in two steps in the sense of Eq. \eqref{eq:gen2step}, but still favour rare strategies. This is discussed in \cref{app:moreprot}.
\begin{example}
\label{ex:maj}
Following the majority: a protocol advantaging frequent strategies.
\end{example}
As in the previous example, assume that a revising agent first meets $m$ other agents, where $m$ is a bounded random variable independent of the strategy played by the agent.
But now, he selects as a candidate the strategy played by the highest number of these agents, if there is only one.
If there are several such strategies, he selects one of these strategies uniformly at random.
Thus, if he meets one agent playing strategy 1, two playing strategy 2 and two playing strategy 3, then with probability 1/2 he selects strategy 2, and with probability 1/2, he selects strategy 3.

This step 1 advantages frequent strategies in the sense that the higher $x_j$, the higher the multiplicative factor $\lambda_{ij}$ (which here is independent of $i$).
In this sense, frequent strategies are imitated more often, or  more precisely,  more often selected at step 1.

\begin{proposition}\label{prop:ex2}
Assume that $m \geq 3$ with positive probability.
Then in the first step of \cref{ex:maj}, $p_{ij}(x) = x_j \lambda_j(x)$ where the functions $\lambda_j$  satisfy
\[\forall x \in X, \forall (j, k) \in I \times I, x_j < x_k  \Rightarrow \lambda_j(x) < \lambda_k(x)\]
\end{proposition}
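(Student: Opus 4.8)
The plan is to condition on the number $m$ of agents the revising agent meets --- a bounded random variable, independent of his strategy, so that the selection rate $p_{ij}$ does not depend on $i$; write $p_j(x)$ for the probability of selecting strategy $j$. Conditionally on $m$, the agents met are $m$ i.i.d.\ draws from the population, so if $c=(c_i)_{i\in I}$ is the resulting vector of counts ($\sum_i c_i=m$) and $w_j(c)$ is the probability that the majority rule selects $j$ --- equal to $1/\#\{i:c_i=\max_\ell c_\ell\}$ when $c_j=\max_\ell c_\ell$ and to $0$ otherwise --- then $\Pr[\text{select }j\mid m]=\sum_c\binom{m}{c}\bigl(\prod_i x_i^{c_i}\bigr)w_j(c)=:q_j^{(m)}(x)$, with $\binom{m}{c}:=m!/\prod_i c_i!$. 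Every monomial here with $w_j(c)>0$ has $c_j\ge 1$, so $q_j^{(m)}$ is divisible by $x_j$; I would set $\lambda_j^{(m)}:=q_j^{(m)}/x_j$ and $\lambda_j:=p_j/x_j=\sum_m\Pr[m]\,\lambda_j^{(m)}$ (a finite sum), which realises $p_{ij}(x)=x_j\lambda_j(x)$ with $\lambda_j$ a nonnegative polynomial combination, positive whenever $x_j>0$. Rounds with $m\le 2$ are neutral --- for $m\in\{1,2\}$ the selected strategy is that of a uniformly random one of the agents met --- so they add the same quantity to every $\lambda_j$, and it suffices to prove, for every integer $m\ge 3$ and every $x\in X$ with $x_j\le x_k$,
\[
\lambda_k^{(m)}(x)-\lambda_j^{(m)}(x)\ \ge\ x_k^{\,m-1}-x_j^{\,m-1};
\]
the right-hand side is $>0$ when $x_j<x_k$, and $\Pr[m\ge 3]>0$ by hypothesis, so summing over $m$ then closes the argument.

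To establish this inequality I would expand $q_k^{(m)}$, divide by $x_k$, and reindex the sum by the transposition $\tau=(j\ k)$ acting on count vectors --- an operation fixing $\binom{m}{c}$ and $\prod_{i\ne j,k}x_i^{c_i}$ and sending $w_k$ to $w_j$. Subtracting the analogous expansion of $\lambda_j^{(m)}$ yields
\[
\lambda_k^{(m)}-\lambda_j^{(m)}=\sum_{c:\,c_j=\max_\ell c_\ell\ge 1}\binom{m}{c}\Bigl(\prod_{i\ne j,k}x_i^{c_i}\Bigr)w_j(c)\,\bigl[x_k^{\,c_j-1}x_j^{\,c_k}-x_j^{\,c_j-1}x_k^{\,c_k}\bigr].
\]
If $c_k<c_j$ the bracket equals $x_j^{\,c_k}x_k^{\,c_k}\bigl(x_k^{\,c_j-1-c_k}-x_j^{\,c_j-1-c_k}\bigr)\ge 0$, while if $c_k=c_j$ it equals $x_j^{\,c_j-1}x_k^{\,c_j-1}(x_j-x_k)\le 0$; thus the only negative contributions come from count vectors in which $j$ and $k$ tie for the plurality.

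The heart of the proof is to absorb those "tied" vectors. To a tied vector $c$ (so $c_j=c_k=M:=\max_\ell c_\ell\ge 1$) I would assign the vector $c^{\ast}$ obtained by moving one sampled agent from $k$ to $j$: $c^{\ast}_j=M+1$, $c^{\ast}_k=M-1$, rest unchanged. Then $c^{\ast}$ has $c^{\ast}_j$ as its strict maximum, so it is a "$c_k<c_j$" term with $w_j(c^{\ast})=1$; the map $c\mapsto c^{\ast}$ is injective; and its image never equals $(m,0,\dots,0)$ (that would force $m=2$). Since $\binom{m}{c^{\ast}}/\binom{m}{c}=M/(M+1)$ and the factors $\prod_{i\ne j,k}x_i^{c_i}$ agree, the brackets of $c$ and $c^{\ast}$ have the same magnitude $x_j^{\,M-1}x_k^{\,M-1}(x_k-x_j)$ and opposite signs, so the two terms add up to
\[
\binom{m}{c}\Bigl(\prod_{i\ne j,k}x_i^{c_i}\Bigr)x_j^{\,M-1}x_k^{\,M-1}(x_k-x_j)\Bigl(\frac{M}{M+1}-w_j(c)\Bigr)\ \ge\ 0,
\]
because $w_j(c)=1/\#\{\text{plurality winners}\}\le\tfrac12\le\tfrac{M}{M+1}$. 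Every remaining "$c_k<c_j$" vector contributes $\ge 0$ on its own, and the vector $(m,0,\dots,0)$ contributes exactly $x_k^{\,m-1}-x_j^{\,m-1}$; summing this decomposition gives the displayed bound.

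I expect the main obstacle to be precisely this last step: the tied vectors point the wrong way, and the argument rests on the combinatorial fact that the favourable neighbour $c^{\ast}$ carries enough multinomial weight, $\tfrac{M}{M+1}\binom{m}{c}$, to outweigh the tie-breaking weight $\tfrac1r\binom{m}{c}\le\tfrac12\binom{m}{c}$ it replaces --- the inequality $\tfrac{M}{M+1}\ge\tfrac12$ holding for no deeper reason than $M\ge 1$. The surrounding bookkeeping (neutrality of the $m\le 2$ rounds, divisibility of $q_j^{(m)}$ by $x_j$, and the sign analysis of the key identity) should be routine.
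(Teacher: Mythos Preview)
Your argument is correct and genuinely different from the paper's. The paper conditions on the triple $(l,b,q)$ --- the number $l$ of agents met who play neither $j$ nor $k$, the maximal count $b$ among those other strategies, and the number $q$ of other strategies attaining that count --- and then runs a case analysis on $\tilde m=m-l$ versus $b$ (the subcases $\tilde m\ge 2b+1$, $\tilde m=2b$, $b\le\tilde m\le 2b-1$, $b>\tilde m$), comparing $p_j/y_j$ and $p_k/y_k$ in each. Your route avoids this case split entirely: the transposition identity reduces the problem to the sign of a single sum, and the pairing $c\mapsto c^{\ast}$ absorbs the tied configurations in one stroke via the clean inequality $\tfrac{M}{M+1}\ge\tfrac12\ge w_j(c)$. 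This is shorter and also yields the explicit quantitative bound $\lambda_k^{(m)}-\lambda_j^{(m)}\ge x_k^{\,m-1}-x_j^{\,m-1}$, which the paper does not extract. The paper's decomposition, on the other hand, makes it transparent exactly which sampling events contribute strictly and which are neutral, and connects more directly to the probabilistic intuition (conditioning on what the ``other'' strategies do reduces to a two-strategy plurality problem). Both approaches handle the $m\le 2$ neutrality the same way.
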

\begin{proof}
See \cref{app:proofs}.
\end{proof}
As for \cref{ex:list}, a number of variants could be considered that cannot easily be put in the form \eqref{eq:gen2step}, but still favour frequent strategies, and to which our results would apply. 
Note also that other forms of conformity biases have been studied in the literature on the evolution of cooperation, and shown to allow for the survival of cooperation in the prisoner's dilemma (Boyd and Richerson, 1988 \cite{3}; see also Eq. (1) in Heinrich and Boyd, 2001 \cite{4}, or in Pe\~na et al., 2009 \cite{8}).
\begin{example}
\label{ex:other}
Trying to meet agents playing other strategies than one's own: a protocol disadvantaging frequent strategies.
\end{example}
Assume that in step $1$, a revising agent of type $i$ meets somebody uniformly at random in the population.
If this person is of a type $j \neq i$, then the revising agent considers switching to $j$.
If this person is also of type $i$, then the revising agent tries again.
If after trying $m$ times, he did not manage to meet an agent of another type, he stops and keeps using strategy $i$.
The maximal number of trials $m$ could be a random variable.
We only assume that the law of this maximal number is the same for all strategies, that it is almost surely finite, and that with positive probability, it is equal to $2$ or more.

The motivation for such a behavior is that an agent currently playing strategy $i$ already knows that this is a possible behavior and already has a pretty good idea of how good this strategy is.
So talking with an agent of the same type is not very informative. Upon meeting an agent of the same type, a revising agent might thus be willing to try to meet somebody else.\footnote{If the payoff of a strategy is not deterministic, talking with other agents playing the same strategy is useful, but likely less so than talking to an agent with a different behaviour.}

For any $j \neq i$, the probability that a revising agent of type $i$ meets an agent playing another strategy for the first time at the $k^{th}$ trial, and that this agent is of type $j$, is $x_i^{k-1} x_j$.
So the probability $p_{ij}$ that  a revising agent of type $i$ considers switching to strategy $j$ is:
\[p_{ij} = x_j \lambda (x_i), \text{ with } \lambda(x_i) = 1 + x_i + \dotsm + x_i^{m-1}\]
The function $\lambda$ is strictly increasing. In this sense frequent strategies imitate more often than rare ones (or rather, are proportionally more likely to select another type at step 1). This is because agents from frequent types try on average more times to meet another type than agents from rare types.

This favours rare types but not in the same way as in \cref{ex:list}.
Indeed, the fact that a strategy is rare will not increase its chance to be considered for imitation, in the sense that if $j$ and $k$ are two strategies different from $i$, $p_{ij}/x_j = p_{ik}/x_k = \lambda(x_i)$, irrespective of the relative frequencies of strategies $j$ and $k$.
So $j$ and $k$ have the same ``extra-probability" of being selected by $i$.
In terms of the mother-equation \eqref{eq:mother}, the advantage of rare strategies is a higher inflow in \cref{ex:list} and a lower outflow in \cref{ex:other}.

The first step of \cref{ex:other} may also be interpreted as follows: the revising agent meets $m$ agents, keeps the same strategy if they all play as he does, and otherwise disregards all agents playing his strategy; he then picks up one of the remaining agents uniformly at random, and chooses her strategy as a candidate.
Thus, if he plays strategy 3 and meets one agent playing strategy 1, two playing strategy 2 and two playing strategy 3, he ends up choosing strategy 1 with probability 1/3 and strategy 2 with probability 2/3.

\begin{example}
\label{ex:confirmation}
Confirmation bias: a protocol favouring frequent strategies.
\end{example}

Assume that a revising agent meets $m$ other agents and that its main purpose is to be reassured that his strategy is not completely foolish.
More precisely, if at least one of the agents met plays the same strategy as he does, then he keeps it; otherwise, he selects uniformly at random one of the agents met and envisions to imitate her.
This leads to $$p_{ij} = (1-x_i)^m \frac{x_j}{1-x_i}= (1- x_i)^{m-1} x_j$$ for any $i \neq j$.
Thus, $\lambda_{ij}(x)=
(1- x_i)^{m-1}$. If $m \geq 2$, this expression is strictly decreasing in $x_i$, hence this protocol favours frequent strategies.
This is an example of frequent strategies imitating less often than rare strategies (or rather, being proportionally less likely to select another strategy as a candidate at step 1).

\subsection{A definition of favouring rare or frequent strategies}

Consider a two-step revision protocol of the form \eqref{eq:gen2step}:
\footnote{
Our results go through if all definitions in this section are restricted to the case where $i$ and $j$ are twin strategies, in that they have the same payoff function: $F_i = F_j$. This is because the strategy of the proof is to first use the advantage to rare or frequent strategies in a game with twin strategies, and then penalize one of them to make it dominated.}

\begin{definition} The first step is \emph{fair} is $\lambda_{ij}= 1$ for all $i \neq j$.
\end{definition}

\begin{definition}[being selected more often]
Per capita, rare strategies are more often selected at step 1 than frequent ones if for all $(F, x)$ and all strategies $i, j$ such that $x_i < x_j$, $\lambda_{ji}(F, x) \geq \lambda_{ij}(F,x)$, and $\lambda_{ki}(F,x) \geq \lambda_{kj}(F, x)$ for all strategies $k \notin\{i,j\}$.
They are selected strictly more often if these conditions hold with strict inequalities.
Frequent strategies are selected more often (in a weak or strict sense) if the same conditions hold when $x_i > x_j$.
\end{definition}

\begin{definition}[selecting other strategies less often]
Per capita, rare strategies select other strategies less often if for all $(F, x)$ and all strategies $i, j$ such that $x_i < x_j$, $\lambda_{ij} \leq \lambda_{ji}$ and for all strategies $k \notin\{i,j\}$, $\lambda_{ik} \leq \lambda_{jk}$.
They select other strategies strictly less often if these conditions hold with strict inequalities.
Frequent strategies select other strategies less often  (in a weak or strict sense) if the same conditions hold when  $x_i > x_j$.
\end{definition}

\begin{definition}[favouring rare or frequent strategies]
\label{def:adv}
Step 1 favours rare strategies if rare strategies are more often selected and select other strategies less often than frequent ones, and at least one of these properties holds strictly.
It favours frequent strategies if frequent strategies are more often selected and select other strategies less often, and at least one of these properties holds strictly.
\end{definition}
With this vocabulary, the protocols of Examples 1 and 3 both favour rare strategies, but not for the same reason.
In \cref{ex:list}, rare strategies are selected strictly more often
than frequent ones, while in \cref{ex:other}, they select other strategies strictly less often.
The protocols of Examples 2 and 4 favour frequent strategies.

\section{A very simple example of survival of dominated strategies}
\label{sec:simple}

In this section, we consider two-step revision protocols \eqref{eq:gen2step} where in the second step, the adoption rates $r_{ij}$ are always positive.
This is the case in the imitation of success, in imitation driven by dissatisfaction, and in any generalization of the form $r_{ij} =  f(F_i) g(F_j)$ with $f$ and $g$ positive.\footnote{It would be natural to assume $f$ decreasing, $g$ increasing, 
but this is not needed.}
For such protocols, as soon as the first step is not fair, survival of dominated strategies occurs in the simplest of games.
\begin{proposition}
\label{prop:simple} Consider dynamics generated by protocols such that the functions $\lambda_{ij}$ and $r_{ij}$ are jointly continuous in $(F, x)$, the adoption rates $r_{ij}$ are strictly positive, and $r_{ij}(F, x) = r_{ji}(F, x)$ whenever $F_i(x)= F_j(x)$.  Consider the $2 \times 2$ game $\Gamma^{\ep}$ with payoff function $F^{\ep} = (F_1^{\ep}, F_2^{\ep})$ given by $F^{\ep}_1(x)= 1$ and $F^{\ep}_2(x)= 1- \ep$, for all $x$ in $X$.
\begin{enumerate}
\item If the first step favours rare strategies, then for any $\alpha > 0$,
there exists $\bar{\ep}>0$ such that, for any $\ep \in [0 , \bar{\ep}]$
and for any initial condition $x(0)$ in $\mathrm{int}(X)$, $\liminf x_2(t) \geq 1/2 - \alpha$ as $t \to +\infty$.

\item If the first step favours frequent strategies, then for any $\alpha > 0$,
there exists $\bar{\ep}>0$ such that, for any $\ep \in [0 , \bar{\ep}]$
and for any initial condition $x(0)$ such that $x_2(0) \geq 1/2 + \alpha$, $x_2(t) \to 1$ as $t \to +\infty$.

\item If there exists $\hat{x} \in \mathrm{int}(X)$ such that $\lambda_{12}(F^0, \hat{x}) > \lambda_{21}(F^0, \hat{x})$, then
there exists $\bar{\ep}>0$ such that, for any $\ep \in [0 , \bar{\ep}]$,
for any initial condition such that $x_2(0) >\hat{x}_2$, $\liminf x_2(t) \geq \hat{x}_2$.
\end{enumerate}	
\end{proposition}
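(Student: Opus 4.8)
The plan is to exploit that, with only two strategies, the dynamics generated by \eqref{eq:gen2step} is really a scalar equation on the segment $X$. Writing $x=(x_1,x_2)$ with $x_1+x_2=1$ and $\rho_{ij}=\lambda_{ij}\,x_j\,r_{ij}$, the mother equation \eqref{eq:mother} becomes
\[
\dot x_2 \;=\; x_1\rho_{12}-x_2\rho_{21}\;=\;x_1x_2\,\Phi^{\ep}(x),\qquad
\Phi^{\ep}(x)\;:=\;\lambda_{12}(F^{\ep}\!,x)\,r_{12}(F^{\ep}\!,x)-\lambda_{21}(F^{\ep}\!,x)\,r_{21}(F^{\ep}\!,x).
\]
From the joint-continuity hypotheses and compactness of $X$, $\Phi^{\ep}$ is continuous in $x$ and $\Phi^{\ep}\to\Phi^{0}$ uniformly on $X$ as $\ep\to0$. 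Since $F^{0}_1\equiv F^{0}_2$, the hypothesis $r_{ij}=r_{ji}$ when $F_i=F_j$ gives $r_{12}(F^{0}\!,\cdot)=r_{21}(F^{0}\!,\cdot)=:r(\cdot)>0$, so $\Phi^{0}(x)=r(x)\bigl(\lambda_{12}(F^{0}\!,x)-\lambda_{21}(F^{0}\!,x)\bigr)$ has the sign of $\lambda_{12}(F^{0}\!,x)-\lambda_{21}(F^{0}\!,x)$. I would also record that $\{x_1=0\}$ and $\{x_2=0\}$ are rest points of every $\Gamma^{\ep}$, so that, by uniqueness of solutions, trajectories from $\mathrm{int}(X)$ stay in $\mathrm{int}(X)$.

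Next I would determine the sign of $\Phi^{0}$ from \cref{def:adv}. With $N=2$ there are no strategies $k\notin\{i,j\}$, so each of the two conditions defining ``rare strategies are selected more often'' and ``rare strategies select others less often'' collapses to the single inequality $\lambda_{21}\ge\lambda_{12}$ on $\{x_1<x_2\}$; hence ``the first step favours rare strategies'' amounts to $\lambda_{21}(F,x)>\lambda_{12}(F,x)$ for every $(F,x)$ with $x_1<x_2$ (and symmetrically $\lambda_{12}>\lambda_{21}$ when $x_1>x_2$). So under this hypothesis $\Phi^{0}>0$ on $\{x_2<\tfrac12\}$ and $\Phi^{0}<0$ on $\{x_2>\tfrac12\}$, and---importantly---this includes the vertex $e_1=(1,0)$, where $x_2=0<x_1$, giving $\Phi^{0}(e_1)>0$. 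Symmetrically, if the first step favours frequent strategies then $\Phi^{0}>0$ on $\{x_2>\tfrac12\}$ (including at $e_2=(0,1)$) and $\Phi^{0}<0$ on $\{x_2<\tfrac12\}$; and the hypothesis of part~(3) gives directly $\Phi^{0}(\hat x)>0$.

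Each part would then follow from a barrier-plus-escape argument; assume $\alpha<\tfrac12$ (otherwise the statements are vacuous or trivial). For (1): $\Phi^{0}$ is continuous and strictly positive on the compact set $K=\{x\in X:x_2\le\tfrac12-\alpha\}$, so $\Phi^{0}\ge c$ on $K$ for some $c>0$, and by uniform convergence there is $\bar\ep>0$ with $\Phi^{\ep}\ge c/2$ on $K$ for all $\ep\le\bar\ep$. Fixing such an $\ep$, first $\{x_2\ge\tfrac12-\alpha\}$ is forward invariant (a barrier argument: at a putative exit point $x_2=\tfrac12-\alpha$ one has $x_1=\tfrac12+\alpha>0$, hence $\dot x_2=x_1x_2\Phi^{\ep}>0$, contradicting $x_2$ decreasing there); and second, while $x_2\le\tfrac12-\alpha$ one has $\dot x_2=x_1x_2\Phi^{\ep}\ge(\tfrac12+\alpha)\tfrac c2\,x_2=:\kappa x_2$, so $x_2$ grows at least exponentially, reaches $\tfrac12-\alpha$ in finite time, and stays above afterwards; hence $\liminf_{t\to\infty}x_2(t)\ge\tfrac12-\alpha$ for every $x(0)\in\mathrm{int}(X)$. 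For (2), symmetrically pick $\bar\ep$ with $\Phi^{\ep}\ge c/2>0$ on $\{x_2\ge\tfrac12+\alpha\}$; this set is forward invariant by the same barrier argument, and for $x_2(0)\ge\tfrac12+\alpha$ one gets $\dot x_1=-x_1x_2\Phi^{\ep}\le-(\tfrac12+\alpha)\tfrac c2\,x_1$, so $x_1(t)\to0$ and $x_2(t)\to1$. For (3), continuity of $\Phi^{\ep}$ in $(F,x)$ gives $\bar\ep>0$ with $\Phi^{\ep}(\hat x)>0$ for $\ep\le\bar\ep$, whence $\dot x_2=\hat x_1\hat x_2\Phi^{\ep}(\hat x)>0$ at the single point $\hat x$, so $\{x_2\ge\hat x_2\}$ is forward invariant and $x_2(0)>\hat x_2$ forces $\liminf_{t\to\infty}x_2(t)\ge\hat x_2$.

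The step I expect to be the main obstacle is the uniformity over initial conditions in (1), which must cover $x_2(0)$ arbitrarily small. There $\dot x_2\to0$, and a priori the perturbation by $\ep>0$ could turn the rest point $e_1$ into an attractor, sending strategy $2$ extinct. What rules this out is exactly that ``favouring rare strategies'' is a \emph{strict} inequality that survives down to the vertex $e_1$: it keeps $\Phi^{0}$---and hence $\Phi^{\ep}$ for $\ep$ small---bounded below by a positive constant on the \emph{closed} segment $\{x_2\le\tfrac12-\alpha\}$, so $e_1$ stays a uniform repeller and the bound $\dot x_2\ge\kappa x_2$ holds all the way to $x_2=0$. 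The resulting escape time is finite for each $x(0)$ but not uniform in $x(0)$---harmless, since the conclusions only concern $t\to\infty$. The routine points I would not dwell on are: writing out the $2\times2$ mother equation as above, checking the collapse of the conditions of \cref{def:adv} when $N=2$, and the standard comparison/barrier lemmas.
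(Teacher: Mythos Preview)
Your proposal is correct and follows essentially the same approach as the paper: reduce the two-strategy mother equation to $\dot x_2 = x_1 x_2\,\Phi^{\ep}(x)$ (the paper tracks $\dot x_1 = x_1(1-x_1)h$ with $h=-\Phi$), observe that at $\ep=0$ the symmetry $r_{12}=r_{21}$ makes $\Phi^{0}$ proportional to $\lambda_{12}-\lambda_{21}$, read off the sign of this difference from \cref{def:adv}, and then pass from $\ep=0$ to small $\ep>0$ by joint continuity and compactness of the closed subintervals $\{x_2\le\tfrac12-\alpha\}$ and $\{x_2\ge\tfrac12+\alpha\}$. Your treatment is slightly more explicit than the paper's (the exponential escape bound near $e_1$, the collapse of \cref{def:adv} when $N=2$), but the argument is the same.
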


\begin{proof}
1) With only two strategies, the mother-equation \eqref{eq:mother} boils down to \[\dot x_1 = x_1(1-x_1) h(F, x) \text{ with } h(F, x)= \lambda_{21} r_{21} - \lambda_{12} r_{12}.\]  Our assumptions ensure that $h$ is jointly continuous.
In game $\Gamma^{0}$, $r_{21}= r_{12}$ for all $x$, hence $h(F^0, x)= (\lambda_{21} - \lambda_{12}) r_{12}$.
Since we assume $r_{12}>0$, $h(F^0, x)$ has the sign of $\lambda_{21} - \lambda_{12}$.
Thus, if step 1 favours rare strategies, $h(F^0, x) > 0$ if $0 \leq x_1 < 1/2$ and $h(F^0, x)< 0$ if $1/2 < x_1 \leq 1$.
Thus, in game $\Gamma^0$, $x_1(t) \to 1/2$ as $t \to +\infty$ for any interior initial condition.
Now let $\alpha \in (0, 1/2)$.
Since the sets $[0, 1/2 - \alpha]$ and $[1/2+ \alpha, 1]$ are compact, and $h$ is jointly continuous, it follows that for any $\ep >0$ small enough, in game $\Gamma^{\ep}$, we still have $h(F^{\ep}, x) > 0$ on $[0, 1/2 - \alpha]$ and $h(F^{\ep}, x) < 0$ on $[1/2 + \alpha, 1]$.  Therefore, in $\Gamma^{\ep}$, for any interior initial condition,
\[\frac{1}{2} - \alpha \leq \liminf_{t \to + \infty} x(t) \leq \limsup_{t \to + \infty} x(t) \leq \frac{1}{2} + \alpha.\]

2) Similar arguments show that, if step 1 favours frequent strategies, then $x_2(t) \to 1$ for any initial condition such that $x_2(0) >1/2$ in game $\Gamma^0$, and for any initial condition such that $x_2(0) \geq 1/2+ \alpha$ in $\Gamma^{\ep}$, provided that $\ep$ is small enough.

3) The assumption essentially amounts to assuming that the first step is not fair. We may then assume that there exists $\hat{x}$ such that $\lambda_{12}(F^0, \hat{x}) > \lambda_{21}(F^0, \hat{x})$.
Then in $\Gamma^{0}$, $h(F^0, \hat{x}) <0$, hence for any $\ep>0$ small enough, $h(F^{\ep}, \hat{x}) < 0$.
It follows that at $\hat{x}$, $\dot x_2 >0$.
Since the state space is a segment, the result follows.
 \end{proof}

\emph{How dominated can surviving strategies be?} As results of Hofbauer and Sandholm, the proof of \cref{prop:simple}
relies on arbitrarily small domination levels. It does not say whether strategies that are substantially dominated can survive. To tackle
this question, consider a game with only two strategies, $1$ and $2$, with constant payoffs: $F_1(x) = u_1$ and $F_2(x) = u_2 < u_1$ for all $x$ in $X$. For
a protocol of type \eqref{eq:gen2step}, there are at least as many transitions from strategy 2 to strategy 1 than from 1 to 2 (hence the
frequency of strategy 2 does not decrease) if and only if $\lambda_{12} r_{12} \geq \lambda_{21} r_{21}$, or equivalently
\begin{equation*}
\frac{r_{21}}{r_{12}} \leq \frac{\lambda_{12}}{\lambda_{21}}
\end{equation*}
The LHS may be seen as the ``payoff effect" and the RHS as the ``frequency effect".
This inequality takes a simple form if we assume
\begin{itemize}
\item $r_{ij} = u_j \geq 0$, as in the imitation of success.
\item $p_{ij}= x_j \lambda(x_i)$ with $\lambda(x_i) = 1 + x_i + \dotsm + x_i^{m-1}$, as in \cref{ex:other} from \cref{sec:ImProc}, where a revising agent tries to meet an agent playing another strategy up to $m$ times before giving up.
\end{itemize}
It is then easy to see that  the strictly dominated strategy $2$ survives whenever $u_2 >  u_1/m$. Moreover, in that case, $x_2(t) \to x_2^{\ast}$ where $x_2^{\ast}$ is the solution of
\[ u_2/u_1 = \frac{x_2(1- x_2^m)}{x_1 (1-x_1^m)} \text{ with } x_1 = 1-x_2.\]
 \cref{FigA} draws the value of the asymptotic frequency $x_2^{\ast}$ of the dominated strategy as a function of the ratio $u_2/u_1$, for various values of $m$. For instance, if $m=2$, the dominated strategy survives if its payoff is at least half the payoff of the dominant strategy $(u_2/u_1 \geq 1/2)$, its asymptotic frequency is larger than 0.2 if $u_2/u_1 \geq 2/3$, and larger than $1/3$ if $u_2/u_1 \geq 0.8$. Larger values of $m$ lead to even larger frequencies of the dominated strategy. Thus, at least for this protocol, relatively large differences in payoffs still allow for survival of strictly dominated strategies at significant frequencies.

\begin{figure}
  \centering
    \includegraphics[width=.75\textwidth]{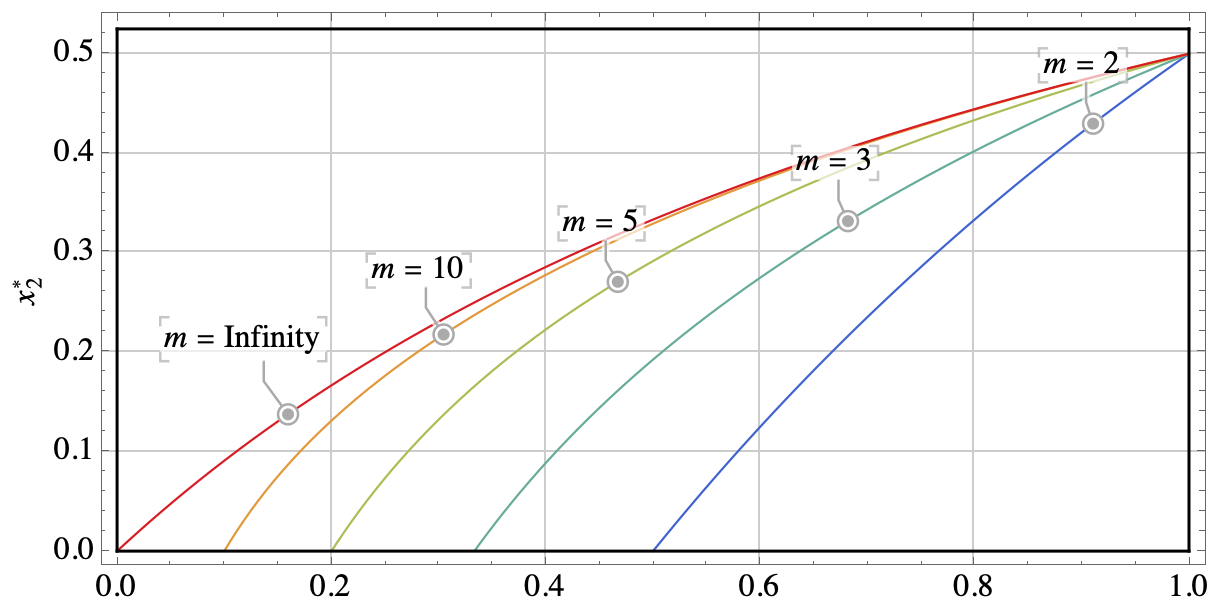}
  \caption{\textbf{Asymptotic frequency of  the dominated strategy as a function of the payoff ratio $u_2/u_1$ for various values of $m$.}}
  \label{FigA}
\end{figure}

\section{Imitation through comparison of payoffs}
\label{sec:paycomp}

In imitation protocols considered in the previous section, adoption rates are always positive, and rest-points correspond to an equilibrium between inflow and outflow, rather than an absence of strategy changes.  Though these adoption rates are standard, they have the debatable property that revising agents do not compare the payoff of their current strategy to the payoff of the strategy they envision to adopt (or the average payoff in the population). As a result, agents may switch to a strategy with currently lower payoffs than their own (or lower than average).

In this section, we show that survival of dominated strategies also occurs for adoption rates based on payoff comparison, such as $r_{ij} = [F_i - F_j]_+$, $r_{ij}= [F_j - \bar{F}]_+$, or generalizations thereof.\footnote{The examples we give cannot be of simple $2 \times 2$ games, as in the previous section. Indeed, in a game with only two strategies, such adoption rates prevent agents playing the dominant strategy to adopt the dominated one, so the dominated strategy gets extinct. This is also the case for any dynamics satisfying Positive Correlation (defined below).} To do so, we first need to show that, under mild additional assumptions, these protocols lead to dynamics satisfying the version of Positive Correlation for imitation processes:
\begin{equation}
\label{eq:PC}
\tag{PC$'$}
\sum_{i} \dot{x}_i F_i > 0
\end{equation}
whenever $x$ is not a population equilibrium, that is, a population state at which all strategies with a positive frequency get the same payoff (or in other words, a rest point of the replicator dynamics). An interpretation of  \eqref{eq:PC} is that, in a fixed environment, the average payoff in the population would increase, unless it is already maximal.\footnote{On top of replacing Nash equilibrium with population equilibrium, condition  \eqref{eq:PC} somehow combines the Positive Correlation condition of Hofbauer and Sandholm ($\sum_{i} \dot{x}_i F_i > 0$ whenever $\dot{x} \neq 0$) and their Nash Stationarity condition ($\dot{x} \neq 0$ whenever $x$ is not a Nash equilibrium).}

\subsection{Protocols leading to Positive Correlation}
Define the sign function by, for any real number $a$: $\mathrm{sgn}(a) = 1$ if $a>0$, $\mathrm{sgn}(a) = -1$ if $a <0$, and $\mathrm{sgn}(0)=0$.
\begin{proposition}
\label{prop:PC}
Consider dynamics arising from protocols of type \eqref{eq:gen2step}. Condition \eqref{eq:PC} is satisfied if at least one of the following properties holds:\footnote{The equalities below are between functions: $F_i$, $F_j$ may depend on $x$, and $r_{ij}$, $r_i$, $r_j$, $p_{ij}$, $\lambda_i$, $\lambda_j$ may depend on $(F, x)$.}
\begin{description}
\item[a)] (pairwise comparison) $\mathrm{sgn}(r_{ij}) = \mathrm{sgn}([F_j - F_i]_+)$.
\item[b)] (imitation of greater than average success)\footnote{If $f$ is constant, the second step is purely imitation of greater than average success. If $f$ is decreasing, it combines imitation of greater than average success with imitation driven by dissatisfaction.}\\
$p_{ij} = \lambda_j x_j$ with $\lambda_j$ positive; $r_{ij}= f(F_i) r_j$ with $f$ positive, nonincreasing, and $\mathrm{sgn}(r_j) = \mathrm{sgn}([F_j - \bar{F}]_+)$.

\item[c)] (imitation driven by less than average success)\footnote{If $g$ is constant, the second step is purely imitation driven by less than average success. If $g$ is decreasing, it combines imitation driven by less than average success with imitation of success.}\\
$p_{ij} = \lambda_i x_j $ with $\lambda_i$ positive;  $r_{ij}= g(F_j) r_i$ with $g$ positive, nondecreasing, and $\mathrm{sgn}(r_i) = \mathrm{sgn}([\bar{F}- F_i]_+)$.
\end{description}
 \end{proposition}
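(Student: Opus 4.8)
The plan is to reduce all three cases to a single algebraic identity for $\sum_i \dot x_i F_i$ read off from the mother equation \eqref{eq:mother}, and then to do sign bookkeeping tailored to each protocol. Writing \eqref{eq:mother} as $\dot x_i = \sum_{j} x_j \rho_{ji} - x_i \sum_{j} \rho_{ij}$, multiplying by $F_i$, summing over $i$, and swapping the dummy indices $i\leftrightarrow j$ in the outflow double sum collapses everything to
\[
\sum_i \dot x_i F_i \;=\; \sum_{i,j} (F_i - F_j)\, x_j\, \rho_{ji}.
\]
Since $\rho_{ji} = p_{ji}\, r_{ji} = \lambda_{ji}\, x_i\, r_{ji}$ by \eqref{eq:gen2step}, swapping $i\leftrightarrow j$ once more and averaging gives the symmetric form
\[
2\sum_i \dot x_i F_i \;=\; \sum_{i,j} x_i x_j\, (F_i - F_j)\bigl(\lambda_{ji} r_{ji} - \lambda_{ij} r_{ij}\bigr),
\]
which is the convenient one for case (a), while the non-symmetrized form is the right one for (b) and (c).

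For case (a) the hypothesis says $r_{ij}=0$ when $F_j\le F_i$ and $r_{ij}>0$ when $F_j>F_i$. Grouping the symmetric sum by unordered pairs, a pair $\{i,j\}$ with $F_i>F_j$ has $r_{ij}=0$ and $r_{ji}>0$, so (the $\lambda$'s being positive) $\lambda_{ji}r_{ji}-\lambda_{ij}r_{ij}=\lambda_{ji}r_{ji}>0$ and the summand $(F_i-F_j)(\lambda_{ji}r_{ji}-\lambda_{ij}r_{ij})$ is $\ge 0$, strictly positive iff $x_ix_j>0$; pairs with $F_i=F_j$ contribute $0$. Hence $\sum_i\dot x_i F_i\ge 0$, with equality exactly when $F_i=F_j$ for every pair with $x_ix_j>0$, i.e. exactly at a population equilibrium, which is \eqref{eq:PC}.

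For case (b), substituting $\rho_{ji}=\lambda_i x_i f(F_j) r_i$ into the master identity and factoring out $x_i\lambda_i r_i$ gives $\sum_i\dot x_i F_i = A\sum_i x_i\lambda_i r_i(F_i-\hat F)$, where $A:=\sum_j x_j f(F_j)>0$ and $\hat F$ is the mean of the payoffs for the weights $x_jf(F_j)/A$. The one nontrivial input is that $f$ nonincreasing forces $\hat F\le\bar F$: indeed $A(\hat F-\bar F)=\tfrac12\sum_{j,k}x_jx_k(f(F_j)-f(F_k))(F_j-F_k)\le 0$ since the two factors have opposite signs (Chebyshev's sum inequality). Because $r_i>0$ only where $F_i>\bar F\ge\hat F$, every term $x_i\lambda_i r_i(F_i-\hat F)$ is $\ge 0$; and when $x$ is not a population equilibrium some present strategy beats the average $\bar F$, making that term — hence the whole sum — strictly positive. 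Case (c) is dual: substituting $\rho_{ji}=\lambda_j x_i g(F_i) r_j$ and factoring out $x_i g(F_i)$ gives $\sum_i\dot x_i F_i = C\sum_i x_i g(F_i)(F_i-\tilde F)$ with $C:=\sum_j x_j\lambda_j r_j\ge 0$. If $C=0$ all present strategies earn $\ge\bar F$, hence all earn $\bar F$, so $x$ is a population equilibrium and the sum vanishes; if $C>0$ then $\tilde F$ (the $x_j\lambda_j r_j$-weighted mean payoff) averages payoffs that are all $<\bar F$, so $\tilde F<\bar F$, and rewriting the remaining factor as $\bigl(\sum_i x_i g(F_i)\bigr)(\bar F_g-\tilde F)$ with $\bar F_g$ the $g(F)$-weighted mean payoff and using $g$ nondecreasing $\Rightarrow\bar F_g\ge\bar F$ (same Chebyshev argument) yields $\bar F_g-\tilde F>0$ and $\sum_i\dot x_i F_i>0$; since $C>0$ is equivalent to $x$ not being a population equilibrium, \eqref{eq:PC} follows.

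The only genuinely substantive ingredient is the monotone-rearrangement (Chebyshev sum) inequality that orders the auxiliary means $\hat F\le\bar F\le\bar F_g$ against the true average $\bar F$; once that is in place the argument is pure sign-chasing from the master identity. A secondary point needing care is matching the equality/degenerate cases ($C=0$, or all pairwise terms vanishing) with population equilibria, using that $\bar F$ is the $x$-average of the payoffs of the present strategies, so $x$ fails to be a population equilibrium precisely when some present strategy does strictly better, and some strictly worse, than $\bar F$.
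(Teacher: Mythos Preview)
Your proof is correct and follows essentially the same route as the paper's: both start from the identity $\sum_i \dot x_i F_i = \sum_{i,j} x_i \rho_{ij}(F_j - F_i)$, then do sign bookkeeping case by case, with the key non-obvious step in (b) and (c) being the comparison of a reweighted payoff average to $\bar F$ via the monotonicity of $f$ or $g$ (what you call Chebyshev; the paper simply asserts ``it may be shown''). The only cosmetic differences are that you symmetrize in case (a) whereas the paper argues termwise on the unsymmetrized sum, and in case (c) you collapse the inner sum to the scalar product $CG(\bar F_g - \tilde F)$ whereas the paper keeps a sum over $i$ and checks each term $\lambda_i r_i\bigl(\sum_j y_j F_j - F_i\bigr)$ has the sign of $[\bar F - F_i]_+$; both are equally valid and rest on the same inequality.
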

The intuition for this result is as follows: in case a), agents always switch to strategies with better payoff than their own; in case b), agents only switch to strategies $j$ earning more than $\bar{F}$, and for any such $j$, the average former payoff of agents switching to $j$ is no more than $\bar{F}$; in case c), agents only quit strategies $i$ earning less than $\bar{F}$, and for any such strategy $i$, on average, the new strategy of agents quitting $i$ earns at least $\bar{F}$.
It follows that in all three cases, in a fixed environment, the average population payoff would increase, which is one of the interpretations of condition \eqref{eq:PC}.
A formal proof of \cref{prop:PC} is given below.
\begin{proof}
We let the reader check that \[\sum_i \dot{x}_i F_i = \sum_{i, j} x_i \rho_{ij} (F_j - F_i)\] (intuitively, both sides represent the rate at which the average population payoff evolves in a fixed environment).

\vspace*{4pt}\noindent\textbf{Case a).} $\sum_i \dot{x}_i F_i =  \sum_{i, j} x_i p_{ij} r_{ij} (F_j - F_i)$ with $\mathrm{sgn}(r_{ij}) = \mathrm{sgn}([F_j - F_i]_+)$, so that $\mathrm{sgn}(r_{ij} (F_j - F_i)) = \mathrm{sgn}([F_j - F_i]_+)$.
It follows that the sum is zero if $F_i=F_j$ for any strategies $i$, $j$ such that $x_i>0$, $x_j>0$ (that is, at a population equilibrium) and positive otherwise.

\vspace*{4pt}\noindent\textbf{Case b).} Let $p_j = \lambda_j x_j$, with $\lambda_j >0$; let $\bar f= \sum_k x_k f(F_k)$ and let $y_i = x_i f(F_i) / \bar{f}$.
Note that $\sum_i y_i=1$.
We have:
\begin{equation*}
\begin{split}
\sum_i \dot{x}_i F_i = \sum_{i, j} x_i f(F_i) \lambda_j x_j r_j (F_j - F_i) & =  \bar{f} \sum_{i, j} y_i \lambda_j x_j  r_j (F_j - F_i)\\
											& =  \bar{f} \sum_j \lambda_j x_j r_j (F_j- \sum y_i F_i).
\end{split}
\end{equation*}
Since $f$ is nonincreasing, the $y_i$ (which may be thought of as distorted frequencies) give more weight to strategies with low payoffs than the true frequencies $x_i$, and it may be shown that $\sum y_i F_i \leq \sum x_i F_i = \bar{F}$.
Since $\mathrm{sgn}(r_j) = \mathrm{sgn}([F_j - \bar{F}]_+)$, it follows that we also have $sgn (r_j (F_j- \sum y_i F_i))= \mathrm{sgn}([F_j - \bar{F}]_+)$.
Thus, the whole sum is zero at population equilibria and positive otherwise.

\vspace*{4pt}\noindent\textbf{Case c).} Similarly, let $\bar{g}  = \sum_k x_k g(F_k)$ and $y_i = x_i g(F_i) / \bar{g}$.
We get:
\begin{equation*}
\begin{split}
\sum_i \dot{x}_i F_i = \sum_{i, j} \lambda_i x_j r_i  g(F_j) (F_j - F_i) 	& = \bar{g} \sum_{i, j} \lambda_i r_i y_j (F_j - F_i)\\
												&  = \bar{g} \sum_{i} \lambda_i r_i \left(\left[\sum_j y_j F_j\right] - F_i\right).
\end{split}
\end{equation*}												
Since $g$ is nondecreasing, $\sum_j y_j F_j \geq \bar{F}$.
Moreover, $r_i$ has the sign of $[\bar{F} - F_i]_+$.
Therefore,  $r_i ([\sum_j y_j F_j] - F_i)$ has the sign of $[\bar{F} - F_i]_+$.
It follows that the whole sum is zero at population equilibria and positive otherwise.
\end{proof}

\subsection{Survival result}
Our results on survival of dominated strategies also hold for revision protocols that are not of the two-step form \eqref{eq:gen2step}.
To emphasize this fact, we first state a theorem with assumptions directly on the vector field $V^F$ and the switching-rates $\rho_{ij}$.
We then provide sufficient conditions for these assumptions to be satisfied by two-step revision protocols of form \eqref{eq:gen2step}.
We begin with a list of definitions and assumptions.

\begin{definition}
Strategies $i$ and $j$ are twins if for all $x$ in $X$, $F_i(x)=F_j(x)$.
\end{definition}
\begin{definition}
At a given population state of a given game:  strategy $i$ imitates other strategies if there exists $j \neq i$ such that $\rho_{ij} >0$; it is imitated by other strategies if there exists $j \neq i$ such that $\rho_{ji} >0$.
\end{definition}

On top of condition \eqref{eq:PC}, we will need the following assumptions:

\emph{Continuity (C)}: the vector field $V^F$ is Lipschitz continuous in $x$ and continuous in $u$ (implying joint continuity);
the functions $x \to \rho_{ij}(F, x)$ are continuous in $x$.

\emph{Imitation (Im)}: at any interior population state that is not a Nash equilibrium, each strategy $i$ imitates other strategies or is imitated by  other strategies (or both).

We also need either Advantage to Rarity or Advantage to Frequency, as defined below:

\emph{Advantage to Rarity (AR)}: in the interior of the simplex, if strategy $i$ and $j$ are twins, then
$\frac{\dot{x}_i}{x_i} \geq  \frac{\dot{x}_j}{x_j}$ whenever $x_i < x_j$. Moreover, at least one of the following additional properties holds:  \\
(AR1) The inequality is strict whenever at least one of the strategies $i$ and $j$ imitates other strategies.\\
(AR2) The inequality is strict whenever at least one of the strategies $i$ and $j$ is imitated by other strategies.

\emph{Advantage to Frequency (AF)}: idem but when $x_i > x_j$ instead of $x_i < x_j$.

\begin{theorem}
\label{th:hypno}
Fix $\eta >0$.
Assume that conditions \eqref{eq:PC}, (Im), and (C) are satisfied. If (AR) is satisfied (respectively, (AF)), then there exist 4-strategy games in which pure strategy $3$ strictly dominates pure strategy $4$ but $\liminf x_4(t) > \frac{1}{2} - \eta$ (respectively, $1- \eta$) for a large, open set of initial conditions.
\footnote{By a ``large set", we mean the whole simplex (for an advantage to rarity) or the half-simplex defined by $x_4 \geq x_3$ (for an advantage to frequency), except an arbitrarily small neighborhood of its boundary and of a line segment.}

\end{theorem}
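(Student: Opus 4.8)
The plan is to reuse, with imitative twists, the \emph{hypnodisk game with a feeble twin} of Hofbauer and Sandholm \cite{5}, letting \eqref{eq:PC}, (Im) and (AR)/(AF) play the part that their Innovation condition plays in keeping the dominated strategy alive. Fix $\eta>0$ and choose $\eta'\in(0,\eta)$ small. Start from a three-strategy hypnodisk game $G$ on $\{1,2,3\}$, constructed (following Berger and Hofbauer \cite{2}) so that, for \emph{every} dynamics satisfying (C) and \eqref{eq:PC}, a closed annulus $A$ around a centre $c$ in the interior of the three-strategy simplex is forward invariant and globally attracting, the only population equilibrium of $G$ in the interior is $c$ (which lies inside the inner disk of $A$, so $A$ contains none), and the incentive vector $F^{G}-\bar F^{G}$ of $G$ is radially inward just outside $A$, radially outward just inside the inner disk, and purely radial on the two circles bounding $A$; rescale and recentre the construction so that $x_3>1-\eta'$ (hence $x_1,x_2<\eta'$) throughout $A$. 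Now split strategy $3$ into a twin pair, relabelled $3$ and $4$: with $\pi(x)=(x_1,x_2,x_3+x_4)$, put $F^{\epsilon}_3=F^{G}_3\circ\pi$ and $F^{\epsilon}_4=F^{G}_3\circ\pi-\epsilon$. In the resulting four-strategy game $G^{\epsilon}$, strategy $3$ strictly dominates strategy $4$ when $\epsilon>0$, while $3$ and $4$ are twins when $\epsilon=0$; I prove the statement for small $\epsilon>0$.

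\textbf{Trapping (case $\epsilon=0$).} Since $F^{0}=(F^{G}_1,F^{G}_2,F^{G}_3,F^{G}_3)\circ\pi$, condition \eqref{eq:PC} for $G^{0}$ reads $\langle\tfrac{d}{dt}\pi(x),F^{G}(\pi(x))\rangle>0$ off population equilibria; that is, the projected velocity $\tfrac{d}{dt}\pi(x)=(\dot x_1,\dot x_2,\dot x_3+\dot x_4)$ is positively correlated with the incentive of $G$ at $\pi(x)$. As $\pi(x(t))$ stays in the interior of the three-strategy simplex and avoids a neighbourhood of $c$ (once it starts away from $c$), it never hits a population equilibrium of $G$, so by \eqref{eq:PC} and compactness this correlation is bounded below by a positive constant on the region swept; the incentive being radial on the circles bounding $A$, it follows that $\|\pi(x(t))-c\|$ strictly decreases outside $A$ and strictly increases inside the inner disk, at a rate bounded away from $0$. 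Hence for every interior initial condition whose projection is not too close to $c$, $\pi(x(t))$ enters $A$ in finite time and remains there, so that eventually $x_3(t)+x_4(t)>1-\eta'$ and $x_1(t),x_2(t)<\eta'$.

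\textbf{Fibre analysis and perturbation.} It remains to control $\xi=x_4/(x_3+x_4)$, with $\dot\xi=\xi(1-\xi)(\dot x_4/x_4-\dot x_3/x_3)$. Since $3,4$ are twins, the weak part of (AR), used with either labelling of the pair, gives $\dot\xi\ge 0$ on $\{\xi<\tfrac12\}$ and $\dot\xi\le 0$ on $\{\xi>\tfrac12\}$, so $H(\xi)=\xi\ln\xi+(1-\xi)\ln(1-\xi)$ is nonincreasing along solutions; hence the $\omega$-limit set $\Omega$ of an interior solution is a nonempty compact invariant set lying in $\pi^{-1}(A)$ (which is free of Nash equilibria) and in the interior of $X$ (because $H$ stays negative), on which $\dot H\equiv 0$, i.e.\ $\Omega\subseteq\{\xi=\tfrac12\}\cup\{\dot x_3/x_3=\dot x_4/x_4\}$. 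Suppose $\Omega$ met only $\{\dot x_3/x_3=\dot x_4/x_4\}\setminus\{\xi=\tfrac12\}$. By the contrapositive of (AR1) (resp.\ (AR2)) together with (Im), at every point of $\Omega$ strategies $3$ and $4$ would then be imitated by (resp.\ would imitate) other strategies while failing to imitate (resp.\ to be imitated by) them; in the (AR1) case the outflow from $3$ and from $4$ vanishes, so by \eqref{eq:mother} and nonnegativity of the rates $\dot x_3\ge 0$ and $\dot x_4\ge 0$ on $\Omega$, whence the bounded function $x_3+x_4$ is nondecreasing along every orbit of the compact invariant set $\Omega$ and therefore constant on $\alpha$-limit sets, forcing $\dot x_3+\dot x_4=0$ — equivalently $3$ and $4$ not imitated — somewhere in $\Omega$, a contradiction (the (AR2) case is symmetric, using inflows). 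Hence $\Omega$ meets $\{\xi=\tfrac12\}$, so $H(x(t))\to H(\tfrac12)$ and $\xi(t)\to\tfrac12$. Thus $K:=\pi^{-1}(A)\cap\{x_3=x_4\}$ is an asymptotically stable compact set for $G^{0}$ (invariance of $\{x_3=x_4\}$ and of $\pi^{-1}(A)$; attraction from the trapping and $\xi(t)\to\tfrac12$; stability because $|\xi-\tfrac12|$ is nonincreasing), with basin containing the interior of $X$ minus any fixed neighbourhood of the segment $\pi^{-1}(c)$, and $x_4>\tfrac12(1-\eta')$ on $K$. Now let $\epsilon>0$: strategy $3$ strictly dominates strategy $4$, and by (C) the field $V^{F^{\epsilon}}$ converges uniformly to $V^{F^{0}}$; since asymptotically stable compact sets persist under such perturbations, with basins eventually containing any prescribed compact subset of the original one, for $\epsilon$ small every solution of $G^{\epsilon}$ from the stated large open set converges to a compact set within $o(1)$ of $K$, where $x_4>\tfrac12(1-\eta')-o(1)>\tfrac12-\eta$. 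The (AF) case is identical except that $H$ is nondecreasing on $\{x_4>x_3\}\cap\pi^{-1}(A)$, so $\xi(t)\to 1$, the attracting set is $\pi^{-1}(A)\cap\{x_3=0\}$, on which $x_4>1-\eta'$, and one restricts to $\{x_4\ge x_3\}$ minus neighbourhoods of $\{x_3=x_4\}$ (where $\xi=\tfrac12$ is repelling) and of $\pi^{-1}(c)$.

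\textbf{Main obstacle.} I expect the real difficulty to be the fibre analysis giving $\xi(t)\to\tfrac12$ (resp.\ $\to 1$): because twins enjoying an advantage to rarity have \emph{unequal} per-capita growth rates, $x_3+x_4$ does not satisfy an autonomous planar equation in $(x_1,x_2,x_3+x_4)$, so one cannot simply invoke Poincar\'e--Bendixson for the projected motion and read off the hypnodisk cycle; the argument must instead combine the monotone Lyapunov function $H$, a LaSalle reduction, and the sign of $\dot x_3+\dot x_4$ supplied by the mother equation on the region where $3,4$ are only imitated (or only imitate). Subsidiary points that need care are producing a rescaled hypnodisk game whose annulus hugs the edge between strategies $3$ and $4$ while staying clear of population equilibria (imported, with modifications, from \cite{5,2}), and verifying that the limiting configuration survives the payoff perturbation $F^{0}\rightsquigarrow F^{\epsilon}$ that turns strategy $4$ into a strictly dominated strategy.
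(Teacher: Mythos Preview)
Your proposal is correct and follows essentially the same route as the paper: the hypnodisk game with a (feeble) twin, trapping in the intercylinder region via \eqref{eq:PC}, a Lyapunov argument on the twin ratio driven by (AR)/(Im), and upper-semicontinuity of attractors via (C). The only difference is cosmetic: you use the entropy $H(\xi)=\xi\ln\xi+(1-\xi)\ln(1-\xi)$ together with a nested $\omega$/$\alpha$-limit reduction, whereas the paper works with $V=x_4/x_3$ and a direct compactness argument --- but in both cases the contradiction is obtained from $\dot x_3+\dot x_4>0$ (via (Im) and the mother equation) on the set where neither twin imitates.
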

The proof is given in the next section. It is based on ideas of Hofbauer and Sandholm. We first provide sufficient conditions for the assumptions of \cref{th:hypno} to hold. Consider a two-step revision protocol $\rho_{ij}(F, x) = x_j \lambda_{ij}(F, x) r_{ij}(F, x)$.
\begin{definition}
Step 2 treats twins identically if for any twin strategies $i$ and $j$, $r_{ij}= r_{ji}$ and for any $k \notin \{i, j\}$, $r_{ik} = r_{jk}$ and $r_{ki}= r_{kj}$.
\end{definition}
\begin{proposition} Consider dynamics generated by a two-step protocol of form \eqref{eq:gen2step} satisfying the assumptions of \cref{prop:PC}.
Then \cref{th:hypno} applies provided that both of the following conditions hold:\\
a) the functions $\lambda_{ij}$ and $r_{ij}$ are continuous, and Lipschitz continuous in $x$;\\
b) the selection rates $\lambda_{ij}$ are strictly positive,  step 1 favours rare (respectively frequent) strategies, and step 2 treats twins identically.
\end{proposition}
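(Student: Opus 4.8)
The plan is to check, one at a time, that every hypothesis of \cref{th:hypno} --- namely \eqref{eq:PC}, (C), (Im) and (AR) (or (AF)) --- follows from conditions a) and b) together with the hypotheses of \cref{prop:PC}. Condition \eqref{eq:PC} is immediate: the hypotheses of \cref{prop:PC} are precisely ``protocol of type \eqref{eq:gen2step}'' plus ``one of a), b), c)'', so \cref{prop:PC} applies verbatim. For (C), I would write the vector field out from the mother equation \eqref{eq:mother} in the form $V^F_i(x)=\dot x_i = x_i\sum_{k\neq i}x_k\bigl(\lambda_{ki}r_{ki}-\lambda_{ik}r_{ik}\bigr)$; since each $\lambda_{ij}$ and $r_{ij}$ is continuous in $(F,x)$ and Lipschitz in $x$ (condition a)) and bounded on the compact simplex, finite sums and products of such functions are again jointly continuous and Lipschitz in $x$, and $\rho_{ij}(F,\cdot)=x_j\lambda_{ij}r_{ij}$ is continuous in $x$; hence (C).

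For (Im), fix an interior state $x$ that is not a Nash equilibrium, so $\bar F(x)<\max_k F_k(x)$ and, since all strategies are in the support, there are strategies of strictly above-average and of strictly below-average payoff. I would then do a two-line case analysis on each strategy $i$: in case a) of \cref{prop:PC}, if $F_i(x)<\max_k F_k(x)$ pick $j$ with $F_j(x)>F_i(x)$, so $r_{ij}>0$, and then $\rho_{ij}=x_j\lambda_{ij}r_{ij}>0$ because $\lambda_{ij}>0$ (condition b)) and $x_j>0$; otherwise $F_i(x)=\max_k F_k(x)$, some $i'$ has $F_{i'}(x)<F_i(x)$, and $\rho_{i'i}>0$. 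In cases b) and c) the same dichotomy works with $\max_k F_k(x)$ replaced by $\bar F(x)$ and using the sign hypotheses on $r_j$ (resp.\ $r_i$). Either way every strategy imitates or is imitated, i.e.\ (Im) holds.

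The crux is (AR)/(AF). From \eqref{eq:mother}, $\dot x_i/x_i=g_i:=\sum_{k\neq i}x_k(\lambda_{ki}r_{ki}-\lambda_{ik}r_{ik})$. Let $i,j$ be twins. Since step 2 treats twins identically, $r_{ki}=r_{kj}$ and $r_{ik}=r_{jk}$ for $k\notin\{i,j\}$ and $r_{ij}=r_{ji}$; substituting and splitting off the $k=j$ term of $g_i$ and the $k=i$ term of $g_j$ gives
\[
g_i-g_j=(x_i+x_j)(\lambda_{ji}-\lambda_{ij})\,r_{ij}+\sum_{k\notin\{i,j\}}x_k\bigl[(\lambda_{ki}-\lambda_{kj})\,r_{ki}-(\lambda_{ik}-\lambda_{jk})\,r_{ik}\bigr].
\]
Assume $x_i<x_j$ and that step 1 favours rare strategies (\cref{def:adv}). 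Then $\lambda_{ji}-\lambda_{ij}\geq0$, $\lambda_{ki}-\lambda_{kj}\geq0$ and $\lambda_{jk}-\lambda_{ik}\geq0$, while all adoption rates are $\geq0$ (as one checks in each of the three cases of \cref{prop:PC}) and all $x_k\geq0$; hence $g_i-g_j\geq0$, which is the weak part of (AR). For strictness, note that ``step 2 treats twins identically'' makes the statement ``$i$ is imitated by some strategy'' equivalent to the same statement for $j$, and likewise for ``imitates some strategy''. If rare strategies are selected \emph{strictly} more often, then $\lambda_{ji}-\lambda_{ij}>0$ and $\lambda_{ki}-\lambda_{kj}>0$, so the displayed difference is $>0$ as soon as $r_{ij}>0$ or $r_{ki}>0$ for some $k\notin\{i,j\}$ --- which, at an interior state with $\lambda>0$, is exactly ``$i$ (equivalently $j$) is imitated by some strategy'': this gives (AR2). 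If instead rare strategies select other strategies \emph{strictly} less often, then $\lambda_{ji}-\lambda_{ij}>0$ and $\lambda_{jk}-\lambda_{ik}>0$, so the difference is $>0$ as soon as $r_{ij}>0$ or $r_{ik}>0$ for some $k\notin\{i,j\}$, i.e.\ as soon as ``$i$ (equivalently $j$) imitates some strategy'': this gives (AR1). So (AR) holds; (AF) is the same computation with $x_i>x_j$ and the roles of the two twins exchanged.

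I expect the algebra behind the displayed identity for $g_i-g_j$ to be the one point requiring care --- keeping track of which $\lambda$-difference multiplies which adoption rate, and of all the signs --- together with the observation that twin-symmetry of step 2 is precisely what aligns the two possible ``strict'' halves of ``step 1 favours rare strategies'' with the alternatives (AR1) and (AR2). Everything else --- \eqref{eq:PC}, (C), (Im) --- is routine bookkeeping.
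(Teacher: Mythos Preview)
Your proposal is correct and follows essentially the same approach as the paper's proof: verify \eqref{eq:PC} and (Im) from the hypotheses of \cref{prop:PC}, (C) from condition a), and then derive (AR)/(AF) from condition b) via the identity
\[
\frac{\dot x_i}{x_i}-\frac{\dot x_j}{x_j}
=(x_i+x_j)(\lambda_{ji}-\lambda_{ij})\,r_{ij}
+\sum_{k\notin\{i,j\}}x_k\bigl[(\lambda_{ki}-\lambda_{kj})\,r_{ki}+(\lambda_{jk}-\lambda_{ik})\,r_{ik}\bigr],
\]
splitting into the two strict halves of \cref{def:adv} to obtain (AR2) and (AR1) respectively. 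Your treatment of (C) and (Im) is more explicit than the paper's (which dispatches them in one sentence), but the key computation and the case analysis for (AR)/(AF) are identical.
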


\begin{proof}
The conditions of \cref{prop:PC} imply \eqref{eq:PC} and (Im), as would any protocol based on adoption rates $r_{ij}$ with the same sign as $[F_j - F_i]_+$, or $[F_j - \bar{F}]_+$. Assumption a) implies (C). It remains to show that b) implies (AR) (or, respectively, (AF)). Let $i$ and $k$ be twin strategies.
We let the reader check that, since step 2 treats twins identically:
\begin{equation*}
\frac{\dot{x}_i }{x_i} - \frac{\dot{x}_j }{x_j}= \sum_{k \notin \{i, j\}} x_k r_{ki} (\lambda_{ki} - \lambda_{kj})  + \sum_{k \notin \{i, j\}} x_k r_{ik} (\lambda_{jk}- \lambda_{ik})  +  r_{ij} (x_j + x_i) [\lambda_{ji} -  \lambda_{ij} ]
\end{equation*}
Moreover, again because step 2 treats twins identically, the assumption in (AR1) that at least one of the strategies $i$ and $j$ imitates (or, in (AR2), is imitated by) other strategies boils down to the fact that this holds for strategy $i$.
Now assume that $x_i < x_j$ and that step 1 favours rare strategies. Then all three terms in the RHS are nonnegative. There are two cases.

\vspace*{4pt}\noindent\textbf{Case 1.} If rare strategies are more often selected at step 1. Then $\lambda_{ki} > \lambda_{kj}$ for all $k \notin \{ i, j\}$, and $\lambda_{ji} >  \lambda_{ij}$.
Provided that strategy $i$ is imitated by other strategies, it follows that the first or the third term, hence the whole RHS, is positive. Therefore (AR2) holds, hence (AR) holds.

\vspace*{4pt}\noindent\textbf{Case 2.} Otherwise, rare strategies select other strategies less often. The second or third term in the RHS are then positive, provided that strategy $i$ imitates other strategies. Therefore (AR1) holds, hence (AR) holds as well.

Similarly, if step 1 favours frequent strategies, condition (AF)  is satisfied. This concludes the proof.
\end{proof}

\section{Proof of \cref{th:hypno}}
\label{sec:proof}
The proof combines ideas of the proofs of Hofbauer and Sandholm's (2011) Theorems 1 and 2. As in their Theorem 2, the game used is the hypnodisk game with a feeble twin. As in their Theorem 1, in the case of an advantage to rarity, the shares of strategies that always earn the same payoff tend to become equal.

\subsection{The game} We first briefly recall the construction of the hypnodisk game with a feeble twin (see also Figures 5, 6, 7 in Hofbauer and Sandholm). The construction has three steps. Below, $X$ may denote the simplex of a game with three or four strategies, depending on the context.

\vspace*{4pt}\noindent\textbf{Step 1. The hypnodisk game.} The hypnodisk game is a 3-strategy game, with nonlinear payoffs: it is not the mixed extension of a finite game. It may be seen as a generalization of Rock-Paper-Scissors, in that it generates cyclic dynamics for any dynamics satisfying Positive Correlation. Its payoff function will be denoted by $H$. We refer to Hofbauer and Sandholm for a precise definition and analysis of this game. The important properties are the following:

a) there is a unique Nash equilibrium $p = (1/3, 1/3, 1/3)$.

b) there exist two reals numbers $r$ and $R$ with $0 < r < R < 1/\sqrt{6}$ such that: within the disk of center $p$ and radius $r$, the payoffs are as in a coordination game: $H_i(x) = x_i$; outside of the disk of center $p$ and radius $R$, the payoffs are as in an anti-coordination game: $H_i(x) = -x_i$. These disks will be denoted by $D_r = \{x \in X, || x - p||_2 < r\}$ and $D_R = \{x \in X, || x - p||_2 \leq R\}$.\footnote{We define $D_r$ as an open disk so that the annular region $D_R \backslash D_r$ is closed.}

c) In the annular region with radii $r$ and $R$, the payoffs are defined in a way that preserves the regularity of the payoff function.

d) The radii $r$ and $R$ may be chosen arbitrarily small if useful.

The payoff function $F$ is a map from $X \subset \mathbb{R}^3$ to $\mathbb{R}^3$ and may be seen as a vector field. Property b) implies that the projection of this payoff vector field on the affine span of the simplex points towards the equilibrium outside of the larger disk $D_R$, and away from the equilibrium within the smaller disk $D_r$ (except precisely at the equilibrium).\footnote{The idea to preserve the regularity of the payoff function, i.e.,  property c), is to rotate continuously (the projection of) the payoff vector field so that it rotates by 180 degrees in total in the annular region, see Hofbauer and Sandholm.} Moreover, the geometric interpretation of condition \eqref{eq:PC} is that, except at population equilibria, the payoff vector field, or equivalently, its projection on the affine span of the simplex, makes an acute angle with the dynamics' vector field $V^F$. It follows that in the hypnodisk game, for any dynamics satisfying \eqref{eq:PC} and any interior initial condition different from the Nash equilibrium, the solution eventually enters the annulus region with radii $r$ and $R$ and never leaves (Hofbauer and Sandholm, Lemma 3).

A similar construction could be made but putting the unique equilibrium at any desired place in the interior of the simplex instead of the barycenter.\footnote{\label{ft20} The disks $D_r$ and $D_R$ would then surround the equilibrium and the projected payoff vector field would point towards the equilibrium outside of the larger disk $D_R$, and away from it inside of the smaller disk $D_r$.  This is the case for instance if $H_i(x) = p_i - x_i$ outside $D_R$ and $H_i(x)=x_i - p_i$ inside $D_r$, where $p$ is the equilibrium.}

\vspace*{4pt}\noindent\textbf{Step 2. Adding a twin.} Let us now add a fourth strategy that is a twin of the third. This leads to a 4-strategy game, which is called the hypnodisk game with a twin. Its payoff function $F$ satisfies: for any $x$ in $X$, $F_i(x) = H_i(x_1, x_2, x_3 + x_4)$ for $i=1,2,3$ and $F_4(x)= F_3(x)$. There is now a segment of Nash equilibria:
$$\mathrm{NE}=\{x \in X, (x_1, x_2, x_3 + x_4) = (1/3, 1/3, 1/3)\}.$$
The disks $D_r$ and $D_R$ become intersections of cylinders and of the simplex, which are denoted by $I$ and $O$ (for Inner and Outer cylinders):
$$I = \{x \in X, (x_1, x_2, x_3 + x_4) \in D_r\}; \quad O = \{x \in X, (x_1, x_2, x_3 + x_4) \in D_R\}.$$
The annular area with radii $r$ and $R$ becomes the intercylinder region
\[D= O\backslash I =  \{x \in X, r^2 \leq (x_1- 1/3)^2 + (x_2- 1/3)^2 + (x_3 + x_4 - 1/3)^2 \leq R^2\}.\]
For any dynamics satisfying (C) and \eqref{eq:PC} and any interior initial condition not in $\mathrm{NE}$, the solution eventually enters this intercylinder zone, and then never leaves it (Hofbauer and Sandholm, Lemma 4):
$\exists T, \forall t \geq T, x(t) \in D$.

\vspace*{4pt}\noindent\textbf{Step 3. The feeble twin.} We now subtract $\ep>0$ from the payoffs of strategy $4$, so that it is now dominated by strategy $3$. This leads to the hypnodisk game with a feeble twin, which we denote by $\Gamma_{\ep}$.

\subsection{Sketch of proof of \cref{th:hypno}} Before providing a formal proof, we describe its logic. Consider first the hypnodisk game with an exact twin $\Gamma_0$. In the case of an advantage to rare strategies, the shares of strategy $3$ and $4$ tend to become equal. As a result, for any interior initial condition, solutions converge to an attractor $A$ which is contained in the intersection of the intercylinder region $D$ and the plane $x_3=x_4$. In this attractor, $\liminf x_4 \geq \frac{1}{6}- \frac{R}{\sqrt{6}}$. Because the vector field of the dynamics is jointly continuous in $(F, x)$, for $\ep> 0$ small enough, there is an attractor $A^{\ep}$ included in an arbitrarily small neighborhood of $A$, and whose basin of attraction is at least the old basin of attraction minus an arbitrarily small neighborhood of the union of the segment of $\mathrm{NE}$ and of the boundary of the simplex. It follows that for most initial conditions, $\liminf x_4 \geq  \frac{1}{6}- \frac{R}{\sqrt{6}} - \delta(\ep)$, with $\delta(\ep) \to 0$ as $\ep \to 0$.
Thus, if we fix any $\eta >0$, for $R$ and $\ep$ small enough, $\liminf x_4 \geq \frac{1}{6}- \eta$. We can get an ever larger value of $\liminf x_4$ with the same construction and proof, just replacing the standard hypnodisk game by a variant with unique equilibrium $(\beta, \beta, 1- 2\beta)$, see footnote \ref{ft20}.
We then get for $\beta$, $R$ and $\ep$ small enough, $\liminf x_4 \geq \frac{1}{2}- \eta$.\footnote{We thank Vianney Perchet for pointing this out to us.}

The case of an advantage to frequent strategies is similar, with some twists. Now in $\Gamma_0$, for any interior initial condition with $x_4 >  x_3$, the solution converges to an attractor $A'$ included in the intersection of the intercylinder region $D$ and of the plane $x_3=0$. In $\Gamma_{\ep}$, for $\ep$ small enough, there is an attractor included in an arbitrarily small neighborhood of $A'$, and whose basin of attraction is at least the basin of attraction of $A'$ minus a zone with an arbitrarily small Lebesgue measure. This allows to show that, for any $\eta>0$, we may find a game such that for many initial conditions (including all initial conditions such that $x_4 > x_3 + \eta$ and $x$ is not in the $\eta$-neighborhood of the union of the segment of Nash equilibrium and of the boundary of the simplex), for $\ep$ and $R$ small enough, $\liminf x_4 \geq 1/3 - \eta$.
By changing the equilibrium of the initial hypnodisk game, we get $\liminf x_4 \geq 1 -  \eta$.

\subsection{Formal proof of \cref{th:hypno}}
We now provide a formal proof. To fix ideas, let us assume that (AR) holds, and that the advantage to rarity is strict when at least one of the twin strategies
imitate other strategies (condition (AR1)). Other cases are similar. Consider game $\Gamma_0$ and fix an interior initial condition $x(0) \in \mathrm{NE}$.
As in Hofbauer and Sandholm, Lemma 4, we first obtain:
\begin{claim}
\label{cl:IC}
There exists a time $T$ such that for all $t \geq T$, $x(t)$ is in the intercylinder region $D$.
\end{claim}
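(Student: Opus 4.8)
The plan is to reprove \cref{cl:IC} by the Lyapunov argument underlying Hofbauer and Sandholm's Lemma~4, adapted to our condition \eqref{eq:PC}. Write $\pi(x)=(x_1,x_2,x_3+x_4)$ and $p=(1/3,1/3,1/3)$, and set $\psi(x)=\|\pi(x)-p\|_2^2$, so that $I=\{\psi<r^2\}$, $O=\{\psi\leq R^2\}$ and $D=\{r^2\leq\psi\leq R^2\}$. Since $F_i=H_i\circ\pi$ for $i=1,2,3$ and $F_4=F_3$, one has the identity $\sum_i F_i\dot x_i=\langle\pi(\dot x),H(\pi(x))\rangle$; combined with $\langle\pi(\dot x),\mathbf 1\rangle=0$ (hence $\langle\pi(\dot x),p\rangle=0$), this yields
\begin{equation*}
\dot\psi=2\langle\pi(\dot x),\pi(x)-p\rangle=\begin{cases}\ \ 2\sum_i F_i\dot x_i&\text{on }\{0<\psi<r^2\}\ (\text{where }H=\mathrm{id}),\\[2pt]-2\sum_i F_i\dot x_i&\text{on }\{\psi>R^2\}\ (\text{where }H=-\mathrm{id}).\end{cases}
\end{equation*}
This is the only genuine computation; the rest is a dynamical-systems argument on the sign of $\dot\psi$.

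Next I would localise the population equilibria of $\Gamma_0$. In the open coordination region $\{0<\psi<r^2\}$, equality of the payoffs of all strategies present at an interior point forces $\pi(x)=p$, i.e.\ $\psi=0$, so this region (and likewise every boundary point with $0<\psi<r^2$) contains no population equilibrium. In the anti-coordination region $\{\psi>R^2\}$ the only population equilibria are uniform distributions over proper subsets of strategies, all lying on $\partial X$; and for each such $\xi$ the payoff at $\xi$ of every strategy outside $\mathrm{supp}(\xi)$ strictly exceeds the average payoff at $\xi$, so that \eqref{eq:PC}, the continuity of $V^F$ and the forward-invariance of the faces $\{x_i=0\}$ (the dynamics being imitative) preclude an interior orbit from accumulating at $\xi$ — one may also use here that on any face the dynamics still satisfy \eqref{eq:PC} and thus trap orbits in the corresponding annulus, by induction on the dimension. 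Consequently, along the orbit issued from $x(0)\in\mathrm{int}(X)\setminus\mathrm{NE}$ we get $\dot\psi<0$ whenever $\psi>R^2$ and $\dot\psi>0$ whenever $0<\psi<r^2$.

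From $\dot\psi<0$ just above $R^2$ and continuity, $\psi$ can never cross the level $R^2$ upward, so $O$ is forward-invariant; and since the orbit cannot accumulate on the boundary population equilibria with $\psi>R^2$, the monotone decrease of $\psi$ on $\{\psi>R^2\}$ drives the orbit into $O$ in finite time. Symmetrically, $\dot\psi>0$ on $\{0<\psi<r^2\}$ — bounded away from $0$ on compact subsets — shows that the orbit cannot re-enter $I$ once it has left it, that it leaves $I$ in finite time if $x(0)\in I$, and in particular that it never reaches $\mathrm{NE}=\{\psi=0\}$. Putting these together gives a time $T$ with $x(t)\in O\setminus I=D$ for all $t\geq T$. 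The step I expect to be the main obstacle is exactly the exclusion of lingering outside $O$: Hofbauer and Sandholm obtain it for free from the Innovation property (an unplayed better reply must enter, pushing the orbit away from $\partial X$), whereas our imitative dynamics lack Innovation, so the boundary behaviour has to be controlled directly from \eqref{eq:PC} and (Im) together with the invariance of the faces of the simplex.
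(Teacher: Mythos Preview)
Your approach is essentially the same as the paper's: the paper also uses the Euclidean distance $W$ to the Nash-equilibrium segment (your $\psi=W^2$) as a Lyapunov function, observes via \eqref{eq:PC} that it increases in the inner cylinder and decreases outside the outer one, and derives a contradiction from the LaSalle principle if the orbit never enters $D$. The paper's proof simply asserts that accumulation points ``cannot be on $\mathrm{NE}\cup\mathrm{Bd}(X)$'' without justification, so your explicit identification of the boundary exclusion as the delicate step --- and your sketched remedies via face-invariance, (Im), and induction on the dimension of the face --- actually go further than what the paper provides at this point.
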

\begin{proof} Since Hofbauer and Sandholm do not provide a formal proof, we do it here. Due to condition (PC'), the vector field $V^F(x)$ at the boundary of region $D$ points inwards, it follows that once solutions enter region $D$, they cannot leave it. By contradiction, assume that this is never the case, that is, the solution remains in the compact set $K = X\backslash \mathrm{int}(D)$, where $\mathrm{int}(D)$ denotes the relative interior of $D$. It follows that the solution has accumulation points in $K$, which cannot be on  $\mathrm{NE} \cup \mathrm{\mathrm{Bd}}(X)$. Moreover, the Euclidean distance $W(x)$ to the segment of Nash equilibria evolves monotonically (it increases within inner cylinder $I$ and decreases outside outer cylinder $O$). By a standard result on Lyapunov functions, all such accumulation points $x^{\ast}$ satisfy $\nabla W(x^*) \cdot F(x^*)=0$ (thus, if at time $t$, $x(t) = x^*$, then $dW(x(t))/dt=0$). But by construction, there are no such points in $K \backslash (\mathrm{NE} \cup \mathrm{Bd}(X))$, a contradiction.
\end{proof}
Moreover, as in Theorem 1 of Hofbauer and Sandholm:
\begin{claim}
\label{cl:equal}
$x_4(t)/x_3(t) \to 1$ as $t \to+\infty$.
\end{claim}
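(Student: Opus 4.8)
\emph{Setup and reduction.}
The plan is to control the log-ratio of the two twin shares and combine a monotonicity property with an $\omega$-limit set argument. Fix an interior initial condition $x(0)\notin\mathrm{NE}$ for $\Gamma_0$. Since $V^F$ comes from the mother equation \eqref{eq:mother} with continuous, nonnegative rates, we have $\dot x_i\ge -x_i\sum_{j\neq i}\rho_{ij}(F,x)\ge -Mx_i$ on the compact simplex for a finite constant $M$, so Grönwall gives $x_i(t)\ge x_i(0)e^{-Mt}>0$ for all $i$ and all $t\ge 0$; in particular $x_3,x_4>0$ along the whole solution and $g(t):=\ln x_3(t)-\ln x_4(t)$ is well defined. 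Because $3$ and $4$ are twins and the solution is interior, (AR) applied to the pair $\{3,4\}$ gives $\dot g\le 0$ when $x_4<x_3$ (i.e.\ $g>0$) and $\dot g\ge 0$ when $x_3<x_4$; hence $|g|$ is nonincreasing, $g$ cannot change sign (a brief argument shows that if $g$ vanishes at some time it vanishes thereafter), and so $g(t)\to g_\infty$ with $|g_\infty|\le|g(0)|$. Thus $x_4(t)/x_3(t)\to e^{-g_\infty}$, and it suffices to rule out $g_\infty\neq 0$. Replacing $3$ by $4$ if necessary, assume $g_\infty>0$; then $g(t)\ge g_\infty>0$ and hence $x_3(t)>x_4(t)$ for all $t$.

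\emph{Locating the $\omega$-limit set.}
Let $\omega:=\omega(x(0))$. By \cref{cl:IC} (and $R<1/\sqrt 6$), $\omega\subseteq D$, so $\omega$ is a nonempty compact invariant set, satisfies $x_1,x_2\ge\tfrac13-R>0$ and $x_3+x_4\ge\tfrac13-R>0$ at each of its points, and is disjoint from $\mathrm{NE}$ (which lies in the inner region $I$). Since $g$ is continuous on $\mathrm{int}(X)$ and $g(t)\to g_\infty$ along the solution, $g\equiv g_\infty$ on $\omega$; as $g_\infty<\infty$ this forces $x_3/x_4=e^{g_\infty}$ there, and combined with $x_3+x_4\ge\tfrac13-R$ it bounds both $x_3$ and $x_4$ below by a positive constant on $\omega$. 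Hence $\omega\subseteq\mathrm{int}(X)$, with $x_4<x_3$ everywhere on $\omega$.

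\emph{Forcing strictly positive inflow, and the contradiction.}
By invariance, every solution contained in $\omega$ has $g\equiv g_\infty$, hence $\dot g\equiv 0$, i.e.\ $\dot x_3/x_3=\dot x_4/x_4$ at each $y\in\omega$. If strategy $3$ or strategy $4$ imitated other strategies at some $y\in\omega$, then (AR1) together with $x_4(y)<x_3(y)$ would yield the \emph{strict} inequality $\dot x_4/x_4>\dot x_3/x_3$ at $y$, a contradiction; so neither twin imitates other strategies anywhere on $\omega$. Since $\omega\subseteq\mathrm{int}(X)\setminus\mathrm{NE}$, condition (Im) then forces each of $3$ and $4$ to be imitated by other strategies at every $y\in\omega$: there are $k\neq 3$ and $l\neq 4$ with $\rho_{k3}(F,y)>0$ and $\rho_{l4}(F,y)>0$. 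As the outflow term $x_3\sum_{j\neq 3}\rho_{3j}$ vanishes on $\omega$ (strategy $3$ never imitates there), the mother equation gives $\dot x_3(y)=\sum_{j\neq 3}x_j(y)\rho_{j3}(F,y)\ge x_k(y)\rho_{k3}(F,y)>0$, and likewise $\dot x_4(y)>0$, for every $y\in\omega$. Now $\dot x_3=V_3^F$ is continuous and $\omega$ is compact, so $\mu:=\min_{\omega}\dot x_3>0$; but then along any solution lying in $\omega$ we get $\tfrac{d}{dt}x_3(t)\ge\mu$, forcing $x_3(t)\to+\infty$ and contradicting $x_3\le 1$ on $X$. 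Hence $g_\infty=0$, i.e.\ $x_4(t)/x_3(t)\to 1$. The symmetric case $g_\infty<0$, and the remaining variants announced by ``other cases are similar'' (use ``imitated by other strategies'' in place of ``imitates other strategies'' under (AR2), and reverse the frequency inequality under (AF)), are treated identically.

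\emph{Expected main obstacle.}
Everything up to and including the convergence of $g$ is routine; the delicate part is the last step, namely upgrading the soft conclusion ``the ratio stabilises'' into the quantitative one ``$\dot x_3$ is bounded below by a positive constant on $\omega$''. This needs three ingredients lined up carefully: (i) that $\omega$ sits strictly inside the simplex, which is precisely why \cref{cl:IC} and the cylinder geometry are invoked and why one must rule out $\omega$ touching the face $\{x_3=x_4=0\}$; (ii) the correct use of (AR1) --- on the limiting motion the twins' per-capita growth rates coincide, which is incompatible with either twin being a \emph{source} of imitation, so (Im) channels the whole imitation flow into inflows; and (iii) compactness of $\omega$, to pass from ``$\dot x_3>0$ pointwise'' to ``$\dot x_3\ge\mu>0$''.
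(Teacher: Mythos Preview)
Your proof is correct and follows essentially the same approach as the paper's: both use the ratio (the paper) or log-ratio (you) as a monotone quantity, assume the limit is not $1$, use (AR1) to show neither twin imitates on the limit set, invoke (Im) to force both twins to be imitated there, and then obtain a strictly positive growth rate that contradicts boundedness. The only cosmetic differences are that the paper derives the contradiction via ``the solution must escape a neighborhood of $K$, yet all accumulation points lie in $K$'' using $\dot x_3+\dot x_4>\varepsilon$, whereas you argue directly from invariance of $\omega$ and $\dot x_3\ge\mu>0$ along a solution contained in $\omega$; and your justification that $\omega\subset\mathrm{int}(X)$ is written slightly out of order (the lower bounds on $x_3,x_4$ should be derived along the trajectory from $g(t)\in[g_\infty,g(0)]$ and $x_3+x_4\ge\tfrac13-R$ before invoking continuity of $g$ on $\omega$), but the content is the same.
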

\begin{proof}

Let $V(x) = x_4/x_3$ and let $\dot V(x) = \nabla V(x) \cdot F(x)$ so that $\mathrm{\frac{d}{dt}} V(x(t)) = \dot{V}(x(t))$.
Due to condition (AR), $V(x(t))$ evolves (weakly) monotonically in the direction of $1$. Thus, assuming to fix ideas $x_4(0) < x_3(0)$, $V(x(t))$ is increasing and less than $1$, hence has a limit $l$ such that $V(x(0)) \leq l \leq 1$. Assume by contradiction that $l < 1$.
Let $K_i= \{x \in X \, | \, \rho_{ik}=0, \forall k \neq i\}$ be the set of population states at which strategy $i$ does not imitate any other strategy. Let
$$K = K_3 \cap K_4 \cap D \cap \{x \in X, x_4 = l x_3\}.$$
Note that $K$ is compact (by Continuity) and contained in the interior of the simplex (since in $D$, $x_1>0$, $x_2>0$, $x_3+ x_4>0$, and $l \neq 0$). We want to show that the solution cannot stay in $K$ forever. For any population state in $K$, strategies $3$ and $4$ do not imitate other strategies. Moreover, the state is not an equilibrium. So by Imitation, strategies $3$ and $4$ are imitated. Therefore, $\dot{x}_3 + \dot{x}_4 > 0$.
By Continuity and compactness of $K$, there exists $\ep >0$ and an open neighborhood $U$ of $K$ such that, whenever $x(t) \in U \cap X$, $\dot{x_3}+ \dot{x_4} > \ep$.
It follows that $x(t)$ cannot stay for ever in $U$, hence must have accumulation points in $X \backslash K$.

We now prove that this is impossible. Indeed, let $x^{\ast} \in X \backslash K$ be an accumulation point of $x(t)$.
Necessarily, $x^{\ast} \in D \cap \{x \in X \, | \, x_4 = l x_3\} \subset \mathrm{int}(X)$.
Moreover, by standard results on Lyapunov functions, $\dot{V}(x^{\ast})=0$. Since $x^{\ast} \in \mathrm{int}(X)$, it follows from (AR1) that $x^{\ast} \in K_3 \cap K_4$, so that $x^{\ast} \in K$. We thus get a contradiction.
This concludes the proof.
\end{proof}

Let $K_{\alpha}$ denote the compact set $X \backslash N_\alpha(\mathrm{NE} \cup \mathrm{Bd}(X))$, where $N_{\alpha}$ refers to the open $\alpha$-neighborhood for the Euclidean norm. Let $\ep \in (0, 1)$  and let $$U_{\ep} = \{x \in N_{\ep}(D), |x_4/x_3 - 1| < \ep\}.$$
Let $\Phi_t$ denote the time $t$ map of the flow; that is, $\Phi_t(x_0)$ is the value at time $t$ of the solution such that $x(0) = x_0$.
\begin{claim}
\label{cl:flow}
There exists $T$ such that for all $t \geq T$, $\Phi_t(K_{\alpha}) \subset U_{\ep}$.
\end{claim}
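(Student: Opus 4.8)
The plan is to upgrade the two pointwise Claims~\ref{cl:IC} and~\ref{cl:equal} to a statement uniform over the compact set $K_\alpha$, by combining continuous dependence of the flow on the initial condition with forward-invariance of $U_\ep$ (or of a convenient open subset of it). Note first that $K_\alpha\subset\mathrm{int}(X)\setminus\mathrm{NE}$: its points lie at Euclidean distance at least $\alpha$ from both $\mathrm{NE}$ and $\mathrm{Bd}(X)$, so Claims~\ref{cl:IC} and~\ref{cl:equal} apply to every solution issued from $K_\alpha$, and such solutions stay in $\mathrm{int}(X)$ since each face $\{x_k=0\}$ is invariant.

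I would first record the relevant forward-invariance. By the construction of the hypnodisk game together with \eqref{eq:PC}, the (projected) payoff field is radial outside the intercylinder region --- pointing towards $\mathrm{NE}$ outside the outer cylinder $O$, away from $\mathrm{NE}$ inside the inner cylinder $I$ --- and the dynamics' field makes an acute angle with it, so the Euclidean distance $W$ to $\mathrm{NE}$ is increasing on $I\cap\mathrm{int}(X)$ and decreasing on $(X\setminus O)\cap\mathrm{int}(X)$, exactly as in the proof of Claim~\ref{cl:IC}. This monotonicity traps interior solutions in any fixed $\ep$-neighbourhood $N_\ep(D)$ of $D$ (a solution with $W<r$ has $W$ increasing, one with $W>R$ has $W$ decreasing, and one with $r\le W\le R$ stays in $D$), so $N_\ep(D)$ is forward-invariant for interior solutions. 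On the other hand, (AR) applied to the twins $3$ and $4$ makes $x_4/x_3$ evolve weakly monotonically towards $1$ along interior solutions, so $|x_4/x_3-1|$ is nonincreasing and the set $\{x: x_3>0,\ |x_4/x_3-1|<\ep\}$ is forward-invariant for interior solutions, as in the proof of Claim~\ref{cl:equal}. Intersecting the two, $U_\ep$ is forward-invariant for interior solutions.

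Then I would run the compactness argument. For each $x_0\in K_\alpha$, Claims~\ref{cl:IC} and~\ref{cl:equal} give a finite time $\sigma(x_0)$ such that $\Phi_t(x_0)\in D$ and $|x_4/x_3-1|<\ep/2$ at $\Phi_t(x_0)$ for all $t\ge\sigma(x_0)$; in particular $\Phi_{\sigma(x_0)}(x_0)$ lies in the open set $G_{x_0}:=N_\ep(D)\cap\{x: x_3>0,\ |x_4/x_3-1|<\ep/2\}\subset U_\ep$. Since $V^F$ is Lipschitz, the time-$\sigma(x_0)$ map $\Phi_{\sigma(x_0)}$ is continuous, so $W(x_0):=\Phi_{\sigma(x_0)}^{-1}(G_{x_0})$ is an open neighbourhood of $x_0$; for any $x\in W(x_0)\cap\mathrm{int}(X)$ we then have $\Phi_{\sigma(x_0)}(x)\in G_{x_0}\subset U_\ep$, hence $\Phi_t(x)\in U_\ep$ for all $t\ge\sigma(x_0)$ by the forward-invariance above. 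The sets $W(x_0)$, $x_0\in K_\alpha$, cover the compact set $K_\alpha$; picking a finite subcover $W(x_0^1),\dots,W(x_0^m)$ and setting $T:=\max_k\sigma(x_0^k)$, every $x\in K_\alpha$ lies in some $W(x_0^k)$ with $\sigma(x_0^k)\le T$, whence $\Phi_t(x)\in U_\ep$ for all $t\ge T$, which is the desired conclusion.

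The one step I expect to require real care --- everything else being either Claim~\ref{cl:IC}, Claim~\ref{cl:equal}, or textbook ODE facts (continuous dependence, compactness) --- is the forward-invariance of the thickened region $N_\ep(D)$, i.e. that an interior solution passing near the intercylinder zone cannot escape past either cylinder; this is the only place the radial structure of the hypnodisk payoffs outside $D$ is used, and it is essentially the $\ep$-thickened analogue of Hofbauer and Sandholm's Lemma~4. If one prefers not to thicken $D$, the same geometry shows that $\mathrm{int}(D)$ is forward-invariant and absorbs solutions entering $D$, and one can instead arrange $\Phi_{\sigma(x_0)}(x_0)\in\mathrm{int}(D)$ and perturb within $D$ itself, landing perturbed orbits in the open set $\mathrm{int}(D)\cap\{x_3>0,\ |x_4/x_3-1|<\ep\}\subset U_\ep$.
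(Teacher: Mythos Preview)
Your proof is correct and uses the same ingredients as the paper's---forward-invariance of $U_\ep$, the pointwise Claims~\ref{cl:IC} and~\ref{cl:equal}, continuity of the flow, and compactness of $K_\alpha$---though you package them as a direct finite-cover argument while the paper runs the equivalent contradiction via sequential compactness. You also spell out the forward-invariance of $U_\ep$ more carefully than the paper, which simply asserts that ``the solution cannot leave $U_\ep$''.
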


\begin{proof} Since the solution cannot leave $U_{\ep}$ it suffices to show that there exists $T$ such that $\Phi_T(K_{\alpha}) \subset U_{\ep}$.
Assume that this is not the case.
Then we may find a increasing sequence of times $t_n \to +\infty$ and a sequence of positions $x_n \in K_{\alpha}$ such that $\Phi_{t_n}(x_n) \notin U_{\ep}$.
By compactness of $K_{\alpha}$, up to considering a subsequence, we may assume that $x_n$ converges towards some $x_{\lim}$ in $K_{\alpha}$.
But by the previous claims, there exists a time $\tau$ such that $\Phi_{\tau} (x_{\lim}) \in U_{\ep/2}$.
 By continuity of the flow, there exists a neighborhood $\Omega$ of $x_{\lim}$ such that $\Phi_{\tau}(\Omega) \subset U_{\ep}$, hence  $\Phi_{t}(\Omega) \subset U_{\ep}$ for all $t \geq \tau$, since solutions cannot leave $U_{\ep}$ in forward time. But for $n$ large enough, $t_n \geq \tau$, $x_n \in \Omega$ but $\phi_{t_n}(x_n) \notin U_{\ep}$, a contradiction.\end{proof}

We now need to define $\omega$-limits, attractors and basins of attraction.

\begin{definition}[$\omega$-limit]
The \emph{$\omega$-limit} of a set  $U \subset X$ is defined as $\omega(U) = \bigcap_{t > 0} \mathrm{cl}(\phi^{ [t, \infty) } (U))$,
where for $T \subset \mathbb{R}$,
we let $\phi^T(U) = \cup_{t \in T} \phi^t (U)$. If $x \in X$, we write $\omega(x)$ instead of  $\omega(\{x\})$.
\end{definition}

\begin{definition}[attractor and basin of attraction]
A set $A \subset X$ is an \emph{attractor} if there is a neighborhood $U$ of $A$ such that $\omega(U) = A$. Its \emph{basin of attraction} is then defined as $B(A) = \{x : \omega(x) \subseteq A\}$.
\end{definition}

\begin{claim}
\label{cl:attractor}
Fix $\alpha>0$ small enough.
Then $A= \omega(K_{\alpha})$ is an attractor, it is included in the intersection of the intercylinder zone D and the plane $x_3 = x_4$, and its basin of attraction is $B(A)= \mathrm{int}(X)\backslash \mathrm{NE}$.
\end{claim}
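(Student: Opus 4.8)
The plan is to read off the claim from \cref{cl:IC,cl:equal,cl:flow} together with elementary properties of $\omega$-limits, once the geometry of the target set $D\cap\{x_3=x_4\}$ is pinned down. First I would note that on $D$ one has $x_1>0$, $x_2>0$ and $x_3+x_4>0$: if any of these vanished, the squared Euclidean distance from $(x_1,x_2,x_3+x_4)$ to $p=(1/3,1/3,1/3)$ would be at least $1/6>R^2$, so the point could not lie in $O\supseteq D$. Since $x_3=x_4$ then forces $x_3=x_4=\tfrac12(x_3+x_4)>0$, the set $D\cap\{x_3=x_4\}$ is a compact subset of $\mathrm{int}(X)$, and it is disjoint from $\mathrm{NE}$, which sits at the centre inside the inner cylinder $I$ while $D=O\setminus I$. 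Thus $d_0:=\mathrm{dist}\bigl(D\cap\{x_3=x_4\},\ \mathrm{NE}\cup\mathrm{Bd}(X)\bigr)>0$, and the ``$\alpha$ small enough'' in the claim will mean $0<\alpha<d_0/2$. The inclusion $A\subseteq D\cap\{x_3=x_4\}$ is then immediate: by \cref{cl:flow}, for each $\ep\in(0,1)$ there is $T$ with $\Phi^{[T,\infty)}(K_\alpha)\subseteq U_\ep$, so $A=\omega(K_\alpha)\subseteq\mathrm{cl}(U_\ep)$, and $\bigcap_{\ep>0}\mathrm{cl}(U_\ep)=D\cap\{x_3=x_4\}$ (a point of the intersection lies in $D$ and cannot have vanishing third coordinate, since that would force the fourth to vanish as well, against $x_3+x_4>0$ on $D$, so it has $x_4/x_3$ arbitrarily close to $1$). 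Being a nested intersection of nonempty compacts, $A$ is nonempty and compact.

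Next I would show $\mathrm{int}(X)\setminus\mathrm{NE}\subseteq B(A)$. Given $x\in\mathrm{int}(X)\setminus\mathrm{NE}$, \cref{cl:IC} puts its orbit eventually in $D$ and \cref{cl:equal} eventually in $\{|x_4/x_3-1|<\ep_1\}$. Choosing $\ep_1$ small enough that the compact set $\{y\in D:y_3>0,\ |y_4/y_3-1|\le\ep_1\}$ — which decreases to $D\cap\{x_3=x_4\}$ as $\ep_1\downarrow0$ — is contained in $N_{d_0/2}(D\cap\{x_3=x_4\})\subseteq K_\alpha$, one gets $\Phi_T(x)\in K_\alpha$ for all large $T$. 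Hence $\omega(x)=\omega(\Phi_T(x))\subseteq\omega(K_\alpha)=A$, using monotonicity of $\omega$ and the time-shift identity $\omega(\Phi_T(\cdot))=\omega(\cdot)$, so $x\in B(A)$.

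It remains to check that $A$ is an attractor and that the basin is no larger. For the first, pick an open $W$ with $A\subseteq D\cap\{x_3=x_4\}\subseteq W$ and $\mathrm{cl}(W)\subseteq K_\alpha$ (possible because $D\cap\{x_3=x_4\}$ is compact and lies at distance $\ge d_0>\alpha$ from $\mathrm{NE}\cup\mathrm{Bd}(X)$, hence in $\mathrm{int}(K_\alpha)$). On one side $\omega(W)\subseteq\omega(\mathrm{cl}(W))\subseteq\omega(K_\alpha)=A$ by monotonicity; on the other, choosing $\ep$ with $U_\ep\subseteq W$ (again $\mathrm{cl}(U_\ep)\downarrow D\cap\{x_3=x_4\}\subseteq W$) and $T$ as in \cref{cl:flow}, we have $\Phi^{[T,\infty)}(K_\alpha)\subseteq U_\ep\subseteq W$, so $A=\omega(K_\alpha)=\omega(\Phi^{[T,\infty)}(K_\alpha))\subseteq\omega(U_\ep)\subseteq\omega(W)$, giving $\omega(W)=A$. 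For $B(A)\subseteq\mathrm{int}(X)\setminus\mathrm{NE}$: if $x\in\mathrm{Bd}(X)$ then, the faces of $X$ being invariant, $\omega(x)\subseteq\mathrm{Bd}(X)$ is nonempty and disjoint from $A\subseteq\mathrm{int}(X)$; and if $x\in\mathrm{NE}$ then $x$ is a rest point of $\Gamma_0$ (all payoffs coincide there, so the switching rates vanish), whence $\omega(x)=\{x\}\not\subseteq A$ since $A\cap\mathrm{NE}=\emptyset$. In either case $x\notin B(A)$.

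The dynamical content is carried by \cref{cl:IC,cl:equal,cl:flow} and the hypnodisk geometry, so the work left is mostly $\omega$-limit bookkeeping; the step I expect to need the most care is the second one — genuinely using the geometry to see that $D\cap\{x_3=x_4\}$ sits strictly inside the simplex and away from $\mathrm{NE}$, so that ``eventually in $D$ with $x_3\approx x_4$'' really is ``eventually in $K_\alpha$'', and then chaining the monotonicity and time-shift identities for $\omega$ so that $\omega(W)$ ends up squeezed between $A$ and $A$. The reverse basin inclusion is the only place that additionally relies on invariance of the faces of $X$ and on $\mathrm{NE}$ consisting of rest points of $\Gamma_0$, both clear for the dynamics at hand.
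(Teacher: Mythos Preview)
Your argument is correct and tracks the paper's own proof closely: the paper obtains the attractor property by observing from \cref{cl:flow} that $\Phi_t(K_\alpha)\subset\mathrm{int}(K_\alpha)$ for some $t$ and citing Appendix~A of Hofbauer--Sandholm, whereas you verify the definition directly by exhibiting a neighborhood $W$ with $\omega(W)=A$; both routes rest on \cref{cl:flow} and the geometric fact (which you make explicit and the paper leaves implicit) that $D\cap\{x_3=x_4\}$ sits strictly inside $K_\alpha$. One small caveat on the reverse basin inclusion: your assertion that ``all payoffs coincide there, so the switching rates vanish'' is correct for the payoff-comparison protocols of \cref{prop:PC} but is not a formal consequence of the abstract hypotheses \eqref{eq:PC}, (Im), (C), (AR) of \cref{th:hypno} alone --- the paper glosses over this as well, declaring the reverse inclusion ``obvious'', and in any case only the forward inclusion $\mathrm{int}(X)\setminus\mathrm{NE}\subseteq B(A)$ is used in the sequel.
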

\begin{proof} By \cref{cl:flow}, there exists a time $t >0$ such that $\phi_t(K_{\alpha}) \subset \mathrm{int}(K_{\alpha})$.
It follows (see Appendix A in Hofbauer and Sandholm) that $A$ is an attractor.
 By letting $\ep$ go to zero in \cref{cl:flow}, we obtain that
 $$A \subset \cap_{\ep >0} U_{\ep} = U_0 = D \cap \{x \in X : x_3 = x_4\}.$$
Finally, by \cref{cl:IC,cl:equal}, for all $x$ in $\mathrm{int}(X)\backslash \mathrm{NE}$, the solution starting in $x$ enters $K_{\alpha}$.
Therefore $\omega(x) \subset \omega(K_{\alpha}) =A$, hence  $(\mathrm{int}(X))\backslash \mathrm{NE} \subset B(A)$.
The reverse inclusion is obvious.
Note that $\omega(K_{\alpha})$ does not depend on $\alpha$ (as long as $\alpha$ is small enough).
\end{proof}

\begin{claim}  Call $\Gamma_{\ep}$ the hypnodisk game with an $\ep$-feeble twin. Let $\eta> 0$. For all $\ep>0$ small enough, in $\Gamma_{\ep}$, there is an attractor $A_{\ep} \subset N_{\eta}(A)$ whose basin of attraction includes $B(A) \backslash N_{\eta} (\mathrm{NE} \cup \mathrm{Bd}(X)) = X\backslash N_{\eta} (\mathrm{NE} \cup \mathrm{Bd}(X))$.
\end{claim}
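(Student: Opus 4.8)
The plan is the standard \emph{persistence of attractors under $C^0$-perturbation} argument: the attractor $A$ of the exact-twin game $\Gamma_0$ produced in \cref{cl:attractor} survives, slightly displaced, when the payoffs of strategy $4$ are lowered by $\ep$. Everything substantive about $\Gamma_0$ — that $A$ is an attractor, that $B(A)=\mathrm{int}(X)\setminus\mathrm{NE}$, and that $A\subset D\cap\{x_3=x_4\}$ — is already supplied by \cref{cl:IC,cl:equal,cl:flow,cl:attractor}; the present claim only has to transport these facts to small $\ep$, and the sole new ingredient for that is Continuity (C). Concretely, write $F^0$ for the payoff of $\Gamma_0$ and $F^\ep$ for that of $\Gamma_\ep$; they differ only in coordinate $4$, so $\|F^\ep-F^0\|=\ep$. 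By (C) the map $(F,x)\mapsto V^F(x)$ is jointly continuous and Lipschitz in $x$, hence $(\ep,x)\mapsto V^{F^\ep}(x)$ is jointly continuous on $[0,1]\times X$ and Lipschitz in $x$. The classical theorem on continuous dependence of ODE solutions on a parameter (Gronwall's inequality on a fixed interval $[0,T]$, using only the Lipschitz constant of $V^{F^0}$) then gives, for each fixed $t$, that the time-$t$ maps $\Phi^\ep_t$ of $\Gamma_\ep$ converge to $\Phi_t:=\Phi^0_t$ uniformly on the compact simplex $X$ as $\ep\to 0$. This is the only analytic input.

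\textbf{Step 2 (the trapping region survives, so $A_\ep$ exists).} Fix $\alpha\in(0,\eta]$ small enough that \cref{cl:attractor} applies, so $A\subset\mathrm{int}(K_\alpha)$, $\omega(K_\alpha)=A$, $B(A)=\mathrm{int}(X)\setminus\mathrm{NE}$, and $K_\eta\subseteq K_\alpha$. Choose the radius $\mu>0$ in the set $U_\mu=\{x\in N_\mu(D):|x_4/x_3-1|<\mu\}$ of \cref{cl:flow} so small that $\mathrm{cl}(U_\mu)\subset\mathrm{int}(K_\alpha)$ — legitimate since $U_\mu$ lies in an arbitrarily small neighbourhood of $D\cap\{x_3=x_4\}$, which is compactly contained in $\mathrm{int}(X)\setminus\mathrm{NE}$. \cref{cl:flow} then furnishes a time $T$ with $\Phi_t(K_\alpha)\subset U_\mu\subset\mathrm{int}(K_\alpha)$ for all $t\ge T$. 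Being an open inclusion of the compact set $\Phi_T(K_\alpha)$, this persists by Step 1: there is $\bar\ep>0$ such that $\Phi^\ep_T(K_\alpha)\subset\mathrm{int}(K_\alpha)$ for all $\ep\in(0,\bar\ep)$. Hence $K_\alpha$ is a trapping region for $\Gamma_\ep$, and by the attractor lemma recalled in Appendix A of Hofbauer and Sandholm, applied to the perturbed flow, $A_\ep:=\omega_\ep(K_\alpha)$ is an attractor of $\Gamma_\ep$ contained in $\mathrm{int}(K_\alpha)$ and invariant under $\Phi^\ep$.

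\textbf{Step 3 (location and basin).} Since $\omega(K_\alpha)=A$ and $K_\alpha$ is compact, there is $T'\ge T$ with $\Phi_t(K_\alpha)\subset N_{\eta/2}(A)$ for all $t\ge T'$; shrinking $\bar\ep$ via Step 1 gives $\Phi^\ep_{T'}(K_\alpha)\subset N_\eta(A)$, whence $A_\ep=\Phi^\ep_{T'}(A_\ep)\subset\Phi^\ep_{T'}(K_\alpha)\subset N_\eta(A)$ by invariance and $A_\ep\subseteq K_\alpha$ — the asserted location. For the basin, $K_\eta\subseteq K_\alpha$ gives $\Phi_T(K_\eta)\subset\mathrm{int}(K_\alpha)$, so after a further shrinking of $\bar\ep$ we get $\Phi^\ep_T(K_\eta)\subset\mathrm{int}(K_\alpha)$ for $\ep\in(0,\bar\ep)$; then every $x_0\in K_\eta$ has $\Phi^\ep_T(x_0)\in K_\alpha$, hence $\omega_\ep(x_0)\subseteq\omega_\ep(K_\alpha)=A_\ep$, i.e. $x_0\in B(A_\ep)$. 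Thus $B(A_\ep)\supseteq K_\eta=X\setminus N_\eta(\mathrm{NE}\cup\mathrm{Bd}(X))$, and this set equals $B(A)\setminus N_\eta(\mathrm{NE}\cup\mathrm{Bd}(X))$ because $B(A)=\mathrm{int}(X)\setminus\mathrm{NE}$ while $N_\eta(\mathrm{Bd}(X))\supseteq\mathrm{Bd}(X)$. (The finitely many shrinkings of $\bar\ep$ are harmless: all three persistence statements are open conditions on compact sets, so a single $\bar\ep>0$ works for all $\ep\in(0,\bar\ep)$.)

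\textbf{Main obstacle.} There is no deep difficulty; the work is in the bookkeeping. The one point requiring genuine care is the uniform convergence $\Phi^\ep_t\to\Phi_t$ on $X$ in Step 1 — this is precisely where (C) is used, and one should note that only the Lipschitz constant of $V^{F^0}$ enters the Gronwall estimate, so no Lipschitz bound uniform in $\ep$ is needed. The second thing to get right is the order of quantifiers: $\eta$ is given, then $\alpha\in(0,\eta]$ is chosen small, then $\mu$ and the times $T,T'$ (depending on $\alpha,\mu$), and only then $\bar\ep$. Finally, one should check — routinely, from property (b) of the hypnodisk game and $R<1/\sqrt{6}$ — that $D\cap\{x_3=x_4\}$, hence $A$, is compactly contained in $\mathrm{int}(X)\setminus\mathrm{NE}$; this is what keeps $N_\eta(A)$, $U_\mu$, $\mathrm{cl}(U_\mu)\subset\mathrm{int}(K_\alpha)$ all well inside the simplex and away from $\mathrm{NE}$, and it is also implicitly what makes the cited Hofbauer–Sandholm attractor lemma applicable to the perturbed flow.
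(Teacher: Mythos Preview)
Your proof is correct and is precisely the persistence-of-attractors argument the paper invokes: the paper's own proof is the single line ``This follows from \cref{cl:attractor} and Continuity, as in Hofbauer and Sandholm (2011),'' and you have simply spelled out that deferred argument in full detail. The approach is the same; you have just filled in the bookkeeping (trapping region, Gronwall, order of quantifiers) that the paper leaves to the cited reference.
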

\begin{proof} This follows from \cref{cl:attractor} and Continuity, as in Hofbauer and Sandholm (2011) \cite{5}.
\end{proof}

We now conclude: for $\ep$ small enough, from most initial conditions, solutions converge to an attractor along which $x_4$ is bounded away from zero. The minimum of $x_4$ along this attractor may be made higher than $1/6- R/\sqrt{6} - \eta$, where $R$ is the radius of the outer cylinder, which may be chosen arbitrarily small.
By taking as base game an hypnodisk game with an equilibrium such that $x_3$ is sufficiently close to $1$ (see footnote \ref{ft20}), we may transform $1/6$ in any number strictly smaller than $1/2$, and obtain $\liminf x_4 \geq 1/2 - \delta$ for any $\delta>0$ fixed beforehand.\footnote{For an advantage to frequent strategies, we get initially $\liminf x_4 \geq 1/3- R - \eta$ and then $\liminf x_4 \geq 1- \delta$.}

\section{Discussion}
\label{sec:disc}

\emph{The hypnodisk game. }
The hypnodisk game with a feeble twin is easy to analyze, and allows to prove survival results for large classes of dynamics. However, numerical simulations show that pure strategies strictly dominated by other pure strategies also survive in more standard games. \cref{FigC} illustrates imitation dynamics in a Rock-Paper-Scissors-Feeble Twin game for two different domination margins (the game is the same as in the numerical explorations of Hofbauer and Sandholm, Section 5.2):
\begin{equation}
\label{eq:RPST}
\begin{array}{c}
R \\ P \\ S \\ FT
\end{array}
\left(\begin{array}{cccc}
0 & -2 & 1 & 1 \\
1 & 0 & -2 & -2 \\
-2 & 1 & 0 & 0 \\
-2 - d & 1 - d & - d & -d
\end{array}\right)
\end{equation}
The dynamics are derived from a two-step protocol of form \eqref{eq:gen2step}, with a first step as in \cref{ex:other}
(trying to meet an agent playing another strategy),
with $m= 4$, and a second step based on payoff comparison: $r_{ij} = [F_j - F_i]_+$.
\begin{figure}
  \centering
    \includegraphics[width=0.48 \textwidth]{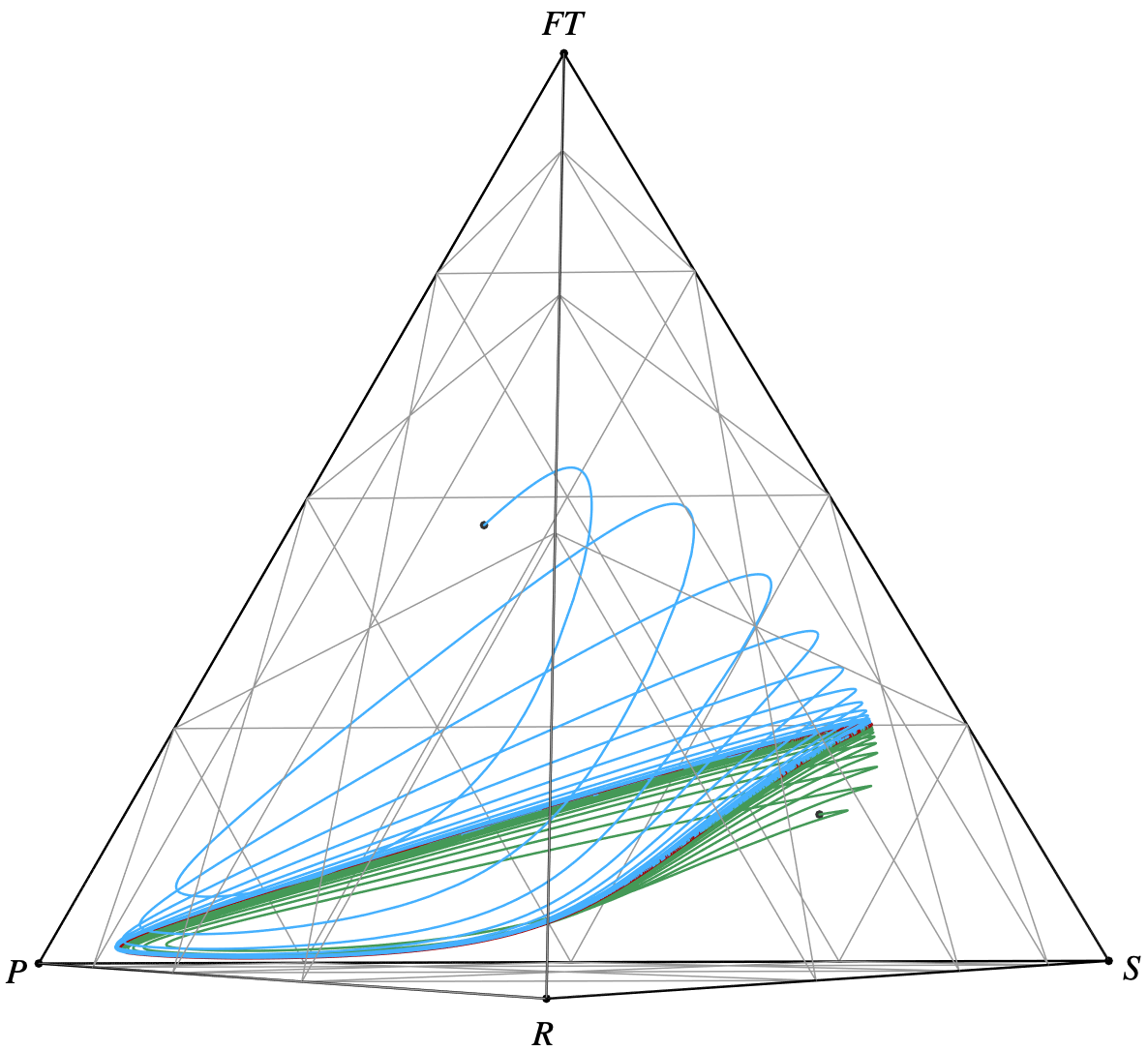}
    \hfill
    \includegraphics[width=0.48 \textwidth]{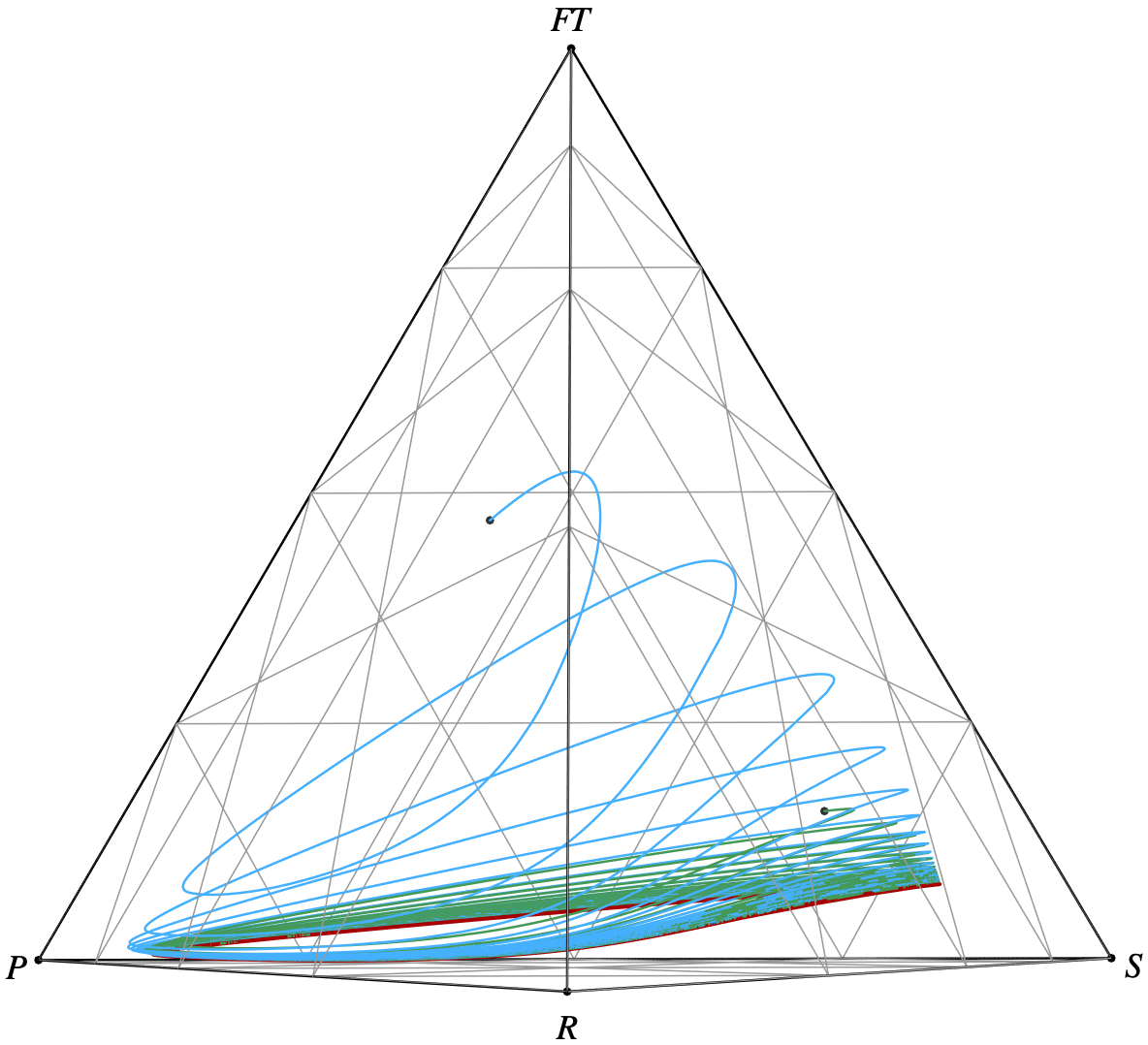}
  \caption{\textbf{Imitation dynamics in Game \eqref{eq:RPST}.} Left panel: $d = 0.04$; right panel: $d = 0.08$.
  In blue and green, two orbits starting respectively at $(1/7, 2/7, 1/7, 3/7)$ and $(1/7, 1/7, 4/7, 1/7)$. In red (hardly visible on the left), what appears to be a common limit cycle. The dynamics are described in the main text.}
  \label{FigC}
  \vspace{-2ex}
\end{figure}

\emph{Monotone dynamics. }
Monotone dynamics (or imitative dynamics, in the sense of Sandholm) have long been known to eliminate pure strategies strictly dominated by other pure strategies. With our vocabulary, this may be formulated as follows: in a two-step protocol of form \eqref{eq:gen2step}, if Step 1 is fair ($p_{ij} = x_j$) and Step 2 is monotonic (in the sense of Eq.\eqref{eq:monotonicity}), then pure strategies strictly dominated by other pure strategies go extinct. Obviously, if step 1 is fair but step 2 is not monotonic, there is no reason to expect dominated strategies to go extinct. What we showed is that, similarly, when step 2 is monotonic, but step 1 is not fair, dominated strategies may survive.

\emph{Elimination results are not robust. }
For imitative dynamics, the elimination of strictly dominated pure strategies in all games relies on the fact that two strategies with the same payoff have the same per capita growth rate. This condition is an equality, and contrary to strict inequalities, equalities are not robust to small perturbations. In a sense, Hofbauer and Sandholm show that the elimination result is not robust to the introduction of the possibility to innovate. We show that it is not robust either to perturbations of the imitation protocol (here, perturbations of the first step), even if the dynamics still model pure imitation. See also Section 5.3. in Hofbauer and Sandholm.

\emph{Inflow towards a dominated strategy. }
At all times, some of the agents quit playing the dominated strategy for the dominating one, or some currently even better strategy. So for the dominated strategy to survive, it is needed that, to compensate, some other strategies keep imitating it.
This can occur in two ways:
\begin{enumerate}
\item
If solutions converge to a rest-point, but there is nonetheless a perpetual flow between strategies. That is, rest-points correspond to a macroscopic equilibrium between inflow and outflow, not an absence of strategy changes at the micro level (\cref{sec:simple}). This is not the case for protocols based on standard payoff comparison.
\item
If solutions do not converge to a rest-point. This requires cycling dynamics. This is why survival examples in  \cref{sec:paycomp} are more elaborated than the perhaps surprisingly simple examples of \cref{sec:simple}. Simpler examples of survival of dominated strategies under imitation dynamics based on payoff comparison may be given if we consider a population of players playing against an opponent with an exogeneously cycling behavior: see \cref{app:unilateral}.
\end{enumerate}

\emph{From the replicator dynamics to the Smith dynamics. }
Consider again the protocol of \cref{ex:list} (making a list of strategies met), with a second step based on the proportional pairwise comparison rule,
$r_{ij} = [F_j - F_i]_+$. This revision protocol builds a bridge between the replicator dynamics and the Smith dynamics:
replicator dynamics are obtained for $m=1$ and the Smith dynamics (in the interior of the simplex) in the
limit $m \to +\infty$. This suggests that at least for this protocol and small values of $m$, survival of dominated
strategies will be more modest than with the Smith dynamics (lower domination level allowed, lower share of
the dominated strategy for a given domination level). This is what our preliminary numerical investigations also
suggest. A systematic investigation of these issues is left for future research.

\emph{Favouring frequent strategies. }
On the other hand, imitation protocols favouring frequent strategies allow for survival of dominated strategies at very high frequencies, much higher that with the Smith dynamics or other standard innovative dynamics.
Conceptually, an advantage to frequent strategies could be given in innovative dynamics (i.e., such that strategies initially not played may appear), by assuming a form of risk-aversion of agents who would only be willing to adopt rare or unused strategies if the payoff of these rare strategies seem substantially higher than the payoff of better known strategies. For a risk-averse agent, this can be a rational attitude if information on the payoff of other strategies is noisy, with a greater variance for rare strategies, on which less information is available.

Note also that there is a certain degree of similarity between modifying a fair imitation protocol into one that benefits frequent strategies and adding to the payoffs of the game those of a pure coordination game.\footnote{In both cases, assume we start with twin strategies in the base game (before adding the coordination component), and most of the population playing the second strategy, and then add an increasingly high bonus to the first strategy, making the second one dominated. Initially, agents keep playing the second strategy due to either the advantage to frequent strategies or the added coordination component, but when the bonus becomes large enough, they switch to the first strategy. If the bonus for the first strategy is then reduced, and even made slightly negative, agents will keep playing the first strategy \textendash\ a hysteresis effect.}

\numberwithin{lemma}{section}		
\numberwithin{corollary}{section}		
\numberwithin{proposition}{section}		
\numberwithin{equation}{section}		
\appendix

\section{Proofs of propositions on advantage to rare or frequent strategies}
\label{app:proofs}

In this section, the probability that a revising agent selects strategy $j$ at step 1 is independent of the revising agent's strategy, so we denote it by $p_j$ instead of $p_{ij}$.

\subsection{Meeting $m $ agents: Proof of \cref{prop:ex1}}

\begin{claim}
\label{cl:Bayes} It suffices to show that when $m$ is deterministic, then the first step is fair ($p_i=x_i$ for all $i$) for $m=1$ or $m=2$, and advantages rare strategies for any $m \geq 3$.
\end{claim}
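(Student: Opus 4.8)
The plan is to prove \cref{prop:ex1} by conditioning on the (bounded) number $m$ of agents met, i.e.\ to establish \cref{cl:Bayes}. So I would take as given the deterministic case: when exactly $m$ agents are met, the first step of \cref{ex:list} has the form $p_j(x) = x_j\,\lambda_j^{(m)}(x)$ with positive functions $\lambda_j^{(m)}$ satisfying $\lambda_j^{(1)}\equiv\lambda_j^{(2)}\equiv 1$ (fairness), and, for every $m\geq 3$ and every $x$, $x_j < x_k \Rightarrow \lambda_j^{(m)}(x) > \lambda_k^{(m)}(x)$. The goal is to deduce the statement of \cref{prop:ex1} for a general bounded $m$.

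First I would use the law of total probability. Since $m\leq M$ for some constant $M$, and the law of $m$ does not depend on the revising agent's strategy, the selection probability is the finite mixture
\[
p_j(x) \;=\; \sum_{m=1}^{M}\Pr(m)\,p_j^{(m)}(x)\;=\;x_j\sum_{m=1}^{M}\Pr(m)\,\lambda_j^{(m)}(x),
\]
so $p_j(x)=x_j\,\lambda_j(x)$ with $\lambda_j(x):=\sum_{m=1}^{M}\Pr(m)\,\lambda_j^{(m)}(x)$. Being a convex combination of positive functions, $\lambda_j$ is positive, and it inherits the regularity of the $\lambda_j^{(m)}$. Then, to get the advantage to rarity, I would fix $x$ and $(j,k)$ with $x_j<x_k$ and write
\[
\lambda_j(x)-\lambda_k(x)\;=\;\sum_{m=1}^{M}\Pr(m)\,\bigl(\lambda_j^{(m)}(x)-\lambda_k^{(m)}(x)\bigr).
\]
Every term with $m\in\{1,2\}$ vanishes by fairness, and every term with $m\geq 3$ equals $\Pr(m)$ times a strictly positive quantity; since $\Pr(m\geq 3)>0$, at least one such term is strictly positive and none is negative, so $\lambda_j(x)>\lambda_k(x)$. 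This is exactly the conclusion of \cref{prop:ex1}, so \cref{cl:Bayes} follows.

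The argument is essentially a linearity-of-expectation computation, so there is no serious obstacle in this step. The one point deserving care is that mixing a \emph{neutral} first step (for $m\leq 2$) with a \emph{rarity-favouring} one (for $m\geq 3$) preserves the \emph{strict} inequality: this works precisely because the $m\leq 2$ contributions to the difference $\lambda_j-\lambda_k$ are exactly $0$, not merely nonnegative, so the strictness coming from the $m\geq 3$ part—present with positive probability—is not diluted away. The real work lies in the deterministic case itself, treated next: establishing the factorization $p_j^{(m)}=x_j\lambda_j^{(m)}$ and, for $m\geq 3$, the strict ordering $x_j<x_k\Rightarrow\lambda_j^{(m)}(x)>\lambda_k^{(m)}(x)$ (which is where the combinatorics of "make a list, then pick uniformly from the list" must be computed, e.g.\ via the inclusion of strategy $j$ in the list being governed by $1-(1-x_j)^m$ rather than by $x_j$).
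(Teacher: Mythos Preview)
Your proposal is correct and is precisely the ``simple computation'' the paper leaves to the reader: condition on $m$, write $\lambda_j$ as the finite mixture $\sum_m \Pr(m)\lambda_j^{(m)}$, and observe that the $m\le 2$ terms contribute zero to $\lambda_j-\lambda_k$ while the $m\ge 3$ terms contribute something strictly positive with positive total weight. There is no meaningful difference in approach.
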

This is a simple computation, which is left to the reader.
\begin{claim} The first step is fair for $m=1$ or  $m=2$\end{claim}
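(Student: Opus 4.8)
The plan is to invoke \cref{cl:Bayes} to reduce to the case where the number $m$ of agents met is a fixed (deterministic) integer, and then dispatch $m=1$ and $m=2$ by direct computation. For $m=1$ there is nothing to prove: the revising agent meets a single agent, who plays strategy $j$ with probability $x_j$ in the infinite-population limit, the resulting list is $\{j\}$, and $j$ is then selected with probability $1$, so $p_j=x_j$.

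For $m=2$, I would condition on whether the two agents met play the same strategy or two distinct ones, using that in the infinite-population limit the two meetings are independent draws, each agent playing strategy $k$ with probability $x_k$. With probability $x_j^2$ both agents play $j$; the list is then $\{j\}$ and $j$ is picked with probability $1$. For each $k\neq j$, with probability $2x_jx_k$ one agent plays $j$ and the other plays $k$; the list is then $\{j,k\}$ and $j$ is picked with probability $1/2$. No other configuration contributes to the event that $j$ is selected, so
\[
p_j = x_j^2 + \sum_{k\neq j} 2x_jx_k\cdot\tfrac12 = x_j^2 + x_j\sum_{k\neq j}x_k = x_j^2 + x_j(1-x_j) = x_j ,
\]
where the last step uses $\sum_{k\in I}x_k=1$. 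As a sanity check, $\sum_j p_j = \sum_j x_j = 1$. This establishes fairness, i.e.\ $\lambda_j\equiv 1$, for $m=2$, and hence (with the $m=1$ case) the claim.

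The argument is entirely routine; the only point that needs care is the bookkeeping of the ``list'': it is the \emph{set} of distinct strategies met, and a uniform pick from it assigns probability $1/|\text{list}|$ to each of its elements. I would flag this as the (mild) obstacle rather than a genuine difficulty — and it is exactly the place where the mechanism breaks for $m\geq 3$, since a triple of distinct strategies then occurs with positive probability and is picked with probability $1/3$ each, so the cancellation that produced $p_j=x_j$ above no longer occurs, in agreement with the rest of the proof of \cref{prop:ex1}.
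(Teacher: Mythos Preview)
Your proof is correct. One small organizational point: invoking \cref{cl:Bayes} is unnecessary here, since this claim is already stated as a sub-step \emph{after} \cref{cl:Bayes} has performed the reduction to deterministic $m$; the claim is thus to be read under the standing assumption that $m$ is a fixed integer.

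On the $m=2$ case, the paper takes a slightly different (and slicker) route than your direct computation: it observes that, for two agents, picking uniformly from the set of distinct strategies met is \emph{the same} as picking one of the two agents uniformly at random (if both agents play the same strategy, both procedures select that strategy with probability $1$; if they differ, both select each strategy with probability $1/2$). Since picking one of two i.i.d.\ uniform draws uniformly at random is itself a single uniform draw, $p_j=x_j$ follows immediately. Your explicit summation and the paper's two-stage reduction are of course equivalent, and both are short; the paper's argument has the advantage of making transparent why the cancellation is special to $m\le 2$, while yours has the advantage of being a template that one could in principle extend to compute $p_j$ for larger $m$.
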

\begin{proof} This is obvious for $m=1$.
For $m=2$, this is because the selection steps boils down to selecting an agent uniformly at random, just breaking down the process in two stages:  first select two agents uniformly at random, then
among these two, select one of them, again uniformly.
\end{proof}

\begin{claim} For any fixed $m \geq 3$, the first step advantages rare strategies.
\end{claim}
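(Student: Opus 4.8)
The plan is to give a probabilistic model of step~1, reduce the statement to a single scalar inequality, and then establish that inequality by an exact conditional computation followed by one binomial identity.

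By the previous claim it suffices to treat a deterministic $m \geq 3$. Model step~1 as $m$ i.i.d. draws $D_1,\dots,D_m$ from the distribution $x$, and let $S=\{D_1,\dots,D_m\}$ be the (random) set of strategies met, so that $p_j=\mathbb{E}\big[\mathbf{1}\{j\in S\}/|S|\big]$; this clearly vanishes when $x_j=0$, so on $\mathrm{int}(X)$ we may put $\lambda_j(x)=p_j(x)/x_j$. Since here the selection probability does not depend on the revising agent's strategy ($\lambda_{ij}=\lambda_j$), unfolding the definitions of ``rare strategies are (strictly) more often selected'' and of ``favouring rare strategies'' shows that the claim is equivalent to the statement: for every interior $x$ and every $j\neq k$ with $x_j<x_k$ one has $x_k\,p_j(x) > x_j\,p_k(x)$. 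I fix such an $x$, $j$, $k$.

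Next I split the $m$ draws according to whether they fall in $\{j,k\}$. Put $r=x_j+x_k$, $\alpha=x_j/r\in(0,1/2)$ and $\bar\alpha=x_k/r=1-\alpha$; let $b$ be the number of draws in $\{j,k\}$ (so $b\sim\mathrm{Bin}(m,r)$), $c$ the number equal to $j$ (so $c\mid b\sim\mathrm{Bin}(b,\alpha)$), and $S_0$ the set of distinct strategies among the remaining $m-b$ draws, which given $b$ is independent of $c$; write $s=|S_0|$. On $\{b\geq 1\}$ one has $j\in S\iff c\geq 1$, $k\in S\iff c\leq b-1$, and $|S|=s+\mathbf{1}\{c\geq 1\}+\mathbf{1}\{c\leq b-1\}$, so a one-line computation gives $\mathbb{E}\big[\mathbf{1}\{j\in S\}/|S|\mid b,S_0\big] = (1-\alpha^b-\bar\alpha^b)/(s+2) + \alpha^b/(s+1)$, and the $k$-analogue with $\alpha$ and $\bar\alpha$ swapped. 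Multiplying by $x_k$, resp.\ $x_j$, subtracting, and taking expectations (the event $\{b=0\}$ contributes nothing) yields
\[
x_k p_j - x_j p_k = \mathbb{E}\big[C_b(|S_0|)\big], \qquad C_b(s) = r\Big[(\bar\alpha-\alpha)\tfrac{1-\alpha^b-\bar\alpha^b}{s+2} + \alpha\bar\alpha\,\tfrac{\alpha^{b-1}-\bar\alpha^{b-1}}{s+1}\Big],
\]
with $C_0\equiv C_1\equiv 0$.

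The crux is the sign of $C_b(s)$, which is not obvious: it is a difference of a nonnegative term (the ``both $j$ and $k$ met'' contribution, which favours rarity) and a nonpositive one (the comparison of the $c=b$ and $c=0$ events, which goes the other way), so I must show the favourable term wins. Since $\tfrac1{s+2}\geq\tfrac1{2(s+1)}$ and $(\bar\alpha-\alpha)(1-\alpha^b-\bar\alpha^b)\geq 0$, it suffices to prove $2\alpha\bar\alpha(\bar\alpha^{b-1}-\alpha^{b-1})\leq(\bar\alpha-\alpha)(1-\alpha^b-\bar\alpha^b)$. Using $\alpha+\bar\alpha=1$ to write $1-\alpha^b-\bar\alpha^b=\sum_{i=1}^{b-1}\binom{b}{i}\alpha^i\bar\alpha^{b-i}$ and $\bar\alpha^{b-1}-\alpha^{b-1}=(\bar\alpha-\alpha)\sum_{i=0}^{b-2}\alpha^i\bar\alpha^{b-2-i}$, the difference of the two sides equals $(\bar\alpha-\alpha)\sum_{i=1}^{b-1}\big(\binom{b}{i}-2\big)\alpha^i\bar\alpha^{b-i}\geq 0$, because $\binom{b}{i}\geq b\geq 2$ for $1\leq i\leq b-1$; hence $C_b(s)\geq 0$ for all $b,s$. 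For $b\geq 3$ the $i=2$ term has $\binom{b}{2}-2\geq 1$ and $\alpha^2\bar\alpha^{b-2}>0$, so the inequality is strict and $C_b(s)>0$ for every $s\geq 0$. Finally, since $m\geq 3$ and $r>0$, $\mathbb{P}[b\geq 3]\geq r^m>0$; combined with $C_b(|S_0|)\geq 0$ always, this gives $\mathbb{E}\big[C_b(|S_0|)\big]>0$, i.e.\ $x_k p_j>x_j p_k$, which is what was wanted (the boundary locus $\{x_j=0\}$ is vacuous, or is handled by continuity of the polynomial $\lambda_j$). The single non-routine point is this last sign analysis, and within it the reduction to the bound $\binom{b}{i}\geq 2$ is the heart of the matter; everything else is bookkeeping.
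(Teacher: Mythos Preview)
Your proof is correct. Both you and the paper condition on exactly the same data\textemdash the number $b=\tilde m$ of draws landing in $\{j,k\}$ together with the number $s=q$ of distinct strategies among the remaining draws\textemdash and obtain the same conditional selection probabilities; the two arguments part ways only in how the sign of the resulting conditional difference is established. The paper clears denominators by $(q+1)(q+2)$ and uses the telescoping identity $1-y_j^{\tilde m}=y_i\sum_{r=0}^{\tilde m-1}y_j^r$ to write the difference as a sum $T_1+T_2$ of two \emph{manifestly} nonnegative terms, then observes that for every $\tilde m\geq 2$ at least one of them is strictly positive (since $\tilde m=2$ forces $l\geq 1$, hence $q\geq 1$, when $m\geq 3$). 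You instead give up some sharpness via the crude bound $\tfrac{1}{s+2}\geq\tfrac{1}{2(s+1)}$, reducing to an inequality independent of $s$, and then settle that inequality in one line with the binomial expansion and $\binom{b}{i}\geq 2$. This yields strict positivity only for $b\geq 3$, but that still suffices since $\mathbb{P}[b\geq 3]\geq r^m>0$. The paper's route is a little tighter (strict positivity on every event with $\tilde m\geq 2$), while yours isolates the combinatorial core in a particularly transparent way; either path closes the argument.
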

\begin{proof} We divide the proof in four steps.

\vspace*{4pt}\noindent\textbf{Step 1.}
 Fix $m \geq 3$.
Let $0 \leq q \leq l \leq m$.
Let $E_{l, q}$ denote the event: among the $m$ agents met, $l$ play other strategies than $i$ or $j$ (so $\tilde{m}=m-l$ play $i$ or $j$) and these $l$ agents play $q$ different strategies.\footnote{Example: if $m= 5$, $i=1$, $j=4$, and the agents drawn are: one of type 1, two of type 2, two of type 3, then $l=4$ and $q=2$.} Then
\[\frac{p_i(x)}{x_i}= \sum_{ (q,l): 0 \leq q \leq l \leq m} P(E_{l,q}) \frac{P(i  | E_{l, q})}{x_i}\]

\vspace*{4pt}\noindent\textbf{Step 2.}  Now let $y_i=\frac{x_i}{x_i + x_j}$ and $y_j= 1-y_i$.
Condition on the event $E_{l,q}$.
If $l=m$, that is, if all $m$ agents met play strategies other than $i$ or $j$, then $P(i | E_{l,q})=0$.
Otherwise, each of the $\tilde{m}=m-l$ players playing $i$ or $j$ is of type $i$ with probability $y$ and the draws are independent.
So:

a) with probability $y_i^{\tilde{m}}$, all of these $\tilde{m}$ players are of type $i$; so there are exactly $q+1$ strategies encountered, including $i$ but excluding $j$.
Thus, $i$ is selected with probability $1/(q+1)$, and $j$ with probability $0$.

b) symmetrically, with probability $y_j^{\tilde{m}}$, all of the $\tilde{m}$ players are of type $j$, hence $i$ is selected with probability $0$  and $j$ with probability $1/(q+1)$

c) finally, with the remaining probability $1- y_i^{\tilde{m}} - y_j^{\tilde{m}}$, there are both players of type $i$ and players of type $j$ among these $\tilde{m}$ players, and both strategies are selected with probability $1/(q+2)$.

Summing up, if $l < m$, then
\begin{equation}
\label{eq:Elq}
P(i | E_{l, q})= \frac{1}{q+1} y_i^{\tilde{m}} + \frac{1}{q+2} \left(1- y_i^{\tilde{m}} - y_j^{\tilde{m}} \right)
\end{equation}

\vspace*{4pt}\noindent\textbf{Step 3.}
 Assume $m \geq 3$, $l \leq m-2$ (so $\tilde{m} \geq 2$), and $0 < x_i < x_j$.
Then \[\frac{P(i | E_{l,q})}{x_i} > \frac{P(j | E_{l, q})}{x_j}.\]

Let $A_i= (q+1)(q+2)P(i | E_{l,q}) / y_i$ and define $A_j$ similarly.
It suffices to show that $A_i > A_j$.
By \eqref{eq:Elq}:
\[y_i A_i=   (q+2)y_i^{\tilde{m}} + (q+1) (1- y_i^{\tilde{m}} - y_j^{\tilde{m}} )=  y_i^{\tilde{m}} + (q+1) (1- y_j^{\tilde{m}})\]
Noting that $\displaystyle 1- y_j^{\tilde{m}}= (1-y_j) \sum_{r=0}^{\tilde{m}-1} y_j^r = y_i  \sum_{r=0}^{\tilde{m}-1} y_j^r $ and dividing by $y_i$ we obtain:
\[A_i=   y_i^{\tilde{m}-1} + (q+1) \sum_{r=0}^{\tilde{m}-1} y_j^r= y_i^{\tilde{m}-1}  + (q+1) y_j^{\tilde{m}- 1} +  \sum_{r=0}^{\tilde{m}-2} y_j^r \]
and similarly for $A_j$.
It follows that  $A_i - A_j = T_1 + T_2$ with \[T_1= q (y_j^{\tilde{m}-1}  - y_i^{\tilde{m}- 1}) \mbox{ and } T_2 =  \sum_{r=0}^{\tilde{m}-2} (y_j^r - y_i^r).\] The term $T_1$ is always nonnegative and it is positive if $q \geq 1$, that is if  $l \geq 1$.
This is the case in particular if $l=m-2$ since $m \geq 3$.
The term $T_2$ is always nonnegative, and it is positive if $\tilde{m} \geq  3$, that is if $l \leq m-3$.
Since we assumed $l \leq m-2$, at least one of the terms $T_1$ and $T_2$ is positive.
Therefore, $T_1+ T_2 > 0$ and $A_i > A_j$.

\vspace*{4pt}\noindent\textbf{Step 4.}
 Assume $m \geq 3$ and $0 < x_i < x_j$.
Then $p_i/x_i > p_j/x_j$.

Indeed, it is easily seen that if $l=m$ or $l=m-1$, then $P(i | E_{l, q})/x_i= P(j | E_{l, q})/x_j$ (equal to $0$ if $l=m$, and to  $1/ [(x_i + x_j)(q+1)]$ if $l=m-1$).
Moreover, we just saw that if $l \leq m-2$, which happens with positive probability, then $P(i | E_{l, q})/x_i > P(j | E_{l, q})/x_j$.
Since
\[\frac{p_i}{x_i} = \sum_{0 \leq q \leq l \leq m} P(E_{l, q}) \frac{P(i | E_{l, q})}{x_i}.
\]
the result follows.
\end{proof}

\subsection{The majoritarian choice: Proof of \cref{prop:ex2}}

It suffices to show that if $m$ is deterministic, then step 1 is fair for $m=1$ or $m=2$, and advantages frequent strategies  for any $m \geq 3$.
The proof that step 1 is fair for $m=1$ or $m=2$ is as in \cref{prop:ex1}.
We now prove that for  $m \geq 3$, the first step favours frequent strategies.

Assume $x_i > x_j >0$ and let $y_i= x_i/(x_i + x_j)$ and $y_j =1-y_i$.
Consider a revising agent meeting $m\geq 3$ other agents.

\vspace*{4pt}\noindent\textbf{Case 1.}  Conditionally to the fact that only agents playing strategies $i$ and $j$ are met
(in a slight abuse of notation, we keep writing $p_i$ for the probability that $i$ is selected, without making clear in the notation that this is conditional on the fact that only agents playing $i$ or $j$ are met).

\vspace*{4pt}\noindent\textbf{Subcase 1.1
 (m odd, $m \geq 3$).}
  If $m= 2m'+1$, the probability that $i$ is selected is:
\[\frac{p_i}{y_i} = \frac{1}{y_i} \sum_{k= m'+1}^{m} \left(\begin{array}{c}  m' \\ k  \end{array}\right) y_i^k y_j^{m-k}= \sum_{k= m'+1}^{m} \left(\begin{array}{c}  m \\ k  \end{array}\right) y_i^{k-1} y_j^{m-k}\]
Similarly,
\[\frac{p_j}{y_j} =  \sum_{k= m'+1}^{m} \left(\begin{array}{c}  m \\ k  \end{array}\right) y_j^{k-1} y_i^{m-k}\]

Since for any $k \geq m'+1$, we have $k-1 \geq m' \geq m - (m'+1) \geq m-k$, it follows that the first expression is term by term greater than the second one, and strictly greater for all terms $k > m'+1$.
Such terms exists because $m= 2m'+1 \geq 3$ implies $m > m'+1$.
It follows that $p_i/y_i > p_j/y_j$.

\vspace*{4pt}\noindent\textbf{Subcase 1.2 (m even, $m\geq 4$).} If $m=2m'$, then there may be a tie, if both strategies are met $m'$ times, in which case they are selected with probability $1/2$.
Thus we get:
\begin{equation}
\label{eq:app1prot2}  \frac{p_i}{y_i} = \frac{1}{2} \left(\begin{array}{c}  m \\ m'  \end{array}\right) y_i^{m'-1} y_j^{m'}
+ \sum_{k= m'+1}^{m} \left(\begin{array}{c}  m \\ k  \end{array}\right) y_i^{k-1} y_j^{m-k}
\end{equation}
Note that if $k \geq m'+1$, then
\[y_i^{k-1}y_j^{m-k} \geq y_i^{m'} y_j^{m- (m'+1)}= y_i^{m'} y_j^{m'-1}.\]
Moreover, the inequality is strict for any $k \geq m'+2$, in particular for $k=m$, since we assumed $m=2m' \geq 4$.
Thus, factorizing by $y_i^{m'-1}y_j^{m'-1}$, we obtain:
\[\frac{p_i}{y_i} > y_i^{m'-1}y_j^{m'-1} \left[  \frac{1}{2} \left(\begin{array}{c}  m \\ m'  \end{array}\right) y_j
+ \sum_{k= m'+1}^{m} \left(\begin{array}{c}  m \\ k  \end{array}\right) y_i \right] \]

A similar (but reverse) inequality holds for $p_j/y_j$.
Using both inequalities, we obtain:
\[\frac{p_i}{y_i} - \frac{p_j}{y_j} > y_i^{m'-1}y_j^{m'-1} (y_i - y_j) \left[\sum_{k= m'+1}^{m} \left(\begin{array}{c}  m \\ k  \end{array}\right)  - \frac{1}{2} \left(\begin{array}{c}  m \\ m'  \end{array}\right) \right]\]

We let the reader check that the first term in the summation suffices to show that the bracket is nonnegative, so that $p_i/y_i > p_j/y_j$.

\vspace*{4pt}\noindent\textbf{Case 2.} Now consider the general case.
Out of the $m$ players met, let $m_k$ denote the number of players playing strategy $k$.
Let $E(l, b, q)$ denote the event: out of the $m$ players met, $l= \sum_{k \notin \{i, j\}} m_k$ play strategies different from $i$ and $j$, $b= \max_{k \notin \{i,j\}} m_k$ is the highest number of occurence of a strategy different from $i$ and $j$, and there are $q$ strategies $k \notin \{i, j\}$ such that $m_k=b$. Condition on this event.
Again, we write $p_i$ instead of $P(i |E(l, b, q))$.

We dealt with the case $l=0$ in Case 1, so we may assume $l \geq 1$ hence $b \geq 1$ and $q \geq 1$.
Let $\tilde{m} = m - l$ be the number of agents met playing $i$ or $j$.

\noindent\textbf{Subcase 2.1. $b > \tilde{m}$.} Then $i$ and $j$ cannot be selected, hence $p_i=p_j=0$.

\noindent\textbf{Subcase 2.2. $\tilde{m} \geq 2b+1$.} Then one of the strategies $i$ and $j$ will win for sure.
Moreover, $\tilde{m}  \geq 3$, and the proof is as in Case 1, replacing $m$ with $\tilde{m}$.

\vspace*{4pt}\noindent\textbf{{Subcase 2.3. $\tilde{m}= 2b$.}} This is similar to Subcase 1.2., replacing $m$ with $\tilde{m}$, with the twist that if $m_i=m_j=b$, the strategies $i$ and $j$ are not selected with probability $1/2$, but $1/(q+2)$. The factor $1/2$ in Eq. \eqref{eq:app1prot2} thus becomes $1/(q+2)$. Since $q \geq 1$, it is then easy to check that $p_i/y_i > p_j/y_j$ even if $\tilde{m}=2$ (while we had to require $m \geq 4$ in Subcase 1.2).

\vspace*{4pt}\noindent\textbf{{Subcase 2.4. $b \leq \tilde{m} \leq 2b - 1$.}} This case is similar to Subcase 1.1.
We get:
\[\frac{p_i}{y_i} = \frac{1}{q+1} \left(\begin{array}{c}  \tilde{m}  \\ b  \end{array}\right) y_i^{b-1} y_j^{\tilde{m} - b}
+ \sum_{k= b+1}^{\tilde{m}} \left(\begin{array}{c}  \tilde{m} \\ k  \end{array}\right) y_i^{k-1} y_j^{\tilde{m}-k}\]
and a symmetric expression for $p_j/y_j$.
Because $\tilde{m} \leq 2b - 1 \Rightarrow b-1 \geq \tilde{m} - b$,
it follows that the expression for $p_i/y_i$ is term by term greater than the expression for $p_j/y_j$, with a strict inequality for the term $k=\tilde{m}$, unless $\tilde{m}=1$.
It follows that if $\tilde{m}=1$, $p_i/y_i= p_j/y_j$, and if $\tilde{m} > 1$, then $p_i/y_i > p_j/y_j$.

\emph{To conclude:} for any $l$, $b$, $q$, $P(i | E(l,b,q))/y_i \geq P(j | E(l, b, q))/y_j$, with a strict inequality in some cases occurring with positive probability.
Since $$p_i/y_i= \sum_{l, b, q} P(E(l, b, q)) P(i | E(l,b,q))/y_i,$$
 it follows that $p_i/y_i > p_j/y_j$.

\section{Imitation protocols not of the form \eqref{eq:gen2step}}
\label{app:moreprot}

We note here that our results would also apply to protocols that cannot be neatly separated in two steps in the sense of Eq. \eqref{eq:gen2step}. Reconsider \cref{ex:list} from \cref{sec:ImProc}, where a revising agent meets several other agents and makes a list of the strategies they play. We assumed then that he would investigate just one of these strategies. Instead, the revising agent could obtain information on the payoffs of all those strategies.
This makes sense if getting information on payoffs of strategies met is cheap.
In our concrete example, after meeting strategies $1$, $2$, $3$, the revising agent would obtain information on the payoffs $F_1$, $F_2$, $F_3$, and adopt one of these strategies with a probability that depends on all these payoffs, and possibly his own.
For instance, he could adopt strategy $j \in \{1, 2, 3\}$ with probability $f(F_j)/ (1 + \sum_{k=1, 2, 3} f(F_k))$ with $f$ positive increasing, or with probability $[F_j - F_i]_+/(1 + \sum_{k=1, 2, 3} [F_k - F_i]_+)$.
Such protocols cannot easily be put in the form \eqref{eq:gen2step}.
Nevertheless, the resulting dynamics still favour rare strategies in the sense that when two strategies have the same payoff, the rarest one has a higher per-capita growth rate; thus, as long as the switching rates $\rho_{ij}$ are regular enough in $(F, x)$, versions of our results would apply. However, our results do not apply to \emph{discontinuous} imitative variants of the best-reply dynamics, such as imitating a best-reply to the current population state among the strategies met.

\section{Unilateral approach: Simple examples for comparison based imitation processes}
\label{app:unilateral}

In this section, we adopt a unilateral approach, in the spirit of (Viossat, 2015 \cite{11}).
That is, we study the evolution of behavior in a large population of players (the focal population, player 1) facing an unknown opponent (the environment, player 2), whose behavior we freely choose.
This allows to provide simple examples of survival of dominated strategies even for dynamics based on payoff comparison.

Specifically, let us denote by $G_{\ep}$ a $3 \times 2$ game where the payoffs in the focal population are as follows:
\begin{equation}
\label{eq:GameUni}
\begin{array}{cc}
  & \begin{array}{cc}
    L \hspace{0.2 cm} & \hspace{0.2 cm} R \\
    \end{array} \\
\begin{array}{c}
 1 \\
 2  \\
 3 \\
\end{array}
& \left(\begin{array}{cc}
         1       &       0        \\
 	0	& 	1	 \\
         -\ep &  1- \ep  \\
\end{array}\right)\\
\end{array}
\end{equation}

As before, $x_i(t)$ denotes the frequency of strategy $i \in \{1, 2,3 \}$ in the focal population.
We make the following assumptions:

(A1) For $i=1, 2, 3$, when the opponent plays $Y \in \{L, R\}$, then $$\dot x_i = x_i g^Y_i(x)$$ for some growth-rate function $g_i^Y : X \to \mathbb{R}$ that is Lipschitz continuous in $x$
and depends continuously on the parameter $\ep$ (here, $X$ denotes the simplex of possible population states for the focal population).

(A2) When $\ep=0$, if $x_1 \notin \{0, 1\}$,
then $g_1^L(x) >0$ and $g_1^R(x) < 0$.\\
We also assume that at least one of the conditions (A3), (A3') below holds:

(A3)  When $\ep=0$, if $x_3 < x_2$, then $g_3^L (x) \geq g_2^L(x)$ and  $g_3^R (x) > g_2^R(x)$\\
or

(A3') When $\ep=0$, if $x_3 < x_2$, then $g_3^L (x) > g_2^L(x)$ and  $g_3^R (x) \geq g_2^R(x)$\\

Assumption (A1) is a regularity assumption.
Assumption (A2) is weaker than Positive Correlation.
Assumption (A3) or (A3')
 is a form of advantage to rare strategies.
These assumptions are satisfied, for instance, by any dynamics arising from a revision
 protocol of form \eqref{eq:gen2step} with $\lambda_{ij}$, $r_{ij}$ Lipschitz continuous in $x$ and continuous in $F$,
 $r_{ij}$ with the sign of $[F_j - F_i]_+$,
 and favouring rare strategies in the sense of \cref{def:adv}.

\begin{proposition}
\label{prop:app3}
Fix $\eta > 0$.
Let $\delta$, $x_{\min}$, $x_{\max}$ be real numbers such that $0 < \delta < x_{\min} < x_{\max} < 1- \delta$.
Let $K_{\delta}= \{x \in X | \min(x_1, 1-x_1) \geq \delta\}$.
 Assume that the opponent plays $L$ until the first time $\tau$
such that $x_1(\tau) \geq x_{\max}$, then plays $R$ for $t > \tau$ until $x_1 = x_{\min}$, then plays $L$ again until $x_1= x_{\max}$, etc.\footnote{The fact that the opponent plays a
discontinuous strategy simplifies the exposition but could be replaced by a similar behavior with smooth transitions. Due to this discontinuity, the frequencies $x_i(t)$ are only piecewise $C^1$,
but it may be shown that this creates no technical difficulty.}
Then there exists $\bar{\ep}>0$ such that for any $\ep \in [0, \bar{\ep}]$ and any initial condition
$x(0) \in K_{\delta} \cap \mathrm{int}(X)$, $\liminf x_3(t) > (1- x_{\max}) \left(\frac{1}{2} - \eta \right)$.
\end{proposition}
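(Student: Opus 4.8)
The plan is to treat first the twin case $\ep=0$, where strategies $2$ and $3$ have identical payoffs against either opponent action, and then transfer the conclusion to small $\ep$ by a continuity argument. Write $s(t)=x_2(t)+x_3(t)=1-x_1(t)$, and — since $x(0)\in\mathrm{int}(X)$ forces $x_2(t),x_3(t)>0$ for all $t$ — set $V(t)=x_3(t)/x_2(t)\in(0,\infty)$, so that $x_3=\frac{V}{1+V}\,s$ with $V\mapsto\frac{V}{1+V}$ increasing and equal to $1/2$ at $V=1$. The proposition follows once we show, for $\ep$ below a threshold $\bar\ep$ that depends only on $\eta,\delta,x_{\min},x_{\max}$ and the growth-rate functions (not on $x(0)$), that $\liminf_t s(t)\ge 1-x_{\max}$ and $\liminf_t V(t)\ge v^\ast$ for some $v^\ast<1$ that can be taken as close to $1$ as desired: then $\liminf_t x_3(t)\ge\frac{v^\ast}{1+v^\ast}(1-x_{\max})$ and $\frac{v^\ast}{1+v^\ast}\to\tfrac12$. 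The claim about $s$ is the easy part: by (A2) and the continuous dependence on $\ep$ in (A1), for small $\ep$ we have $g_1^L$ bounded below by a positive constant and $g_1^R$ bounded above by a negative constant on the compact set $K_\delta=\{x:x_1\in[\delta,1-\delta]\}$; hence $x_1$ rises to $x_{\max}$ in an $L$-phase and falls to $x_{\min}$ in an $R$-phase, each in uniformly bounded time, and — because the opponent switches exactly at $x_{\max}$ and $x_{\min}$ — after a bounded transient $x_1(t)$ oscillates in $[x_{\min},x_{\max}]$ with successive $L$- and $R$-phase durations lying in a fixed interval $[T_{\min},T_{\max}]\subset(0,\infty)$. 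In particular $x_1(t)\le x_{\max}$ for all large $t$, so $\liminf_t s(t)\ge 1-x_{\max}$.

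Step 2 ($\ep=0$). Here $2$ and $3$ are twins and $\dot V/V=g_3^Y-g_2^Y$. By (A3) or (A3') this is $\ge0$ whenever $V<1$, so $V$ is nondecreasing on $\{V<1\}$; consequently, once $V\ge v$ for some $v<1$ it stays $\ge v$. Moreover, one of the two phase types — the one carrying the strict inequality in (A3)/(A3') — has $g_3^Y-g_2^Y\ge c(v)>0$ on the compact set $\{x:x_1\in[x_{\min},x_{\max}],\ 0\le V\le v\}$; since that phase type recurs with duration $\ge T_{\min}$, $\log V$ gains a fixed positive amount per cycle as long as $V\le v$, so $V$ exceeds $v$ after finitely many cycles. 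Hence $\liminf_t V(t)\ge v$ for every $v<1$, i.e.\ $\liminf_t V(t)\ge1$ when $\ep=0$.

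Step 3 (perturbation to $\ep>0$). Fix $v^\ast<v^{\ast\ast}<1$. For $\ep>0$ strategy $3$ is strictly dominated and the monotonicity of $V$ is lost, but only by $O(\ep)$: with $\omega(\ep)=\sup_{i,Y,x}|g_i^Y(x;\ep)-g_i^Y(x;0)|\to0$, one has $\dot V/V\ge-\omega(\ep)$ on $\{V<1\}$, while on the drift phase restricted to $\{V\le v^{\ast\ast}\}$ one still has $\dot V/V\ge c(v^{\ast\ast})/2>0$ for $\ep$ small. Tracking the cycle-start values $V(\sigma_k)$: if $V(\sigma_k)\le v^\ast$ then either $V$ reaches $v^{\ast\ast}$ during the drift phase, so $\log V(\sigma_{k+1})\ge\log v^{\ast\ast}-2\omega(\ep)T_{\max}>\log v^\ast$, or it does not, so $\log V$ nets at least $c(v^{\ast\ast})T_{\min}/2-\omega(\ep)T_{\max}>0$ over the cycle; a similar accounting gives $V(\sigma_k)\ge v^\ast\Rightarrow V(\sigma_{k+1})\ge v^\ast$. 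Thus $\liminf_k V(\sigma_k)\ge v^\ast$. Within a cycle $V$ can dip only while $V<1$, where $\dot V/V\ge-\omega(\ep)$, so for large $t$ either $V(t)\ge v^\ast e^{-2\omega(\ep)T_{\max}}$ or $V(t)\ge1$ (in which case $x_3(t)\ge\tfrac12(1-x_{\max})$ outright). Combining with the bound on $s$, $\liminf_t x_3(t)\ge\frac{v^\ast e^{-2\omega(\ep)T_{\max}}}{1+v^\ast e^{-2\omega(\ep)T_{\max}}}(1-x_{\max})$, and taking $v^\ast$ close enough to $1$ and $\bar\ep$ small makes this exceed $(\tfrac12-\eta)(1-x_{\max})$; since all the smallness requirements on $\bar\ep$ come from fixed compact sets and fixed phase-duration bounds, $\bar\ep$ is uniform in $x(0)$.

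The main obstacle is Step 3: the clean monotonicity of $x_3/x_2$ holds only at $\ep=0$, and once strategy $3$ is genuinely strictly dominated there is a competing downward drift of order $\ep$; because the advantage-to-rarity drift acts in only one of the two phase types and, worse, decays as $x_3/x_2\to1$, one must argue over entire $L$–$R$ cycles that it strictly beats the perturbation — which is exactly why the attainable frequency degrades from $1/2$ to $1/2-\eta$ and why $v^\ast$ must be kept bounded away from $1$. A secondary, routine point is that the opponent's discontinuous rule makes trajectories only piecewise $C^1$; since phase durations are bounded below, the switching times are isolated and the flow, its invariance in $X$, and all the estimates above go through unchanged, as the statement's footnote anticipates.
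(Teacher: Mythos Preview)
Your proof is correct and takes essentially the same approach as the paper's: both establish the bounded-duration cycling of $x_1$ between $x_{\min}$ and $x_{\max}$ via (A1)--(A2), then control $\ln(x_3/x_2)$ over full cycles using that the strict inequality in (A3)/(A3') yields a uniformly positive drift in one phase type on a compact set bounded away from $x_3=x_2$, while the weak inequality limits the negative drift in the other phase to $O(\ep)$. The only difference is organizational---you first settle $\ep=0$ and then perturb, whereas the paper argues directly at $\ep>0$ by contradiction (assume $x_3/(x_2+x_3)$ stays below $(1-\eta)/2$ and show $\ln(x_3/x_2)\to\infty$)---which slightly streamlines the cycle bookkeeping you do in Step~3, but the mechanism is identical.
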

\begin{proof}
The intuition is that when $\ep=0$, the shares of strategies 2 and 3 tend to become equal. Thus, $\liminf x_3(t) = (1-\limsup x_1)/2= (1-x_{\max})/2$.
We then need to show that for a sufficiently small perturbation of payoffs, $\liminf x_3$ remains close to $(1-x_{\max})/2$. By contrast with \cref{th:hypno},
we do not deal with an autonomous system of differential equations, but with a controlled system. This is why the proof below does not rely on continuity of attractors but on a direct analysis.

To fix ideas, assume that $(A3)$ holds. The proof when $(A3')$ holds is similar.  Throughout, we assume that $x(0) \in K_{\delta} \cap \mathrm{int}(X)$.
By (A1),  (A2) and compactness of $K_{\delta}$, there exist positive real numbers $\bar{\ep}$, $\alpha_1$, $\alpha_2$
such that, for any $\ep$ in $[0, \bar{\ep}]$ and any $x \in K_{\delta}$, $\alpha_1 \leq \dot{x}_1 \leq \alpha_2$ when the
opponent plays $L$ and $- \alpha_2 \leq \dot{x}_1 \leq - \alpha_1$ when she plays $R$.
It follows that $x(t)$ eventually
enters the compact set $$K = \{x \in X, x_{\min} \leq x_1 \leq x_{\max}\},$$ and never leaves, oscillating between $x_{\min}$ and $x_{\max}$.
Moreover, the time to travel from the hyperplane $x_1 = x_{\min}$ to the hyperplane $x_1= x_{\max}$ (or back) is always
between $$T_{\min}= \frac{x_{\max}- x_{\min}}{\alpha_2} \text{ and } T_{\max}= \frac{x_{\max} - x_{\min}}{\alpha_1}.$$
Note that $\liminf(x_2 + x_3) = 1 - x_{\max}$.
Thus  if suffices to show that, possibly up to lowering $\bar{\ep}$,
$$\liminf \frac{x_3}{x_2 + x_3} \geq \frac{1}{2} - \eta.$$ We first show that $\limsup \frac{x_3}{x_2 + x_3}  \geq \frac{1-\eta}{2}$.

Assume by contradiction that this is not the case.
Then from some time $T$ on, $$x(t) \in \tilde{K} = \left\{ x \in K, \frac{x_3}{x_2 + x_3}  \leq \frac{1-\eta}{2} \right\}.$$
By (A1), (A3) and compactness of $\tilde{K}$, and up to lowering $\bar{\ep}$, we may assume that there exist positive real
numbers $\beta_1$ and $\beta_2(\ep)$ such that for any $x \in \tilde{K}$ and any $\ep \in [0, \bar{\ep}]$,
\begin{equation}
\label{eq:compgrowth}
 g_3^R(x) - g_2^R(x) \geq \beta_1 \text{ and } g_3^L(x) - g_2^L(x) \geq - \beta_2(\ep)
 \end{equation}
with $\beta_1$ independent of $\ep$ and $\beta_2(\ep) \to 0$ as $ \ep \to 0$.
Up to lowering $\bar{\ep}$ again, we may
assume that $$C:= \beta_1 T_{\min} - \beta_2(\ep) T_{\max} >0.$$ Now let $t_{2k}$ and $t_{2k+1}$ be  the $k^{th}$ time greater
than $T$ such that $x_1 = x_{\min}$ and $x_1 = x_{\max}$, respectively.
Note that
$\frac{d}{dt} \ln(x_3/x_2) = g_3^Y (x) - g_2^Y(x)$ when the opponent plays $Y$.
Integrating between $t_{2k}$ and $t_{2k+2}$ and using
\eqref{eq:compgrowth} we obtain that between $t_{2k}$ and $t_{2k+2}$, $\ln(x_3/x_2)$ increases by at least $C$.
Since $C>0$,
this implies that $x_3/x_2 \to +\infty$, a contradiction.
Therefore,
$$\limsup_{t \to +\infty} \frac{x_3}{x_2 + x_3}(t) \geq \frac{1-\eta}{2}.$$

Moreover, since $\beta_2(\ep) \to 0$ as $\ep \to 0$, up to lowering $\bar{\ep}$ again, we may assume that between $t_{2k}$ and $t_{2k+1}$,
$x_2/(x_2+ x_3)$ does not decrease by more than $\eta/2$.
It may be shown that this ensures that
$\liminf \frac{x_2}{x_2 + x_3}   \geq \frac{1-\eta}{2} - \frac{\eta}{2} = \frac{1}{2} - \eta$.
This concludes the proof.
\end{proof}
Note that for $x_{\max}$ and $\eta$ small enough, $\liminf x_3$ may be made arbitrarily close from $1/2$.
If we replace Assumptions (A3), (A3') by the same assumptions but when $x_3 < x_2$, thus giving an advantage to frequent strategies, then we obtain that for $\ep$ small enough and an open set of initial conditions, $\liminf x_3$ may be made arbitrarily close to $1$.

\begin{figure}[t]
  \centering
    \includegraphics[width=0.45 \textwidth]{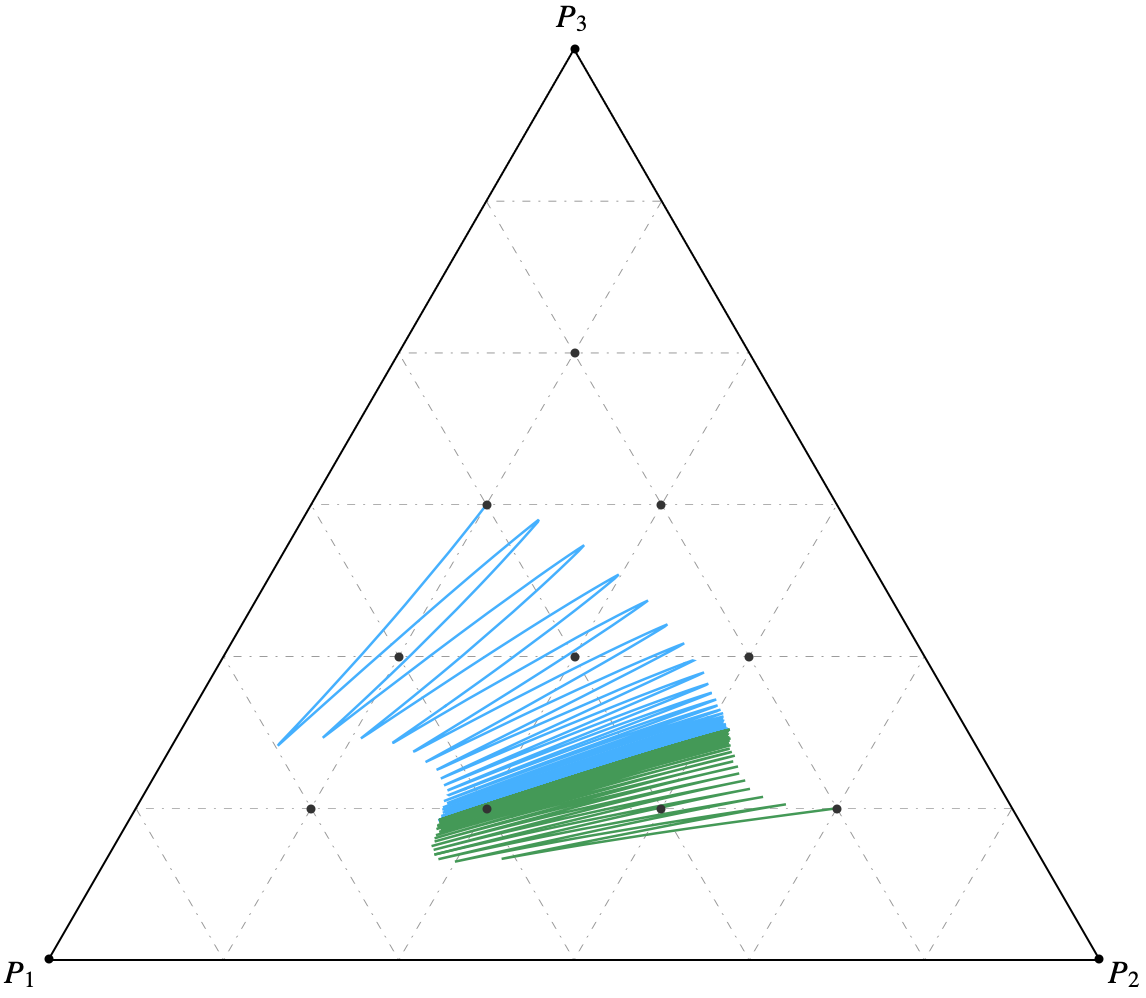}   \quad \includegraphics[width=0.45 \textwidth]{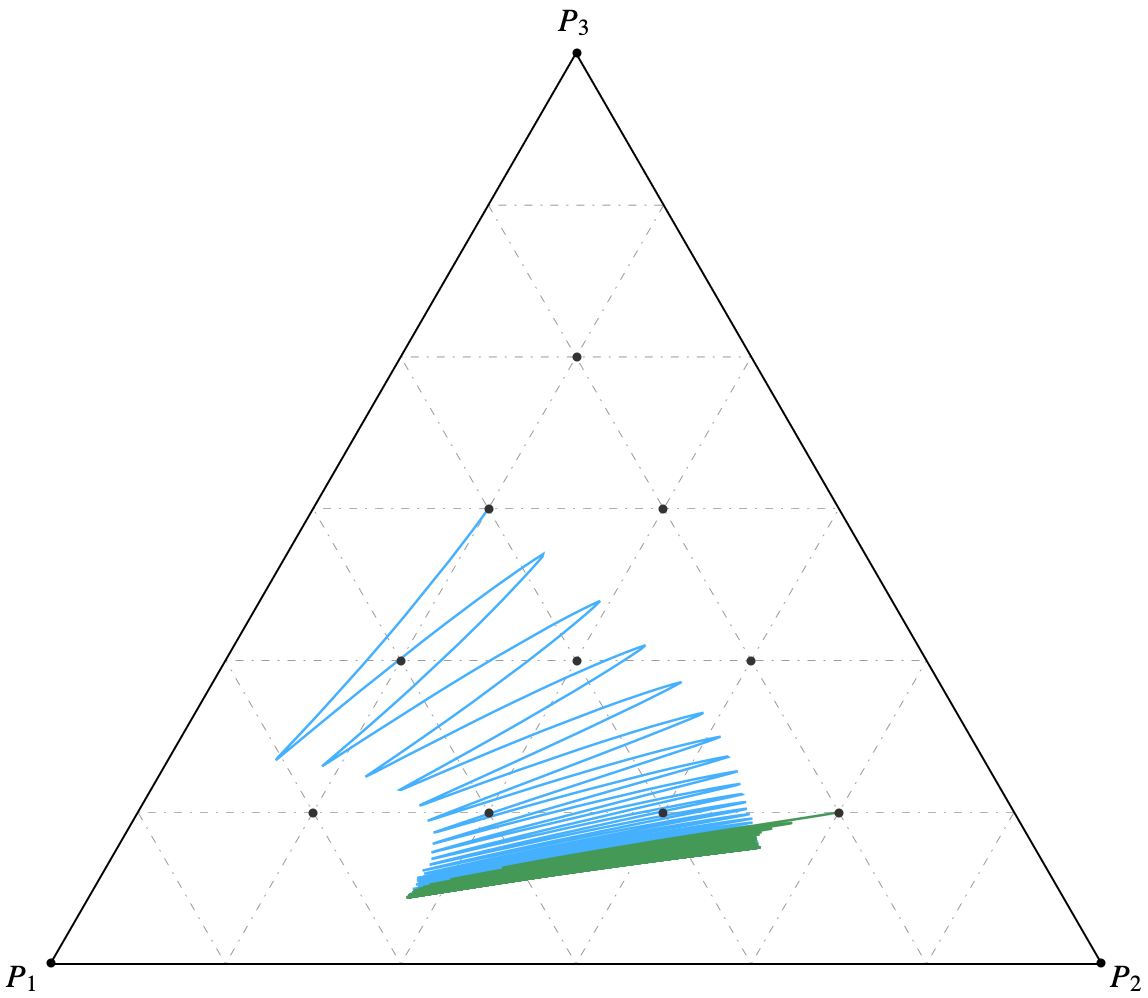}\\
    \bigskip
          \includegraphics[width=0.45 \textwidth]{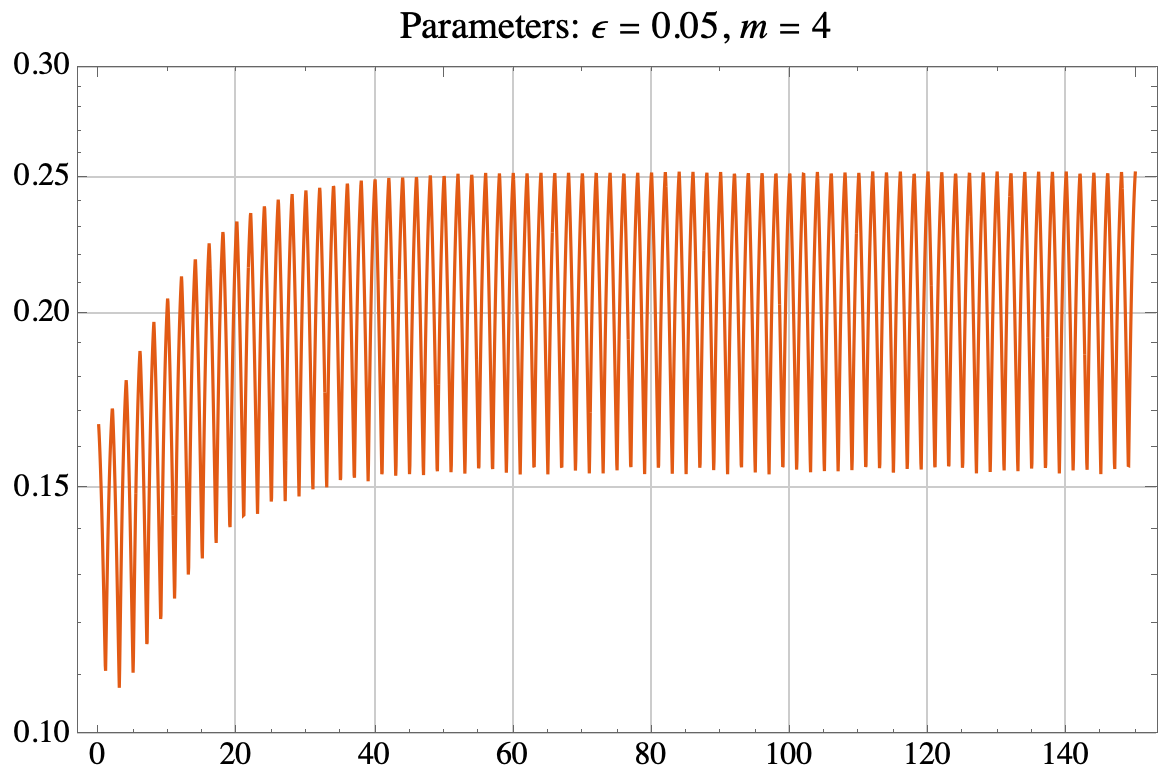} \quad  \includegraphics[width=0.45 \textwidth]{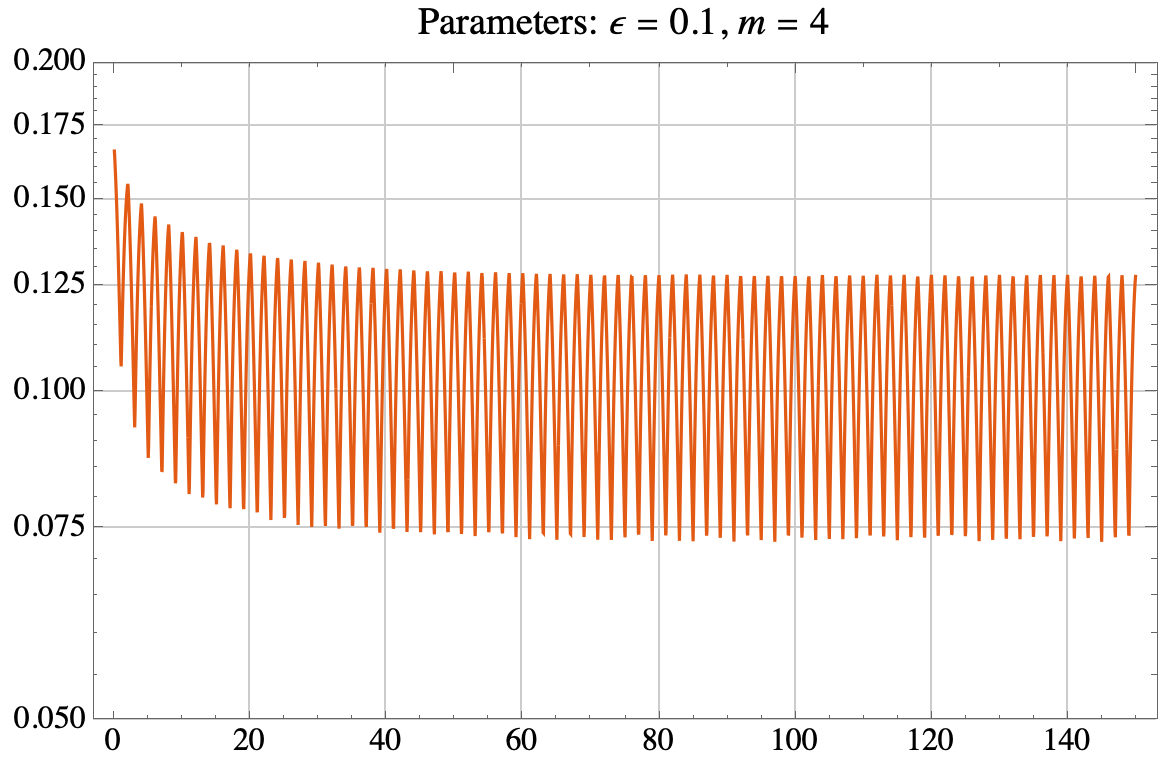}
\caption{\textbf{Imitation dynamics favouring rare strategies in Game \eqref{eq:GameUni} against an oscillating behavior of the opponent}. Top-panels: the solution $x(t)$. The point $P_i$ corresponds to the population state where everybody plays strategy $i$. Bottom-panels: frequency of the strictly dominated strategy (strategy 3). Left-column: $\ep = 0.05$. Right-column: $\ep= 0.1$. In blue and green, two solutions with respective initial conditions $(1/3, 1/6, 1/2)$ and $(1/6, 2/3, 1/6)$.
}
  \label{FigB}
\end{figure}

\cref{FigB} depicts imitation dynamics with payoffs in the focal population described by the payoff matrix \eqref{eq:GameUni} and a periodic behavior of Player 2 that smoothly approximates playing $L$ on time-intervals of the form $[2k, 2k+1)$ and $R$ on time-intervals of the form $[2k+1, 2k+2)$, where $k$ is an integer (at time $t$, Player 2 puts probability $y(t) = \frac{1+ \sin^{1/9} (\pi t)}{2}$ on strategy L). As in \cref{FigC}, the dynamics of the focal population are derived from a two-step protocol of form \eqref{eq:gen2step}, with a first step as in \cref{ex:other}
(trying to meet an agent playing another strategy),
with $m= 4$, and a second step based on payoff comparison $r_{ij} = [F_j - F_i]_+$. \cref{FigB} illustrates that survival of dominated strategies can also occur if the behavior of the opponent is smooth and independent of the current population state in the focal population. The average frequency of the dominated strategy is around $20\%$ with a domination margin of $\ep = 0.05$, and around $10\%$ with a domination margin of $\ep = 0.1$.

For an advantage to frequent strategies, survival of the dominated strategy in this example seems less robust:  if the behavior of the opponent oscillates in a way that is independent of the population state in the focal population, what happens in most simulations is that initially either strategy 1 or strategy 3 takes over, as deviations from an approximately equal share of these strategies get amplified by the advantage to frequent strategies. In the first case, the solution converges to the mixed strategy putting probability 1 on the first strategy. In the second case, strategy 1 gets extinct, and then, since the second step of the protocol is based on payoff comparison, strategy 2 drives strategy 3 extinct.

\section*{Acknowledgments}
\begingroup
\small
The first author is grateful for financial support by
the French National Research Agency (ANR) in the framework of
the ``Investissements d'avenir'' program (ANR-15-IDEX-02),
the LabEx PERSYVAL (ANR-11-LABX-0025-01),
MIAI@Grenoble Alpes (ANR-19-P3IA-0003),
and the bilateral ANR-NRF grant ALIAS (ANR-19-CE48-0018-01).
\endgroup



\begin{thebibliography}{99}
\bibitem{1}
\newblock E. Akin,
\newblock \doititle{Domination or equilibrium},
\newblock \emph{Math. Biosci.}, \textbf{50} (1980), 239-250.

\bibitem{2}
\newblock U. Berger and J. Hofbauer,
\newblock \doititle{Irrational behavior in the Brown-von
Neumann-Nash dynamics},
\newblock \emph{Games and Econ. Behav.}, \textbf{56} (2006), 1-6.

\bibitem{3}
\newblock R. Boyd and P. J. Richerson, 
\newblock \emph{Culture and the evolutionary process},
\newblock University of Chicago Press, Chicago:IL, 1988.

\bibitem{4}
\newblock J. Heinrich and R. Boyd,
\newblock \doititle{Why people punish defectors: weak conformist transmission can stabilize costly enforcement of norms in cooperative dilemmas},
\newblock \emph{J. Theor. Biol.}, \textbf{208} (2001), 79-89.

\bibitem{5}
\newblock J. Hofbauer and W. H. Sandholm,
\newblock \doititle{Survival of dominated strategies under evolutionary dynamics},
\newblock \emph{Theoretical Economics}, \textbf{6} (2011), 341-377.

\bibitem{6}
\newblock J. Hofbauer and J.W. Weibull,
\newblock \doititle{Evolutionary selection against dominated strategies},
\newblock \emph{J. Econ. Theory}, \textbf{71} (1996), 558--573.

\bibitem{7}
\newblock J. Nachbar,
\newblock \doititle{Evolutionary selection dynamics in games: convergence and limit properties},
\newblock \emph{Int. J. Game Theory}, \textbf{19} (1990), 59-89.

\bibitem{8}
\newblock J. Pe\~{n}a, H. Volken, E. Pestelacci and M. Tomassini,
\newblock \doititle{Conformity hinders the evolution of cooperation on scale-free networks},
\newblock \emph{Physical Review E}, \textbf{80} (2009), 016110.

\bibitem{9}
\newblock L. Samuelson and J. Zhang,
\newblock \doititle{Evolutionary stability in asymmetric games},
\newblock \emph{J. Econ. Theory}, \textbf{57} (1992), 363-391.

\bibitem{10}
\newblock Y. Viossat, 
\newblock \doititle{Evolutionary Dynamics and Dominated Strategies},
\newblock \emph{Econ. Theo. Bull.}, \textbf{3} (2015), 91-113.

\bibitem{11}
\newblock W.H. Sandholm, 
\newblock \emph{Population Games and Evolutionary Dynamics},
\newblock MIT Press, Cambridge:MA, 2010.

\bibitem{12}
\newblock J. W. Weibull,
\newblock \emph{Evolutionary Game Theory},
\newblock MIT Press, Cambridge:MA, 1995.
%
\end{thebibliography}
\end{document}